\definecolor{crimsonglory}{rgb}{0,0,0}
 \newtheorem{theorem}{Theorem}[section]
 \newtheorem{lemma}[theorem]{Lemma}
 \newtheorem{proposition}[theorem]{Proposition}
\def\GrabProofArgument[#1]{ #1: \egroup\ignorespaces}
\def\proof{\noindent\textbf\bgroup Proof%
	\@ifnextchar[{\GrabProofArgument}{. \egroup\ignorespaces}}
\newenvironment{theorem-repeat}[1]{\begin{trivlist}
		\item[\hspace{\labelsep}{\bf\noindent Theorem \ref{#1} (repeated).}]\em }%
	{\end{trivlist}}
\newcommand{\Saeed}[1]{{\color{blue}[Saeed: \textsf{#1}]}}
\newcommand{\Francois}[1]{{\color{red}[Francois: \textsf{#1}]}}
\newcommand{\ket}[1]{|#1\rangle}
\newcommand{\Tt}{\mathcal{T}}
\newcommand{\ceil}[1]{\left\lceil #1 \right\rceil}
\providecommand{\Aa}{\mathcal{A}}
\providecommand{\Cc}{\mathsf{C}}
\newcommand{\poly}{\mathrm{poly}}
\newcommand{\edit}{\mathrm{ud}}
\providecommand{\Cc}{\mathcal{C}}
\providecommand{\Estimator}[1]{\textsf{UlamIndic}(#1)}
\newcounter{proccnt}
\newcommand{\konote}[1]{}
\title{Quantum Meets Fine-grained Complexity:\\Sublinear Time Quantum Algorithms for String Problems}
\author{
	Fran{\c c}ois Le Gall\\Nagoya University
	\and Saeed Seddighin\thanks{Supported in part by an Adobe Research Award and a Google Research Gift.}\\
      Toyota Technological Institute at Chicago
}
\begin{document}
	\newcommand{\ignore}[1]{}
\renewcommand{\theenumi}{(\roman{enumi}).}
\renewcommand{\labelenumi}{\theenumi}
\sloppy
\date{}

\maketitle

\thispagestyle{empty}
\begin{abstract}
\setcounter{page}{0}
Longest common substring (\textsf{LCS}), longest palindrome substring (\textsf{LPS}), and Ulam distance (\textsf{UL}) are three fundamental string problems that can be classically solved in near linear time. In this work, we present sublinear time quantum algorithms for these problems along with quantum lower bounds. Our results shed light on a very surprising fact: Although the classic solutions for \textsf{LCS} and \textsf{LPS} are almost identical (via suffix trees), their quantum computational complexities are different. While we give an exact\footnote{Throughout the paper, when we say an exact solution, we mean a solution that does not lose anything in the approximation. However, it may be possible that our algorithm succeeds with probability less than 1 in which case we explicitly mention that.} $\tilde O(\sqrt{n})$ time algorithm for \textsf{LPS}, we prove that \textsf{LCS} needs at least time $\tilde \Omega(n^{2/3})$ even for 0/1 strings.

\thispagestyle{empty}
\newpage
\end{abstract}
\section{Introduction}
Perhaps the earliest questions that were studied in computer science are the algorithmic aspects of string problems. \textit{The edit distance}, \textit{longest common substring}, and \textit{longest palindrome substring} are some of the more famous problems in this category. Efforts to solve these problems  led to the discovery of several fundamental techniques such as \textit{dynamic programming}, \textit{hashing algorithms}, and \textit{suffix trees}. These algorithms have numerous applications in several fields including DNA-sequencing, social media, compiler design, anti-virus softwares, etc.

All of the above problems have received significant attention in the classical setting (see, e.g., ~\cite{masek1980faster,bar2004approximating,batu2006oblivious,andoni2010polylogarithmic,boroujeni2018approximating,chakraborty2018approximating,farach1997optimal,amir2018longest,kociumaka2014sublinear}). Efficient classical algorithms for these problems emerged early on in the 1960s~\cite{cormen2009introduction}. Moreover, thanks to a series of recent developments in fine-grained complexity~\cite{backurs2015edit,abboud2015tight}, we now seem to have a clear understanding of the  classical lower bounds as well. Unless plausible conjectures such as \textsf{SETH}\footnote{The \textit{strong exponential time hypothesis} states that no algorithm can solve the satisfiability problem in time $2^{n(1-\epsilon)}$.} are broken, we do not hope for a substantially better algorithm for edit distance. Failure to extend the fine-grained lower bounds to the quantum setting has left some very interesting open questions behind both in terms of quantum complexity and quantum lower bounds. 

Despite a plethora of new quantum algorithms for various problems (e.g., \cite{ambainis2007quantum,buhrman2006quantum,GroverSTOC96,HHL06,magniez2007quantum,Shor97}), not much attention is given to string problems. Until recently, the only non-trivial quantum algorithm for such problems  was the $\tilde O(\sqrt{n}+\sqrt{m})$ time algorithm of Ramesh and Vinay~\cite{ramesh2003string} for pattern matching where $n$ and $m$ are the sizes of the text and the pattern (we also mention the works~\cite{Ambainis+14,Cleve+12} that consider string problems with nonstandard queries in the quantum setting). Recently, Boroujeni, Ehsani, Ghodsi, Hajiaghayi, and Seddighin~\cite{boroujeni2018approximating} made a clever use of the Grover's search algorithm~\cite{kaye2007introduction} to obtain a constant factor approximation quantum algorithm for edit distance in truly subquadratic time. Shortly after, it was shown by Chakraborty, Das, Goldenberg, Koucky, and Saks~\cite{chakraborty2018approximating}
that a similar technique can be used to obtain a classical solution with the same approximation factor. Several improved classical algorithms have been given for edit distance in recent years~\cite{andoni2020edit,koucky2020constant,brakensiek2020constant} though it is still an open question if a non-trivial quantum algorithm can go beyond what we can do classically for edit distance. 

In this work, we give novel sublinear time quantum algorithms and quantum lower bounds for \textsf{LCS}, \textsf{LPS}, and a special case of edit distance namely \textit{Ulam distance} (\textsf{UL}). All these problems require $\tilde \Omega(n)$ time in the classical setting even if approximate solutions are desired. \textsf{LCS} and \textsf{LPS} can  be solved in linear time via suffix trees~\cite{cormen2009introduction} and there is an $O(n \log n)$ time algorithm for Ulam distance~\cite{cormen2009introduction}. Our results shed light on a very surprising fact: Although the classical solutions for \textsf{LCS} and \textsf{LPS} are almost identical, their quantum computational complexities are different. While we give an exact $\tilde O(\sqrt{n})$ time quantum algorithm for \textsf{LPS}, we prove that any quantum algorithm for \textsf{LCS} needs at least time $\tilde \Omega(n^{2/3})$ even for 0/1 strings. We accompany this with several sublinear time quantum algorithm for \textsf{LCS}. A summary of our results is given in Tables~\ref{table:results} and~\ref{table:results2}. 

\renewcommand{\arraystretch}{1.2}
\begin{table}[!htbp]
	\centering
	\begin{tabular}{|l|c|c|c|}
		\hline
		problem & solution & runtime & reference\\
		\hline
		\textsf{longest common substring} & \textsf{exact} & $\tilde{O}(n^{5/6})$ &  Theorem \ref{theorem:lcs-exact}\\
		\hline
		\textsf{longest palindrome substring} & \textsf{exact} & $\tilde{O}(\sqrt{n})$ & Theorem \ref{theorem:lps}\\
	   \hline
		\textsf{longest common substring} & $1-\epsilon$ \textsf{approximation}  & $\tilde{O}(n^{3/4})$  & Theorem \ref{theorem:lcs-approx} \\		
	   \hline
		\textsf{longest common substring for non-repetitive strings} & \textsf{exact}  & $\tilde{O}(n^{3/4})$  & Theorem \ref{theorem:lcs-exact-perm} \\
		\hline
		\textsf{longest common substring for non-repetitive strings} & $1-\epsilon$ \textsf{approximation}  & $\tilde{O}(n^{2/3})$  & Theorem \ref{theorem:lcs-approx-perm} \\
		\hline
		\textsf{Ulam distance} & $1+\epsilon$ \textsf{approximation}  & $\tilde O(\sqrt{n})$ & Theorem \ref{theorem:ulam} \\
		\hline
	\end{tabular}
	\caption{Our algorithms are shown in this table.}\label{table:results}
\end{table}
\renewcommand{\arraystretch}{1.2}
\begin{table}[!htbp]
	\centering
	\begin{tabular}{|l|c|c|c|}
		\hline
		problem & approximation factor & lower bound & reference\\
		\hline
		\textsf{longest common substring} & $1-\epsilon$ \textsf{approximation} & $\tilde \Omega(n^{2/3})$ & Theorems \ref{theorem:lcs-approx-LB-easy} and \ref{theorem:lcs-approx-LB}\\
		\hline
		\textsf{longest palindrome substring} & $1-\epsilon$ \textsf{approximation} & $\tilde \Omega(\sqrt{n})$ & Theorem \ref{theorem:lps-LB} \\
		\hline
		\textsf{Ulam distance} & $1+\epsilon$ \textsf{approximation} & $\tilde{\Omega}(\sqrt{n})$ & Theorem \ref{theorem:ulam-approx-LB} \\
		\hline
	\end{tabular}
	\caption{This table includes the quantum lower bounds for the problems we study in this work.}\label{table:results2}
\end{table}

\subsection{Related work}
Our work is very similar in spirit to for instance the work of Ambainis, Balodis, Iraids, Kokainis, Prusis, and Vihrovs~\cite{ambainis2019quantum}, where Grover's algorithm is cleverly combined with classical techniques to design more efficient quantum algorithms for dynamic programming problems. This is particularly similar to our approach since we obtain our main results by combining known quantum algorithms with new classical ideas. In the present work, however, we go beyond Grover's algorithm and make use of several other quantum techniques such as element distinctness, pattern matching, amplitude amplification, and amplitude estimation to obtain our improvements. In addition to this, we also develop quantum walks that improve our more general results for some special cases. In particular, our quantum walk for obtaining a $1-\epsilon$ approximate solution for \textsf{LCS} is tight up to logarithmic factors due to a lower-bound we give in Section~\ref{sec:lb}.

Another line of research which is closely related to our work is the study of quantum lower bounds for edit distance~\cite{aviadsblogpost,Buhrman+STACS21}. While a \textsf{SETH}-based quadratic lower bound is known for the classical computation of edit distance, quantum lower bounds are not very strong. Recently, a quantum lower bound of $\Omega(n^{1.5})$ was given by Buhrman, Patro, and Speelman~\cite{Buhrman+STACS21} under a mild assumption. Still no quantum algorithm better than the state of art classical solution (which runs in time $O(n^2/\log ^2n)$~\cite{masek1980faster}) is known for edit distance. The reader can find more details in~\cite{aviadsblogpost}.

While not directly related to string algorithms, another investigation of \textsf{SETH} in the quantum setting is the recent work by Aaronson, Chia, Lin, Wang, and Zhang~\cite{Aaronson+CCC20} that focuses on the quantum complexity of the closest pair problem, a fundamental problem in computational geometry. Interestingly, the upper bounds obtained in that paper also use an approach based on  element distinctness and quantum walks. Despite this, the high-level ideas of our work are substantially different from~\cite{Aaronson+CCC20}  as we utilize several novel properties of \textsf{LCS} and \textsf{LPS} to design our algorithms.

In the classical sequential setting, \textsf{LCS} and \textsf{LPS} can be solved in linear time~\cite{cormen2009introduction}. The solutions are almost identical: we first construct suffix trees for the input strings and then find \textit{the lowest common ancestors} for the tree nodes. Ulam distance can also be solved exactly in time $O(n \log n)$~\cite{cormen2009introduction} which is the best we can hope for via a comparison-based algorithm~\cite{fredman1975computing} or algebraic decision trees~\cite{ramanan1997tight}. Approximation algorithms running in time $\tilde O(n/d+\sqrt{d})$ where $d$ denotes the Ulam distance of the two strings have also been developed \cite{Andoni+SODA10,Naumovitz+SODA17}.

\subsection{Preliminaries}
\paragraph{Description of \textsf{LCS}, \textsf{LPS} and \textsf{UD}.}
In the longest common substring problem (\textsf{LCS}), the input consists of two strings and our goal is to find the longest substring\footnote{In a substring, the characters are next to each other. In other words, the positions of the characters of a substring should make an interval. This is in contrast to subsequence where the positions can be arbitrary.} which is shared between the two strings. We denote the two input strings by $A$ and $B$. We assume that $A$ and $B$ have the same length, which we denote by $n$. We use~$\Sigma$ to denote the alphabet of the strings. For any $\epsilon\in [0,1)$, we say that an algorithm outputs a $(1-\epsilon)$-approximation of the longest common substring if for any input strings $A$ and $B$, it outputs a common substring of length at least $(1-\epsilon)d$, where $d$ is the length of the longest common substring of $A$ and $B$. 

In the longest palindrome substring problem (\textsf{LPS}), the goal is to find the longest substring of a given string $A$ which reads the same both forward and backward. The length of $A$ is also denoted by $n$ and its alphabet by $\Sigma$. For any $\epsilon\in [0,1)$, we say that an algorithm outputs a $(1-\epsilon)$-approximation of the longest palindrome substring if for any input string $A$, it outputs a palindrome substring of length at least $(1-\epsilon)d$, where $d$ is the length of the longest palindrome substring of $A$.

We say that a string of length $n$ over an alphabet $\Sigma$ is \textit{non-repetitive} if no character appears twice in the string (note that this can happen only if $|\Sigma|\ge n$). The Ulam distance is a special case of the edit distance in which the input strings are non-repetitive. Let us now define the problem more formally. In Ulam Distance (\textsf{UD}) we are given two non-repetitive strings $A$ and~$B$ of length $n$, and consider how to transform one of them to the other one. For this purpose we allow two basic operations \textit{character addition} and \textit{character deletion}, each at a unit cost and our goal is to minimize the total cost of the transformation\footnote{Another popular version allows character substitution as a third operation. These two definitions of the Ulam distance only differ by a factor at most 2.}.   We denote by $\edit(A,B)$ the minimum number of such operations needed to transform $A$ into $B$. The goal is to compute $\edit(A,B)$, either exactly or approximately. For any $\epsilon\in [0,1]$, we say that an algorithm outputs a $(1+\epsilon)$-approximation of $\edit(A,B)$ if it outputs some value $r$ such that the inequality $(1-\epsilon)\cdot \edit(A,B)\le r\le (1+\epsilon)\cdot\edit(A,B)$ holds.

\paragraph{General definitions and conventions.}
Throughout the paper, we use  notations $\tilde O(\cdot)$ and $\tilde \Omega(\cdot)$ that hide the polylogarithmic factors in terms of $n$. We always assume that the size of $\Sigma$ is polynomial in $n$ and that each character is encoded using $O(\log n)$ bits. The size of $\Sigma$ thus never appears explicitly in the complexity of our algorithms. We say that a randomized or a quantum algorithm solves a problem like \textsf{LCS}, \textsf{LPS} or \textsf{UL} with high probability if it solves the problem with probability at least 9/10 (this success probability can be easily amplified to $1-1/\poly(n)$ with a logarithmic overhead in the complexity).  

For convenience, we often only compute/approximate the size of the solution as opposed to explicitly giving the solution. However, it is not hard to see that for \textsf{LCS} and \textsf{LPS}, the same algorithms can also give an explicit solution with a logarithmic overhead in the runtime. 
(a solution can be specified by two integers pointing at the interval of the input.)

For a string $X$, we denote by $X[i,j]$ the substring of $X$ that starts from the $i$-th character and ends at the $j$-th character. 
We say a string $X$ is $q$-periodic if we have $X_i = X_{i+q}$ for all $1 \leq i \leq |X|-q$. Moreover, the periodicity of a string $X$ is equal to the smallest number $q > 0$ such that $X$ is $q$-periodic. We also call a non-repetitive string of length~$n$ over an alphabet of size $n$ a \textit{permutation} (it represents a permutation of the set $\Sigma$).

\paragraph{Quantum access to the inputs.}
In the quantum setting, we suppose that the input strings $A$ and $B$ can be accessed directly by a quantum algorithm.
More precisely, we have an oracle~$O_A$ that, for any $i\in\{1,\ldots,n\}$,
any $a\in\Sigma$, and any  $z\in\{0,1\}^\ast$, performs the unitary mapping 
$
O_A\colon \ket{i}\ket{a}\ket{z}\mapsto\ket{i}\ket{a\oplus A[i]}\ket{z},
$
where $\oplus$ denotes an appropriate binary operation defined on $\Sigma$ (e.g., bit-wise parity on the binary encodings of $a$ and $A[i]$).
Similarly we have an oracle $O_B$ that, for any $i\in\{1,\ldots,n\}$,
any $b\in\Sigma$, and any  $z\in\{0,1\}^\ast$, performs the unitary mapping
$
O_B\colon \ket{i}\ket{b}\ket{z}\mapsto\ket{i}\ket{b\oplus B[i]}\ket{z}.
$
Each call to $O_A$ or $O_B$ can be implemented at unit cost. This description corresponds to quantum random access (``QRAM access'') to the input, which is the standard model to investigate the complexity of sublinear time quantum algorithms. 

\section{Results}
We present sublinear time quantum algorithms along with quantum lower bounds for \textsf{LCS}, \textsf{LPS}, and \textsf{UL}. For the most part, the novelty of our work is to make use of existing quantum algorithms to solve our problems. For this purpose, we introduce new classical techniques that significantly differ from the conventional methods. However for a special case of \textsf{LCS}, we design a novel quantum walk that leads to an improvement over our more general solution. We give a brief explanation of this technique later in the section. For now, we start by stating the quantum tools that we use in our algorithms.

\subsection{Quantum components}

\noindent \textbf{Grover's search (\cite{GroverSTOC96, brassard2002quantum}).} Given a function $f\colon [n] \rightarrow \{0,1\}$, Grover's algorithm can find an element  $x\in [n]$ such that $f(x) = 1$ or verify if $f(i) = 0$ for all $i\in [n]$. This quantum algorithm runs in time $\tilde O(\sqrt{n}\cdot T(n))$ and succeeds with probability $9/10$ (the success probability can be increased to $1-1/\poly(n)$ with only a logarithmic overhead). Here, $T(n)$ represents the time complexity of computing $f(i)$ for one given element $i\in [n]$. Additionally, distinguishing between the case where $f(x)=1$ holds for at least $m$ elements (for some value $1\le m\le n$) and the case where $f(i) = 0$ for all $i\in [n]$ can be done in time $\tilde O(\sqrt{n/m}\cdot T(n))$.
\\[0.1cm]

\noindent \textbf{Pattern matching (\cite{ramesh2003string}).} Let $P$ and $S$ be a pattern and a text of lengths $n$ and $m$ respectively. One can either verify that $P$ does not appear as a substring in $S$ or find the leftmost (rightmost) occurrence of $P$ in $S$ in time $\tilde O(\sqrt{n}+\sqrt{m})$ via a quantum algorithm. The algorithm gives a correct solution with probability at least $9/10$.\\[0.1cm]

\noindent \textbf{Element distinctness (\cite{ambainis2007quantum}).} Let $X$ and $Y$ be two lists of size $n$ and $f\colon (X \cup Y) \rightarrow \mathbb{N}$ be a function that is used to compare the elements of $X$ and $Y$\footnote{In the standard definition of element distinctness, we are given a single list of elements and the goal is to find out if two elements in the list are equal. The present definition, also known as claw finding, is slightly more general --- for completeness we discuss how the upper bound $\tilde O(n^{2/3} \cdot T(n))$ is obtained for our version as well in Section \ref{sub:walks}.}.
 There is a quantum algorithm that finds (if any) an $(x,y)$ pair such that $x \in X$, $y \in Y$ and $f(x) = f(y)$. The algorithm succeeds with probability at least 9/10 and has running time $\tilde O(n^{2/3} \cdot T(n))$, where $T(n)$ represents the time needed to answer to the following question: Given $\alpha,\beta \in X \cup Y$ is $f(\alpha) = f(\beta)$ and if not which one is smaller? \\[0.1cm]


\noindent \textbf{Amplitude amplification (\cite{brassard2002quantum}).} Let $\mathsf{Q}$ be a decision problem and $\mathcal{A}$ be a quantum algorithm that solves $\mathsf{Q}$ with one-sided error and success probability $0 < p < 1$ (i.e., on a yes-instance $\mathcal{A}$ always accepts, while on a no-instance $\mathcal{A}$ rejects with probability $p$). Let $T$ be the runtime of $\mathcal{A}$. One can design a quantum algorithm for $Q$ with runtime $O(T /\sqrt{p})$ that solves $\mathsf{Q}$ with one-sided error and success probability at least $9/10$.\\[0.1cm]



\noindent \textbf{Amplitude estimation (\cite{brassard2002quantum}).} Let  $\mathcal{A}$ be a quantum algorithm that outputs 1 with probability $0 < p < 1$ and returns $0$ with probability $1-p$. Let $T$ be the time needed for $\mathcal{A}$ to generate its output. For any $\alpha>0$, one can design a quantum algorithm with runtime $O(T /(\alpha\sqrt{p}))$ that outputs with probability at least 9/10 an estimate $\tilde p$ such that $(1-\alpha)p\le \tilde p\le (1+\alpha) p.$

\subsection{\textsf{LCS} and \textsf{LPS}}
In this section, we outline the ideas for obtaining sublinear time algorithms for \textsf{LCS} and \textsf{LPS}. We begin as a warm up by giving a simple exact algorithm for \textsf{LCS} that runs in sublinear time when the solution size is small. Next, we explain our techniques for the cases that the solution size is large (at a high-level, this part of the algorithm is very similar in both \textsf{LCS} and \textsf{LPS}). In our algorithms, we do a binary search on the size of the solution. We denote this value by $d$. Thus, by losing an $O(\log n)$ factor in the runtime, we reduce the problem to verifying if a solution of size at least $d$ exists for our problem instance.

\paragraph*{Exact quantum algorithm for \textsf{LCS} (small $d$).}
For small $d$, we use element distinctness to solve \textsf{LCS}. 
Let $|\Sigma|$ be the size of the alphabet and $v:\Sigma \rightarrow [0,|\Sigma|-1]$ be a function that maps every element of the alphabet to a distinct number in range $0 \ldots |\Sigma|-1$. In other words, $v$ is a perfect hashing for the characters. We extend this definition to substrings of the two strings so that two substrings $t$ and $t’$ are equal if and only if we have $v(t) = v(t’)$. From the two strings, we then make two sets of numbers $S_A$ and $S_B$ each having $n-d+1$ elements. Element $i$ of set $S_A$ is a pair $(A,i)$ whose value is equal to $v(A[i,i+d-1])$ and similarly element $i$ of $S_B$ is a pair $(B,i)$ whose value is equal to $v(B[i,i+d-1])$. 
The two sets contain elements with equal values if and only if the two strings have a common substring of size~$d$. Therefore, by solving element distinctness for $S_A$ and $S_B$ we can find out if the size of the solution is at least~$d$. Although element distinctness can be solved in time $\tilde O(n^{2/3})$ when element comparison can be implemented in $\tilde O(1)$ time, our algorithm needs more runtime since comparing elements takes time $\omega(1)$. More precisely, each comparison can be implemented in time $\tilde O(\sqrt{d})$ in the following way: In order to compare two elements of the two sets, we can use Grover's search to find out if the two substrings are different in any position and if so we can find the leftmost position in time $\tilde O(\sqrt{d})$. Thus, it takes time $\tilde O(\sqrt{d})$ to compare two elements, which results in runtime $\tilde O(n^{2/3} \sqrt{d})$.

\paragraph{$1-\epsilon$ approximation for \textsf{LCS} and \textsf{LPS} (large $d$).} We use another technique to solve \textsf{LCS} and \textsf{LPS} when the solution is large. While for \textsf{LPS}, this new idea alone gives an optimal solution, for \textsf{LCS} we need to combine it with the previous algorithm to make sure the runtime is sublinear. Let us focus on \textsf{LCS} first. For a constant $0 < \epsilon < 1$, we define $d' = (1-\epsilon)d$ and randomly draw a substring of length $d'$ from $A$. We denote this substring by $P$. More precisely, we sample an $1 \leq i \leq n-d'+1$ uniformly at random and set $P = A[i,i+d'-1]$. Assuming the solution size is at least $d$, it follows that $P$ is part of a solution with probability at least $\epsilon d/n$. Then, by searching this substring in $B$, we can find a solution of size $d'$.

We use the pattern matching quantum algorithm of Ramesh and Vinay~\cite{ramesh2003string} to search $P$ in $B$. This takes time $\tilde O(\sqrt{n})$ since $|P| \leq |B| = n$. Moreover, the success probability of this algorithm is $\Omega(d/n)$ and therefore by amplitude amplification, we can improve the success rate to $9/10$ by only losing a factor of $O(\sqrt{n/d})$ in the runtime. Thus, if the solution is at least $d$, we can obtain a solution of size at least $(1-\epsilon)d$ in time $\tilde O(\sqrt{n/d}\cdot \sqrt{n})=\tilde O(n/\sqrt{d})$. Notice that the runtime is sublinear when $d$ is large.

The same technique can be used to approximate \textsf{LPS}. Similarly, we define $d' = (1-\epsilon)d$ for some constant $0 < \epsilon < 1$ and draw a random substring of size $d'$ from $A$. With the same argument, provided that the solution size is at least $d$, the probability that $P$ is part of an optimal solution is at least $\Omega(d/n)$. We show in Section \ref{section:lps} that by searching the reverse of $P$ in its neighbourhood we are able to find a solution of at least $d'$. This step of the algorithm slightly differs from \textsf{LCS} in that we only search the reverse of $P$ in the area at most $d$ away from $P$. Thus, both the text and the pattern are of size $O(d)$ and therefore the search can be done in time $\tilde O(\sqrt{d})$. By utilizing amplitude amplification, we can obtain an algorithm with runtime $\tilde O(\sqrt{n/d}\cdot\sqrt{d})=\tilde O(\sqrt{n})$ and approximation factor $1-\epsilon$. 

\paragraph{From $1-\epsilon$ approximation to exact solution.}
We further develop a clever technique to obtain an exact solution with the above ideas. We first focus on \textsf{LCS} to illustrate this new technique. The high-level intuition is the following: After sampling $P$ from $A$ and searching $P$ in $B$, if the pattern appears only once (or a small number of times) in $B$ then by extending the matching parts of $B$ from both ends we may find a common substring of size $d$ (see Figure \ref{fig:5}). 
\begin{figure}[!htbp]
	\centering

\tikzset{every picture/.style={line width=0.75pt}} 

\begin{tikzpicture}[x=0.75pt,y=0.75pt,yscale=-1,xscale=1]

\draw   (52,60) -- (607,60) -- (607,91) -- (52,91) -- cycle ;
\draw  [fill={rgb, 255:red, 255; green, 210; blue, 132 }  ,fill opacity=0.5 ][dash pattern={on 0.84pt off 2.51pt}] (219.5,64) -- (420,64) -- (420,87) -- (219.5,87) -- cycle ;
\draw   (53,151) -- (608,151) -- (608,182) -- (53,182) -- cycle ;
\draw  [fill={rgb, 255:red, 255; green, 210; blue, 132 }  ,fill opacity=0.5 ][dash pattern={on 0.84pt off 2.51pt}] (294.5,155) -- (495,155) -- (495,178) -- (294.5,178) -- cycle ;
\draw  [fill={rgb, 255:red, 208; green, 2; blue, 27 }  ,fill opacity=0.4 ][dash pattern={on 0.84pt off 2.51pt}] (192.5,64) -- (219,64) -- (219,87) -- (192.5,87) -- cycle ;
\draw  [fill={rgb, 255:red, 126; green, 211; blue, 33 }  ,fill opacity=0.4 ][dash pattern={on 0.84pt off 2.51pt}] (419.5,64) -- (511.5,64) -- (511.5,87) -- (419.5,87) -- cycle ;
\draw  [fill={rgb, 255:red, 208; green, 2; blue, 27 }  ,fill opacity=0.4 ][dash pattern={on 0.84pt off 2.51pt}] (267.5,155) -- (294,155) -- (294,178) -- (267.5,178) -- cycle ;
\draw  [fill={rgb, 255:red, 126; green, 211; blue, 33 }  ,fill opacity=0.4 ][dash pattern={on 0.84pt off 2.51pt}] (494.5,155) -- (586.5,155) -- (586.5,178) -- (494.5,178) -- cycle ;
\draw    (197.5,43.99) -- (508.5,43.01) ;
\draw [shift={(510.5,43)}, rotate = 539.8199999999999] [color={rgb, 255:red, 0; green, 0; blue, 0 }  ][line width=0.75]    (10.93,-3.29) .. controls (6.95,-1.4) and (3.31,-0.3) .. (0,0) .. controls (3.31,0.3) and (6.95,1.4) .. (10.93,3.29)   ;
\draw [shift={(195.5,44)}, rotate = 359.82] [color={rgb, 255:red, 0; green, 0; blue, 0 }  ][line width=0.75]    (10.93,-3.29) .. controls (6.95,-1.4) and (3.31,-0.3) .. (0,0) .. controls (3.31,0.3) and (6.95,1.4) .. (10.93,3.29)   ;
\draw    (272.5,198.99) -- (583.5,198.01) ;
\draw [shift={(585.5,198)}, rotate = 539.8199999999999] [color={rgb, 255:red, 0; green, 0; blue, 0 }  ][line width=0.75]    (10.93,-3.29) .. controls (6.95,-1.4) and (3.31,-0.3) .. (0,0) .. controls (3.31,0.3) and (6.95,1.4) .. (10.93,3.29)   ;
\draw [shift={(270.5,199)}, rotate = 359.82] [color={rgb, 255:red, 0; green, 0; blue, 0 }  ][line width=0.75]    (10.93,-3.29) .. controls (6.95,-1.4) and (3.31,-0.3) .. (0,0) .. controls (3.31,0.3) and (6.95,1.4) .. (10.93,3.29)   ;
\draw    (389,169) -- (389,139) ;
\draw [shift={(389,136)}, rotate = 450] [fill={rgb, 255:red, 0; green, 0; blue, 0 }  ][line width=0.08]  [draw opacity=0] (8.93,-4.29) -- (0,0) -- (8.93,4.29) -- cycle    ;

\draw (27,67.4) node [anchor=north west][inner sep=0.75pt]    {$A$};
\draw (28,158.4) node [anchor=north west][inner sep=0.75pt]    {$B$};
\draw (317,67.4) node [anchor=north west][inner sep=0.75pt]    {$P$};
\draw (233,17) node [anchor=north west][inner sep=0.75pt]   [align=left] { fixed common substring of size $\displaystyle d$};
\draw (330,209) node [anchor=north west][inner sep=0.75pt]   [align=left] {fixed common substring of size $\displaystyle d$};
\draw (292,114) node [anchor=north west][inner sep=0.75pt]   [align=left] {the only appearance of $\displaystyle P\ \textrm{in}\ B\ $};

\end{tikzpicture}

\caption{When $P$ appears in $B$ only once.}\label{fig:5}
\end{figure}
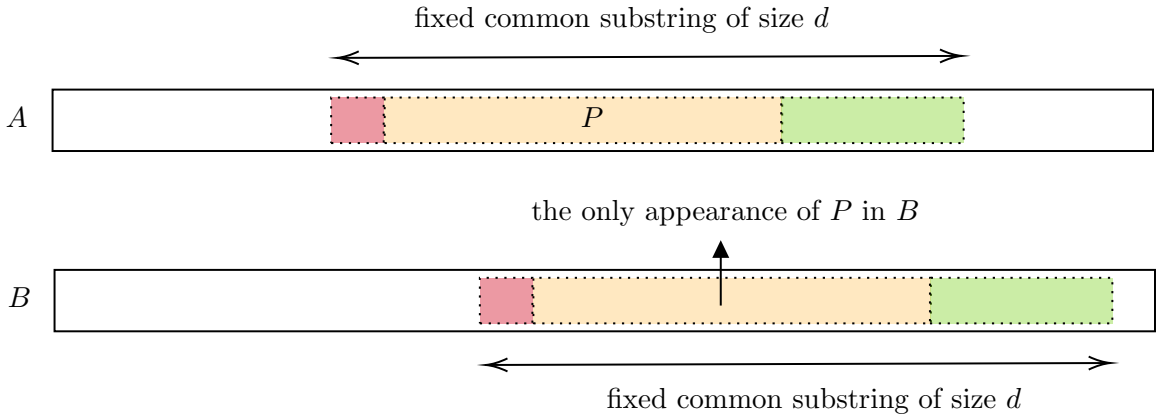
Thus intuitively, the challenging case is when there are several occurrences of $P$ in $B$. The key observation is that since $|P|$ is large and there are several places that $P$ appears in $B$ then they must overlap (see Figure \ref{fig:6}). This gives a very convenient approach to tackle the problem.  Assume that $P$ appears at positions $i$ and $j > i$ of $B$ and these parts are overlapping. This implies that $P$ is certainly $(j-i)$-periodic. To see how this enables us to solve the problem, assume that $P$ is $x$-periodic and that a large continuous area of $B$ is covered by occurrences of $P$. It follows that except for the boundary cases,~$P$ appears as parts of that interval that are exactly $x$ units away. Therefore, by detecting such an interval and computing the periodicity of $P$, one can determine \textbf{all} occurrences of $P$ in the interval almost as quickly as finding one occurrence of $P$.
\begin{figure}[!htbp]
	\centering

\tikzset{every picture/.style={line width=0.75pt}} 

\begin{tikzpicture}[x=0.75pt,y=0.75pt,yscale=-1,xscale=1]

\draw   (52,60) -- (607,60) -- (607,91) -- (52,91) -- cycle ;
\draw  [fill={rgb, 255:red, 255; green, 210; blue, 132 }  ,fill opacity=0.5 ][dash pattern={on 0.84pt off 2.51pt}] (219.5,64) -- (420,64) -- (420,87) -- (219.5,87) -- cycle ;
\draw   (53,131) -- (608,131) -- (608,162) -- (53,162) -- cycle ;
\draw  [fill={rgb, 255:red, 255; green, 210; blue, 132 }  ,fill opacity=0.5 ][dash pattern={on 0.84pt off 2.51pt}] (294.5,135) -- (495,135) -- (495,158) -- (294.5,158) -- cycle ;
\draw  [fill={rgb, 255:red, 208; green, 2; blue, 27 }  ,fill opacity=0.4 ][dash pattern={on 0.84pt off 2.51pt}] (192.5,64) -- (219,64) -- (219,87) -- (192.5,87) -- cycle ;
\draw  [fill={rgb, 255:red, 126; green, 211; blue, 33 }  ,fill opacity=0.4 ][dash pattern={on 0.84pt off 2.51pt}] (419.5,64) -- (511.5,64) -- (511.5,87) -- (419.5,87) -- cycle ;
\draw    (197.5,43.99) -- (508.5,43.01) ;
\draw [shift={(510.5,43)}, rotate = 539.8199999999999] [color={rgb, 255:red, 0; green, 0; blue, 0 }  ][line width=0.75]    (10.93,-3.29) .. controls (6.95,-1.4) and (3.31,-0.3) .. (0,0) .. controls (3.31,0.3) and (6.95,1.4) .. (10.93,3.29)   ;
\draw [shift={(195.5,44)}, rotate = 359.82] [color={rgb, 255:red, 0; green, 0; blue, 0 }  ][line width=0.75]    (10.93,-3.29) .. controls (6.95,-1.4) and (3.31,-0.3) .. (0,0) .. controls (3.31,0.3) and (6.95,1.4) .. (10.93,3.29)   ;
\draw    (189.5,140) -- (339.33,216.63) ;
\draw [shift={(342,218)}, rotate = 207.09] [fill={rgb, 255:red, 0; green, 0; blue, 0 }  ][line width=0.08]  [draw opacity=0] (8.93,-4.29) -- (0,0) -- (8.93,4.29) -- cycle    ;
\draw  [fill={rgb, 255:red, 255; green, 210; blue, 132 }  ,fill opacity=0.5 ][dash pattern={on 0.84pt off 2.51pt}] (178.5,146) -- (379,146) -- (379,169) -- (178.5,169) -- cycle ;
\draw  [fill={rgb, 255:red, 255; green, 210; blue, 132 }  ,fill opacity=0.5 ][dash pattern={on 0.84pt off 2.51pt}] (181.5,135) -- (382,135) -- (382,158) -- (181.5,158) -- cycle ;
\draw  [fill={rgb, 255:red, 255; green, 210; blue, 132 }  ,fill opacity=0.5 ][dash pattern={on 0.84pt off 2.51pt}] (371.5,146) -- (572,146) -- (572,169) -- (371.5,169) -- cycle ;
\draw  [fill={rgb, 255:red, 255; green, 210; blue, 132 }  ,fill opacity=0.5 ][dash pattern={on 0.84pt off 2.51pt}] (168.5,156) -- (369,156) -- (369,179) -- (168.5,179) -- cycle ;
\draw  [fill={rgb, 255:red, 255; green, 210; blue, 132 }  ,fill opacity=0.5 ][dash pattern={on 0.84pt off 2.51pt}] (358.5,156) -- (559,156) -- (559,179) -- (358.5,179) -- cycle ;
\draw    (389,146) -- (348.48,217.39) ;
\draw [shift={(347,220)}, rotate = 299.58] [fill={rgb, 255:red, 0; green, 0; blue, 0 }  ][line width=0.08]  [draw opacity=0] (8.93,-4.29) -- (0,0) -- (8.93,4.29) -- cycle    ;
\draw    (459,169) -- (364.67,217.63) ;
\draw [shift={(362,219)}, rotate = 332.73] [fill={rgb, 255:red, 0; green, 0; blue, 0 }  ][line width=0.08]  [draw opacity=0] (8.93,-4.29) -- (0,0) -- (8.93,4.29) -- cycle    ;
\draw    (275,152) -- (344.82,217.94) ;
\draw [shift={(347,220)}, rotate = 223.36] [fill={rgb, 255:red, 0; green, 0; blue, 0 }  ][line width=0.08]  [draw opacity=0] (8.93,-4.29) -- (0,0) -- (8.93,4.29) -- cycle    ;
\draw    (511,152) -- (355.77,216.84) ;
\draw [shift={(353,218)}, rotate = 337.33000000000004] [fill={rgb, 255:red, 0; green, 0; blue, 0 }  ][line width=0.08]  [draw opacity=0] (8.93,-4.29) -- (0,0) -- (8.93,4.29) -- cycle    ;

\draw (27,67.4) node [anchor=north west][inner sep=0.75pt]    {$A$};
\draw (28,138.4) node [anchor=north west][inner sep=0.75pt]    {$B$};
\draw (317,67.4) node [anchor=north west][inner sep=0.75pt]    {$P$};
\draw (233,17) node [anchor=north west][inner sep=0.75pt]   [align=left] { fixed common substring of size $\displaystyle d$};
\draw (238,223) node [anchor=north west][inner sep=0.75pt]   [align=left] {several appearances of $\displaystyle P\ in\ B\ $};

\end{tikzpicture}

\caption{When $P$ appears several times in $B$.}\label{fig:6}
\end{figure}
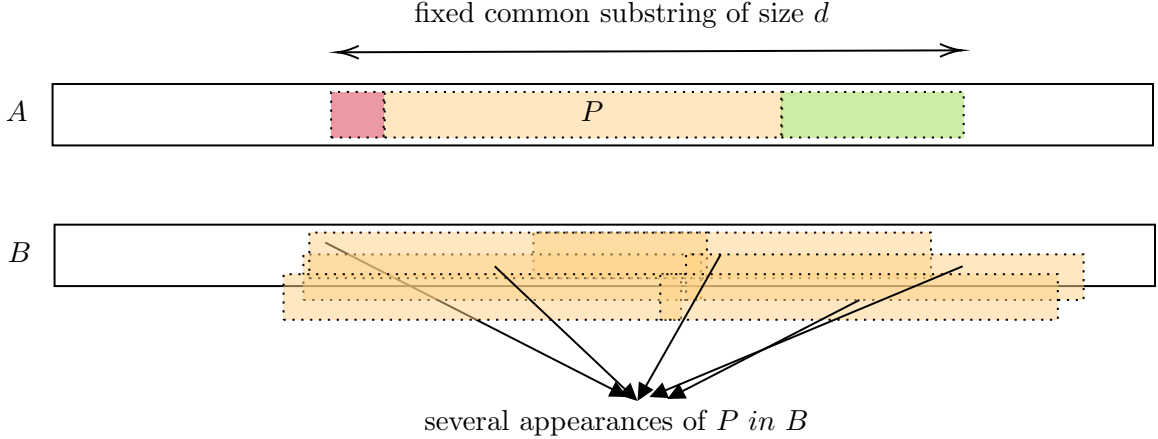

Since the techniques are involved, we defer the details to Section \ref{sec:lcs} and here we just mention some intuition about its complexity analysis: It follows from the above observations that when several occurrences of $P$ cover an entire interval $I$ of $B$, then we only need to consider $O(1)$ places in $I$ that may correspond to an optimal solution. Moreover, the length of every such interval is at least $d$. Thus, there are only $n/d$ places of $B$ that we need to take into account in our algorithm. Therefore at a high-level, when $d$ is large (say~$\Omega(n)$), the only non-negligible cost we pay is for the pattern matching which takes time $\tilde O(\sqrt{n})$. We show in Section \ref{sec:lcs} that as $d$ becomes smaller the runtime increases; more precisely, when $d = n/\alpha$ the runtime increases by a factor of $O(\sqrt{\alpha})$ and thus the overall runtime of the algorithm is $\tilde O(\sqrt{n}\cdot \sqrt{\alpha})=\tilde O(n/\sqrt{d})$.
A very similar argument can be used to shave the $1-\epsilon$ approximation factor from the \textsf{LPS} algorithm as well.

\newcommand{\TheoremLCSExact}
{
The longest common substring of two strings of size $n$ can be computed with high probability by a quantum algorithm in time $\tilde O(n^{5/6})$. 
}

\newcommand{\TheoremLCSApprox}
{
	For any constant $0 < \epsilon < 1$, the longest common substring of two strings of size $n$ can be approximated within a factor $1-\epsilon$ with high probability by a quantum algorithm in time $\tilde O(n^{3/4})$.
}

\newcommand{\TheoremLPS}
{
The longest palindrome substring of a string of size $n$ can be computed with high probability by a quantum algorithm in time $\tilde O(\sqrt{n})$.
}

\newcommand{\TheoremUlam}
{
For any constant $\epsilon>0$, there exists a quantum algorithm that computes with high probability a $(1+\epsilon)$-approximation of the Ulam distance of two non-repetitive strings in time $\tilde O(\sqrt{n})$.
}

\newcommand{\TheoremLCSExactPerm}
{
The longest common substring of two non-repetitive strings of size $n$ can be computed with high probability by a quantum algorithm in time $\tilde O(n^{3/4})$.
}

\newcommand{\TheoremLCSApproxPerm}
{
For any constant $0 < \epsilon < 1$, the longest common substring of two non-repetitive strings of size $n$ can be approximated within a factor $1-\epsilon$ with high probability by a quantum algorithm in time $\tilde O(n^{2/3})$.

}

To summarize, we obtain the following theorems. Theorem~\ref{theorem:lcs-exact} combines the two algorithms we just described: if $d$ is larger than $n^{1/3}$ we use the second algorithm with runtime $\tilde O(n/\sqrt{d})$ otherwise we use the first algorithm with runtime $\tilde O(n^{2/3}\sqrt{d})$.
\begin{theorem}
	\label{theorem:lcs-exact}
	\TheoremLCSExact
\end{theorem}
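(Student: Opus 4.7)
The plan is to reduce the search version to the decision version by binary searching over the target length $d$, incurring only an $O(\log n)$ overhead, and then to invoke one of two quantum subroutines depending on whether $d$ is below or above the threshold $d^\star := n^{1/3}$. Both subroutines have already been sketched in the overview above, so the main content of the proof is to verify that they combine to give $\tilde O(n^{5/6})$ uniformly in $d$, and to handle the success probability bookkeeping.

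For $d \le d^\star$, I would use the element-distinctness based subroutine: form the sets $S_A = \{(A,i) : 1\le i\le n-d+1\}$ and $S_B = \{(B,i): 1\le i\le n-d+1\}$, with the value of $(X,i)$ defined as $v(X[i,i+d-1])$, and run claw finding on $(S_A,S_B)$. A length-$d$ common substring exists if and only if a collision is found. The comparison cost $T(n)$ is the time to decide whether two length-$d$ substrings are equal (and, if not, which is lexicographically smaller), which is achievable in $\tilde O(\sqrt d)$ via Grover search over the $d$ positions of disagreement. Multiplying by the $\tilde O(n^{2/3})$ factor from claw finding yields $\tilde O(n^{2/3}\sqrt d)$, which at $d=d^\star=n^{1/3}$ is $\tilde O(n^{5/6})$, and is increasing in $d$ so the bound holds throughout the small-$d$ regime.

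For $d > d^\star$, I would use the sample-and-match subroutine made exact via the periodicity analysis outlined above and deferred to Section~\ref{sec:lcs}. Concretely, one samples a random substring $P$ of length exactly $d$ from $A$; if the answer is $\ge d$ then $P$ lies inside a fixed optimal common substring with probability $\Omega(d/n)$. For each candidate $P$ one invokes the Ramesh--Vinay quantum pattern matching algorithm in $\tilde O(\sqrt n)$ time to locate either a unique occurrence in $B$ (which can be extended to confirm a length-$d$ match) or an interval tiled by overlapping occurrences, in which case any two overlapping occurrences force $P$ to be $q$-periodic for some $q$, so that the candidate match positions in each maximal periodic run reduce to $O(1)$. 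Amplitude amplification over the random choice of $P$ costs an extra $O(\sqrt{n/d})$ factor, giving total time $\tilde O(\sqrt{n/d}\cdot\sqrt n)=\tilde O(n/\sqrt d)$; at $d=d^\star$ this is $\tilde O(n^{5/6})$ and is decreasing in $d$, so the bound holds throughout the large-$d$ regime.

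Finally, the two running times cross exactly at $d^\star = n^{1/3}$, where $n^{2/3}\sqrt d = n/\sqrt d = n^{5/6}$; this is why the threshold is set there. Amplifying each decision call to success probability $1-1/\poly(n)$ and taking a union bound over the $O(\log n)$ values of $d$ visited by the binary search gives overall success probability at least $9/10$ and preserves the complexity up to polylogarithmic factors. The main obstacle in the argument is neither the binary search nor the element-distinctness branch but rather ensuring that the large-$d$ branch is exact rather than merely $(1-\epsilon)$-approximate, which is exactly what the periodicity analysis is designed to establish.
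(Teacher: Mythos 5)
Your proposal is correct and follows exactly the paper's argument: binary search on $d$, the element-distinctness branch with $\tilde O(n^{2/3}\sqrt{d})$ cost for $d$ below $n^{1/3}$ (Lemma \ref{lemma:firstone}), the sampling/pattern-matching/periodicity branch with $\tilde O(n/\sqrt{d})$ cost for $d$ above $n^{1/3}$ (Lemma \ref{lemma:secondone}), and balancing at the crossover point to get $\tilde O(n^{5/6})$. The only cosmetic deviation is in the large-$d$ branch, where the paper samples a pattern of length $2\lfloor d/3\rfloor$ from $A$ together with a window of length $d$ from $B$ rather than a length-$d$ pattern, but this is internal to the subroutine your proof invokes as a black box and does not affect the combination argument.
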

  
\begin{theorem}
	\label{theorem:lps}
	\TheoremLPS
\end{theorem}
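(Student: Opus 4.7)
The plan is to follow the outline sketched earlier in Section~2 while upgrading the approximation to exact, in spirit of how \textsf{LCS} is handled. I will proceed in three stages: (i) reduce to the decision problem "does a palindrome of length at least $d$ exist?" by doing a binary search on $d\in\{1,\dots,n\}$, which costs only an $O(\log n)$ multiplicative factor; (ii) build a $(1-\epsilon)$-approximate checker running in $\tilde O(\sqrt{n})$ time using random sampling and quantum pattern matching; (iii) remove the approximation loss using a periodicity argument analogous to the one used for \textsf{LCS}, while keeping the runtime $\tilde O(\sqrt{n})$.

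For stage (ii), fix a target length $d$ and set $d' = \lfloor(1-\epsilon)d\rfloor$ for some small constant $\epsilon>0$. Sample an index $i\in\{1,\dots,n-d'+1\}$ uniformly at random and let $P=A[i,i+d'-1]$. If $A$ contains a palindrome $A[j,j+d-1]$ of length $d$ and $P$ sits entirely inside it, then the reverse $\bar{P}$ must appear inside the same palindrome at the mirrored position, which is at distance at most $d$ from $P$. Thus I restrict attention to the window $W=A[\max(1,i-d),\min(n,i+d'-1+d)]$ of length $O(d)$, and invoke the Ramesh–Vinay quantum pattern-matching algorithm to search for $\bar{P}$ in $W$, which runs in $\tilde O(\sqrt{d})$ time because both pattern and text have length $O(d)$. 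Conditioned on a palindrome of length $d$ existing, the sample $P$ lands inside it with probability at least $\epsilon d/n=\Omega(d/n)$, so amplitude amplification boosts the overall success probability to $9/10$ at the cost of a factor $O(\sqrt{n/d})$, giving a total runtime of $\tilde O(\sqrt{n/d}\cdot\sqrt{d})=\tilde O(\sqrt{n})$, independent of $d$.

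For stage (iii), when $\bar{P}$ is located in $W$ at position $k$, a candidate palindrome center $c=(i+k+d'-1)/2$ emerges, and I can verify whether $A[2c-(d-1)/2,\ldots,2c+(d-1)/2]$ (handling odd/even parities symmetrically) is an actual palindrome of length $d$ by another Grover search for a mismatched symmetric pair in $\tilde O(\sqrt{d})$ time. The subtle case is when $\bar{P}$ occurs multiple times in $W$: by the standard periodicity lemma, two overlapping occurrences of $\bar{P}$ force $\bar{P}$ to be $q$-periodic for some $q\le d-d'=O(\epsilon d)$, and all its occurrences in $W$ lie on a short arithmetic progression with step $q$ inside each maximal run of occurrences. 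Each such run has length $\Omega(d)$, so within $W$ there are only $O(1)$ runs, and within each run only $O(1)$ candidate positions give a distinct candidate palindrome center; all of them can be recovered in $\tilde O(\sqrt{d})$ time by combining quantum pattern matching (first/last occurrence), computing the period of $\bar{P}$, and then verifying each candidate. Thus the exact checker for length $d$ still runs in $\tilde O(\sqrt{n/d}\cdot\sqrt{d})=\tilde O(\sqrt{n})$, and the binary search yields the claimed $\tilde O(\sqrt{n})$ bound.

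The main obstacle will be stage (iii): one must show carefully that when a true palindrome of length $d$ contains $P$, at least one of the $O(1)$ candidate centers produced from the periodic occurrences of $\bar{P}$ actually corresponds to the full palindrome of length $d$ (not merely a palindrome of length $d'$), and that the verification step rejects all spurious candidates. This requires a clean case analysis based on where $P$ sits inside the palindrome relative to its center, using the periodicity of $\bar{P}$ (equivalently of $P$) to enumerate the mirrored positions; the key quantitative point is that all these cases are subsumed by a constant-sized list of candidate centers per periodic run, which is precisely what keeps the runtime sublinear.
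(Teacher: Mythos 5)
Your overall strategy --- binary search on $d$, anchoring a substring $P$ inside a hypothetical palindrome, searching for its reverse in an $O(d)$-length window via the Ramesh--Vinay algorithm, and invoking periodicity to organize multiple occurrences --- is the same as the paper's (the paper phrases the outer loop as a Grover search over ``marked'' positions rather than amplitude amplification over a sampler, and takes $|P|=\lceil d/2\rceil$ with a text window of length exactly $d$, but those differences are cosmetic). The problem is that the step you yourself flag as ``the main obstacle'' is precisely the technical heart of the proof, and your proposal asserts it rather than proves it. Concretely, the claim that ``within each run only $O(1)$ candidate positions give a distinct candidate palindrome center'' is false as stated: if $\bar P$ has period $q$, a maximal run of occurrences spanning length $\Theta(d)$ contains $\Theta(d/q)$ occurrences, and consecutive occurrences yield candidate centers differing by $q/2$, so they are all distinct. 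For $q=O(1)$ this is $\Theta(d)$ candidate centers, and verifying each one by Grover costs $\tilde O(\sqrt{d})$, which inflates the per-check cost to $\tilde O(d^{3/2})$ and destroys the $\tilde O(\sqrt{n})$ bound. (A secondary slip: in a window of length roughly $3d$, two occurrences of a pattern of length $(1-\epsilon)d$ need not overlap at all, so the periodicity lemma does not give $q\le d-d'$; the paper avoids this by arranging $|S|\le 2|P|$.)

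The paper closes exactly this gap with a two-case argument that your proposal would need to reproduce. Let $\alpha$ be the period of $P$ and let $[x,y]$ be the maximal $\alpha$-periodic interval of $A$ containing the anchor. If $y-x+1\ge d+\alpha$, then $[x,y]$ already contains a palindrome of length at least $d$, obtained by shifting a known short palindrome by multiples of $\alpha$ and extending it symmetrically, and it can be located in $\tilde O(\sqrt{d})$ time (Lemma~\ref{lemma-lcs1}). Otherwise, a maximality argument shows that the center of any length-$d$ palindrome through the anchor must lie within $2\alpha$ of $(x+y)/2$ (Lemma~\ref{lemma:2}); since consecutive candidate centers are $\alpha/2$ apart, only $O(1)$ of them survive and each can be checked in $\tilde O(\sqrt{d})$ time. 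Without an argument of this kind, the ``constant-sized list of candidate centers per periodic run'' on which your runtime analysis rests is unsubstantiated, so the proposal as written does not establish the theorem.
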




We accompany Theorems~\ref{theorem:lcs-exact} and~\ref{theorem:lps} with quantum lower bounds (all the lower bounds are proven in Section~\ref{sec:lb}). Intuitively, obtaining a solution with time better than $\tilde O(\sqrt{n})$ is impossible for either problem due to a reduction to searching unordered sets. This makes our solution for \textsf{LPS} optimal up to subpolynomial factors. For \textsf{LCS}, an improved lower bound of $\tilde \Omega(n^{2/3})$ can be obtained via a reduction from element distinctness. However, the gap is still open between our upper bound of $\tilde O(n^{5/6})$ and lower bound of $\tilde \Omega(n^{2/3})$. Thus, we aim to improve our upper bound by considering approximate solutions and special cases. In the following, we briefly explain these results. 

\paragraph{Improved $1-\epsilon$ approximation for \textsf{LCS}.}
One way to obtain a better algorithm for \textsf{LCS} is by considering $1-\epsilon$ approximation algorithms. Note that the quantum algorithm for element distinctness is based on quantum walks, and our solution for \textsf{LCS} is obtained via a reduction to element distinctness. The runtime of quantum walks can be improved when multiple solutions are present for a problem. If instead of a solution of size $d$ which is exact, we resort to solutions of size $(1-\epsilon)d$, we are guaranteed to have at least $\epsilon d$ solutions. Thus, intuitively this should help us improve the runtime of our algorithm.

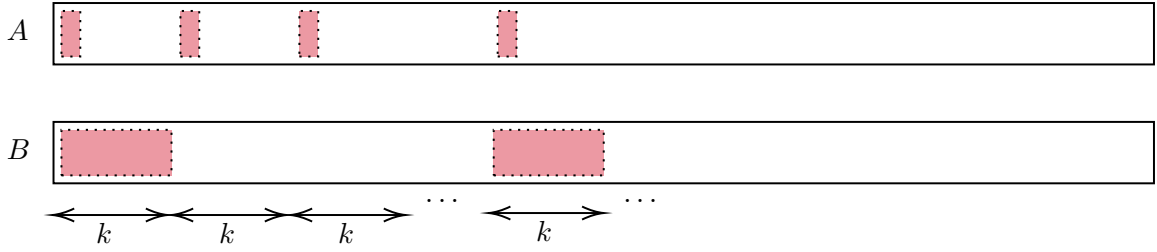
\begin{figure}[!htbp]
	\centering

\tikzset{every picture/.style={line width=0.75pt}} 

\begin{tikzpicture}[x=0.75pt,y=0.75pt,yscale=-1,xscale=1]

\draw   (52,20) -- (607,20) -- (607,51) -- (52,51) -- cycle ;
\draw   (52,80) -- (607,80) -- (607,111) -- (52,111) -- cycle ;
\draw    (53.5,127) -- (107.5,127) ;
\draw [shift={(109.5,127)}, rotate = 180] [color={rgb, 255:red, 0; green, 0; blue, 0 }  ][line width=0.75]    (10.93,-3.29) .. controls (6.95,-1.4) and (3.31,-0.3) .. (0,0) .. controls (3.31,0.3) and (6.95,1.4) .. (10.93,3.29)   ;
\draw [shift={(51.5,127)}, rotate = 0] [color={rgb, 255:red, 0; green, 0; blue, 0 }  ][line width=0.75]    (10.93,-3.29) .. controls (6.95,-1.4) and (3.31,-0.3) .. (0,0) .. controls (3.31,0.3) and (6.95,1.4) .. (10.93,3.29)   ;
\draw    (113.5,127) -- (167.5,127) ;
\draw [shift={(169.5,127)}, rotate = 180] [color={rgb, 255:red, 0; green, 0; blue, 0 }  ][line width=0.75]    (10.93,-3.29) .. controls (6.95,-1.4) and (3.31,-0.3) .. (0,0) .. controls (3.31,0.3) and (6.95,1.4) .. (10.93,3.29)   ;
\draw [shift={(111.5,127)}, rotate = 0] [color={rgb, 255:red, 0; green, 0; blue, 0 }  ][line width=0.75]    (10.93,-3.29) .. controls (6.95,-1.4) and (3.31,-0.3) .. (0,0) .. controls (3.31,0.3) and (6.95,1.4) .. (10.93,3.29)   ;
\draw    (173.5,127) -- (227.5,127) ;
\draw [shift={(229.5,127)}, rotate = 180] [color={rgb, 255:red, 0; green, 0; blue, 0 }  ][line width=0.75]    (10.93,-3.29) .. controls (6.95,-1.4) and (3.31,-0.3) .. (0,0) .. controls (3.31,0.3) and (6.95,1.4) .. (10.93,3.29)   ;
\draw [shift={(171.5,127)}, rotate = 0] [color={rgb, 255:red, 0; green, 0; blue, 0 }  ][line width=0.75]    (10.93,-3.29) .. controls (6.95,-1.4) and (3.31,-0.3) .. (0,0) .. controls (3.31,0.3) and (6.95,1.4) .. (10.93,3.29)   ;
\draw    (273.5,126) -- (327.5,126) ;
\draw [shift={(329.5,126)}, rotate = 180] [color={rgb, 255:red, 0; green, 0; blue, 0 }  ][line width=0.75]    (10.93,-3.29) .. controls (6.95,-1.4) and (3.31,-0.3) .. (0,0) .. controls (3.31,0.3) and (6.95,1.4) .. (10.93,3.29)   ;
\draw [shift={(271.5,126)}, rotate = 0] [color={rgb, 255:red, 0; green, 0; blue, 0 }  ][line width=0.75]    (10.93,-3.29) .. controls (6.95,-1.4) and (3.31,-0.3) .. (0,0) .. controls (3.31,0.3) and (6.95,1.4) .. (10.93,3.29)   ;
\draw  [fill={rgb, 255:red, 208; green, 2; blue, 27 }  ,fill opacity=0.4 ][dash pattern={on 0.84pt off 2.51pt}] (56,24) -- (65.5,24) -- (65.5,47) -- (56,47) -- cycle ;
\draw  [fill={rgb, 255:red, 208; green, 2; blue, 27 }  ,fill opacity=0.4 ][dash pattern={on 0.84pt off 2.51pt}] (116,24) -- (125.5,24) -- (125.5,47) -- (116,47) -- cycle ;
\draw  [fill={rgb, 255:red, 208; green, 2; blue, 27 }  ,fill opacity=0.4 ][dash pattern={on 0.84pt off 2.51pt}] (176,24) -- (185.5,24) -- (185.5,47) -- (176,47) -- cycle ;
\draw  [fill={rgb, 255:red, 208; green, 2; blue, 27 }  ,fill opacity=0.4 ][dash pattern={on 0.84pt off 2.51pt}] (276,24) -- (285.5,24) -- (285.5,47) -- (276,47) -- cycle ;
\draw  [fill={rgb, 255:red, 208; green, 2; blue, 27 }  ,fill opacity=0.4 ][dash pattern={on 0.84pt off 2.51pt}] (56,84) -- (111.5,84) -- (111.5,107) -- (56,107) -- cycle ;
\draw  [fill={rgb, 255:red, 208; green, 2; blue, 27 }  ,fill opacity=0.4 ][dash pattern={on 0.84pt off 2.51pt}] (274,84) -- (329.5,84) -- (329.5,107) -- (274,107) -- cycle ;

\draw (27,27.4) node [anchor=north west][inner sep=0.75pt]    {$A$};
\draw (27,87.4) node [anchor=north west][inner sep=0.75pt]    {$B$};
\draw (72,129.4) node [anchor=north west][inner sep=0.75pt]    {$k$};
\draw (134,129.4) node [anchor=north west][inner sep=0.75pt]    {$k$};
\draw (194,129.4) node [anchor=north west][inner sep=0.75pt]    {$k$};
\draw (238,117.4) node [anchor=north west][inner sep=0.75pt]    {$\dotsc $};
\draw (294,128.4) node [anchor=north west][inner sep=0.75pt]    {$k$};
\draw (338,117.4) node [anchor=north west][inner sep=0.75pt]    {$\dotsc $};

\end{tikzpicture}

	\caption{$k = \epsilon \sqrt{d}$. Only the elements colored in red are included in the two sets.}\label{fig:f4-intro}
\end{figure}

Although the intuition comes from the inner workings of the quantum techniques, we are actually able to improve the runtime with a completely combinatorial idea. For small $d$, instead of constructing two sets $S_A$ and $S_B$ with $n-d+1$ elements, we construct two sets of size $O(n/\sqrt{d})$ and prove that if the two sets have an element in common, then there is a solution of size at least $(1-\epsilon)d$. The construction of the two sets is exactly the same as the construction of $S_A$ and $S_B$ except that only some of the elements are present in the new subsets (see Figure \ref{fig:f4-intro} for an illustration of the construction).

We prove that since each set now has size $O(n/\sqrt{d})$, if the two strings have an \textsf{LCS} of size $d$ then a solution of size $(1-\epsilon)d$ can be found in time $\tilde O((n/\sqrt{d})^{2/3}\sqrt{d})=\tilde O(n^{2/3}d^{1/6})$. This combined with the $\tilde O(n/\sqrt{d})$ algorithm for large $d$ gives us a $1-\epsilon$ approximation algorithm with runtime $\tilde O(n^{3/4})$.

\begin{theorem}
	\label{theorem:lcs-approx}
	\TheoremLCSApprox
\end{theorem}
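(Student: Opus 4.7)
The plan is to prove Theorem~\ref{theorem:lcs-approx} by combining two complementary quantum subroutines, one efficient for large $d$ and one efficient for small $d$, and dispatching between them via a binary search over the target length $d \in \{1, \ldots, n\}$ (at the cost of only a logarithmic overhead). For each candidate $d$, the task reduces to deciding whether a common substring of length at least $(1-\epsilon) d$ exists.

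The large-$d$ subroutine is the one already sketched in the introduction: sample a uniformly random $i$, set $P = A[i,\,i+(1-\epsilon)d-1]$, and run the Ramesh--Vinay quantum pattern matching algorithm on $P$ and $B$ in time $\tilde O(\sqrt n)$. If the LCS has length at least $d$, then with probability $\Omega(\epsilon d/n)$ the sampled $P$ lies entirely inside an optimal common substring and the search succeeds. Amplitude amplification boosts this success probability to a constant with overhead $\tilde O(\sqrt{n/d})$, giving total runtime $\tilde O(n/\sqrt d)$.

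The small-$d$ subroutine is the main new ingredient. I would construct two subsampled sets $S_A$ and $S_B$ of length-$L$ substrings with $L = (1-\epsilon)d$, each indexed by a sparse arithmetic progression of starting positions, so that (i) $|S_A| = |S_B| = O(n/\sqrt d)$ and (ii) any length-$d$ common substring $A[a,\,a+d-1] = B[b,\,b+d-1]$ forces at least one pair $(i,j) \in S_A \times S_B$ with $A[i,\,i+L-1] = B[j,\,j+L-1]$. The natural way to achieve both simultaneously is to use two coprime strides $p, q = \Theta(\epsilon \sqrt d)$ with $pq \le \epsilon d$, sampling $S_A$ at starting positions $\equiv 1 \pmod p$ and $S_B$ at starting positions $\equiv 1 \pmod q$; the Chinese Remainder Theorem then produces, for every alignment $(a,b)$, a shift $t \in [0, pq]$ with $a+t \equiv 1 \pmod p$ and $b+t \equiv 1 \pmod q$, so the pair $(a+t, b+t)$ lies in $S_A \times S_B$ and witnesses a length-$L$ equality entirely inside the common substring.

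With such a construction in hand, the problem reduces to claw-finding on $S_A$ and $S_B$ under substring equality. Equality plus lexicographic ordering of two length-$L$ substrings is computed in time $\tilde O(\sqrt d)$ by Grover search for the leftmost mismatch. Plugging $N = O(n/\sqrt d)$ and $T = \tilde O(\sqrt d)$ into the element-distinctness bound $\tilde O(N^{2/3} T)$ yields $\tilde O((n/\sqrt d)^{2/3} \sqrt d) = \tilde O(n^{2/3} d^{1/6})$. Balancing this against $n/\sqrt d$ gives a crossover at $d = \sqrt n$, at which both routines run in $\tilde O(n^{3/4})$, proving the theorem. The principal technical obstacle is verifying property (ii) of the sparse construction: that sparsifying to $O(n/\sqrt d)$ starting positions still captures every possible alignment of a length-$d$ common substring with only an $\epsilon$ loss in length. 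All other ingredients (pattern matching, Grover, amplitude amplification, element distinctness) are invoked as black boxes from the Quantum Components subsection.
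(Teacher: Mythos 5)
Your proposal is correct and follows essentially the same route as the paper: a binary search over $d$ dispatching between the $\tilde O(n/\sqrt d)$ sampled-pattern-matching routine for large $d$ and an $\tilde O(n^{2/3}d^{1/6})$ claw-finding routine on subsampled sets of size $O(n/\sqrt d)$ for small $d$, with the crossover at $d=\sqrt n$. The only difference is the subsampling construction inside the small-$d$ lemma — you take two coprime arithmetic progressions of stride $\Theta(\sqrt{\epsilon d})$ and invoke CRT to find a common shift $t<pq\le\epsilon d$, whereas the paper keeps $S_A$ on a single stride-$k$ progression ($k=\sqrt{\epsilon d}$) and lets $S_B$ consist of contiguous blocks of $k$ positions out of every $k^2$ — but both constructions are valid and yield the same covering property and the same set sizes.
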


\paragraph{\textsf{LCS} for non-repetitive strings.}
For all string problems, a special case which is of particular interest is when the characters are different. For instance, although  DNA's consist of only 4 characters, one can make the representation more informative by assigning a symbol to every meaningful block of the sequence. This way, there is only little chance a character appears several times in the sequence. Similarly, in text recognition, one may rather represent every word with a symbol of the alphabet resulting in a huge alphabet and strings with low repetitions. These scenarios are motivating examples for the study of edit distance and longest common subsequence under this assumption, known respectively as Ulam distance \cite{Andoni+SODA10,Charikar06,Naumovitz+SODA17} and longest increasing subsequence (which has been the target of significant research by the string algorithms community --- see, e.g.,~\cite{Saeed+STOC20} for references).

We thus consider \textsf{LCS} for input strings $A$ and $B$ that are non-repetitive (an important special case is when $A$ and $B$ are permutations). We show that there exists an $\tilde O(n^{3/4})$-time quantum algorithm for exact $\textsf{LCS}$ and an $\tilde O(n^{2/3})$-time quantum algorithm  for approximate $\textsf{LCS}$. This significantly improves the generic results of Theorems \ref{theorem:lcs-exact} and \ref{theorem:lcs-approx} and, for approximate $\textsf{LCS}$, this matches (up to possible polylogarithmic factors) the lower bound of Theorem~\ref{theorem:lcs-approx-LB-easy}. 
\begin{theorem}\label{theorem:lcs-exact-perm}
\TheoremLCSExactPerm
\end{theorem}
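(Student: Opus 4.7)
Plan. I proceed by binary search on the LCS length $d$, reducing to a decision problem for each candidate $d$, and pay an $O(\log n)$ overhead. For each $d$ I use one of two algorithms depending on whether $d$ is above or below the balance point $\sqrt n$.

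For the \textbf{large-$d$ regime} ($d\ge \sqrt n$), I adapt the sampling-plus-extension strategy underlying the proof of Theorem~\ref{theorem:lcs-exact}, but greatly simplified by non-repetitiveness. I sample a uniform position $i\in[n-d+1]$ and compute $\pi(i)$, the unique position in $B$ with $B[\pi(i)]=A[i]$, using Grover's search in time $\tilde O(\sqrt n)$; uniqueness follows because $B$ is non-repetitive, so the delicate periodicity arguments of the general algorithm are not needed and the extension step is straightforward. Given $(i,\pi(i))$, a single Grover search over the window of size $d$ tests in $\tilde O(\sqrt d)$ queries whether $A[i+k]=B[\pi(i)+k]$ for every $k\in[0,d-1]$. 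Since an LCS of length $\ge d$ contains at least $d$ positions, a uniform $i$ lies in it with probability $\ge d/n$, so amplitude amplification with $\tilde O(\sqrt{n/d})$ trials succeeds w.h.p., giving total cost $\tilde O(\sqrt{n/d}\cdot(\sqrt n+\sqrt d))=\tilde O(n/\sqrt d)\le \tilde O(n^{3/4})$.

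For the \textbf{small-$d$ regime} ($d<\sqrt n$) the above becomes too slow, so I cast the decision problem as a claw-finding problem on length-$d$ substrings of $A$ and $B$. Define elements $x_i=(i,A[i,i+d-1])$ for $i\in[n-d+1]$ and $y_j=(j,B[j,j+d-1])$ for $j\in[n-d+1]$, declaring $x_i$ equal to $y_j$ when $A[i,i+d-1]=B[j,j+d-1]$. The generic element-distinctness pipeline with $\tilde O(\sqrt d)$-cost comparison (Grover over a mismatch) yields $\tilde O(n^{2/3}\sqrt d)$, which is far too slow at $d=\sqrt n$. The key point is to exploit non-repetitiveness: for any character $c$ that appears in both $A$ and $B$ there is a \emph{unique} alignment determined by $c$, so across the whole input there are at most $n$ candidate starting pairs $(i,\pi(i))$ that could possibly extend into a length-$d$ common substring. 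I use this sparsity inside Ambainis' Johnson-graph quantum walk: within an $r$-subset, candidate character matches are detected in $O(r)$ by a hash table on first characters, and verification of a candidate's full length-$d$ equality is performed with Grover in $\tilde O(\sqrt d)$; combining the two via a Grover-accelerated check gives a check cost of $\tilde O(\sqrt{rd})$, and choosing $r$ to balance setup, update, and check costs gives $\tilde O(n^{3/4})$ overall in this regime.

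The \textbf{main obstacle} is the small-$d$ analysis. Without using non-repetitiveness at all, the walk parameters balance to $\tilde O(n^{5/6})$, which corresponds to Theorem~\ref{theorem:lcs-exact}. Improving this to $\tilde O(n^{3/4})$ requires two simultaneous gains from non-repetitiveness: first, using the fact that each subset of size $r$ contains only $O(r)$, rather than $O(r^2)$, candidate pairs to verify; second, ensuring that the number of "false-positive" candidates (alignments that match at the first character but fail somewhere in the interior) does not blow up the amplification cost, which follows because each character match in a non-repetitive input is unique. The delicate combinatorial accounting that translates these two facts into the $\tilde O(n^{3/4})$ bound — together with verifying that the Ambainis-walk cost formula $S+\tfrac{1}{\sqrt{\epsilon}}(\sqrt r\, U+C)$ really yields $\tilde O(n^{3/4})$ at the chosen parameters — is where the bulk of the technical work resides.
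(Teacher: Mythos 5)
Your overall architecture matches the paper's: binary search on $d$, an $\tilde O(n/\sqrt{d})$ sampling-based routine for $d\ge\sqrt n$, and a tailored Johnson-graph quantum walk for $d<\sqrt n$, with the crossover at $d=\sqrt n$ giving $\tilde O(n^{3/4})$. Your large-$d$ routine is a genuinely different (and simpler) subroutine than the paper's, which just reuses the general-string algorithm of Lemma~\ref{lemma:secondone}; exploiting uniqueness of $\pi(i)$ to avoid the periodicity case analysis is a nice observation. One detail there needs fixing: testing whether $A[i+k]=B[\pi(i)+k]$ for all $k\in[0,d-1]$ only succeeds when $i$ is the \emph{starting} position of a length-$d$ occurrence, which destroys the claimed $d/n$ success probability (it drops to $1/n$). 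You must instead extend the alignment $(i,\pi(i))$ in \emph{both} directions by Grover (each extension capped at $d$) and accept if the maximal common substring through $(i,\pi(i))$ has length at least $d$; this is easily repaired and the cost is unchanged.

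The substantive gap is in the small-$d$ walk. You bound the number of first-character candidate pairs inside $(R_1,R_2)$ by $O(r)$ (one per element of $R_1$, by non-repetitiveness) and conclude a checking cost of $C=\tilde O(\sqrt{rd})$. Plugging into $S+\tfrac{1}{\sqrt{\delta}}(\sqrt r\,U+C)$ with $\delta=\Theta(r^2/n^2)$ gives $\tilde O\bigl(r+n/\sqrt r+n\sqrt{d}/\sqrt r\bigr)$, which is minimized at $r=n^{2/3}d^{1/3}$ and yields $\tilde O(n^{2/3}d^{1/3})$ — at the crossover with $n/\sqrt d$ this balances to $\tilde O(n^{4/5})$, not $\tilde O(n^{3/4})$. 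The missing idea is that for a \emph{random} pair $(R_1,R_2)$ the number of candidates concentrates around $r^2/n$, not $r$: each $i\in R_1$ has a unique partner $j$ in all of $[n]$, which lands in $R_2$ with probability $r/n$, and a hypergeometric tail bound shows the count exceeds $O(r^2/n+\log n)$ with probability at most $1/\poly(n)$. Since the MNRS framework needs a \emph{worst-case} bound on the checking cost, one must additionally redefine the marked states to exclude the (negligibly few) pairs $(R_1,R_2)$ with too many candidates, check that this barely reduces the marked fraction, and have the checking procedure first count the stored candidates and reject if the cap is exceeded. This yields $C=\tilde O\bigl(\sqrt{\ceil{r^2/n+1}}\cdot\sqrt d\bigr)$ and hence the extra term $\tfrac{n}{r}\cdot\tfrac{r}{\sqrt n}\sqrt d=\sqrt{nd}\le n^{3/4}$ for $d<\sqrt n$, which is exactly what the $\tilde O(n^{3/4})$ bound requires. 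Your sentence about controlling ``false positives'' gestures at this but the justification you give (uniqueness of each character match) only supports the $O(r)$ bound, so as written the claimed balance to $\tilde O(n^{3/4})$ does not go through.
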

\begin{theorem}\label{theorem:lcs-approx-perm}
\TheoremLCSApproxPerm
\end{theorem}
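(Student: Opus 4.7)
The plan is to binary search over the candidate LCS length $d$ and combine two algorithms tailored to the regime of $d$.

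For $d \geq n^{2/3}$ (large-$d$ regime), I would apply the large-$d$ strategy from the proof of Theorem~\ref{theorem:lcs-approx}: sample $i$ uniformly from $[n]$, form the pattern $P = A[i, i+(1-\epsilon)d-1]$, and use the Ramesh--Vinay quantum pattern matching algorithm to search for $P$ in $B$ in $\tilde O(\sqrt n)$ time. A single round succeeds with probability $\Omega(d/n)$; $\tilde O(\sqrt{n/d})$ rounds of amplitude amplification boost this to $9/10$, yielding total runtime $\tilde O(n/\sqrt d) \leq \tilde O(n^{2/3})$. This step does not use non-repetition at all.

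For $d < n^{2/3}$ (small-$d$ regime), I would exploit non-repetition via a quantum walk in the spirit of element distinctness. The key observation is that when $A$ and $B$ are non-repetitive, a single character match $A[a] = B[b]$ uniquely determines the alignment $b - a$, since each character appears at most once in each string. Thus finding a length-$(1-\epsilon)d$ common substring reduces to finding an ``anchor'' pair $(a, b)$ with $A[a] = B[b]$ whose induced alignment extends to such a substring. First, subsample $A$ at interval $k = \Theta(\epsilon d)$ to form $L_A \subseteq [n]$ of size $O(n/d)$; by the approximation slack, when $\mathrm{LCS}(A,B)\geq d$ some $a \in L_A$ yields a valid length-$(1-\epsilon)d$ common substring. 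Then run an Ambainis-style quantum walk on pairs of subsets $(T_A, T_B)$ with $T_A \subseteq L_A$ and $T_B \subseteq [n]$, each of size $r$ to be chosen; a state is marked iff $T_A \times T_B$ contains an anchor pair whose extension verifies.

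Because the values on each side are single characters, comparisons within the walk cost $O(1)$; the expensive part is extension verification, which costs $\tilde O(\sqrt d)$ per candidate matching pair via Grover. Non-repetition of $B$ bounds the number of matching pairs inside a random $(T_A, T_B)$ by $O(r^2/n)$ in expectation (each $a \in L_A$ has at most one matching $b \in [n]$), and the density of marked states, from the existence of at least one good anchor pair, is at least $\Omega(r^2 d/n^2)$. Plugging these into the Ambainis walk cost gives roughly $\tilde O(r + n/\sqrt{r d})$, minimized at $r \approx n^{2/3}/d^{1/3}$ to yield $\tilde O(n^{2/3})$ for every $d \in [1, n^{2/3}]$.

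The main obstacle will be the careful design of the walk's data structure (keeping $T_A, T_B$ sorted by character value so that matching pairs can be detected in polylog time, and amortizing the extension-verification cost over all expected matching pairs within a subset) and the verification that the marked-state density lower bound $\Omega(r^2 d/n^2)$ holds uniformly for any instance with $\mathrm{LCS}\geq d$. The approximation slack $\epsilon$ is essential to ensure that enough anchor pairs produce valid length-$(1-\epsilon)d$ extensions, while non-repetition is essential to control the number of spurious character matches that would otherwise dominate the check step.
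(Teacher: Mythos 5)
Your proposal follows essentially the same route as the paper: binary search on $d$, reuse the sample-and-pattern-match algorithm (Lemma~\ref{lemma:secondone}) for large $d$, and for small $d$ run a quantum walk on a direct product of Johnson graphs in which non-repetition caps the expected number of single-character matches $A[a]=B[b]$ inside a state at $O(r^2/n)$, and the approximation slack inflates the marked-state density by a factor of $\Theta(\epsilon d)$ to $\Omega(dr^2/n^2)$, yielding $\tilde O(n^{2/3}/d^{1/3}+\sqrt{n})$ and hence $\tilde O(n^{2/3})$ overall. The one structural difference is in how the slack is cashed in: you subsample $A$ at interval $\Theta(\epsilon d)$ to get a reduced universe $L_A$ of size $O(n/d)$ and walk on $T_A\subseteq L_A$, $T_B\subseteq[n]$ (an asymmetric product), whereas the paper keeps the symmetric product $R_1,R_2\subseteq\{1,\dots,n-d+1\}$ and uses the fact that there are $\Theta(\epsilon d)$ good starting-position pairs to show $R_1$ hits a good first coordinate with probability $\Omega(rd/n)$ directly (Lemma~\ref{lemma:ratio-approx}). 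These are equivalent in effect, and both require the same constraint $r\lesssim n/d$; the paper's version is mildly cleaner in that it avoids a second discretization on top of the binary search.

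The one point you flag but do not resolve is genuinely the key technicality: Proposition~\ref{fact_magniez} needs a \emph{worst-case}, not expected, bound on the checking cost $c(r)$, and ``amortizing'' the extension checks over the expected number of matches is not the right way to obtain it. The paper handles this by making the marking criterion two-sided --- a state $(R_1,R_2)$ is marked only if (i) it contains a good pair \emph{and} (ii) the number of character matches is at most $\lceil\alpha r^2/n\rceil$ for $\alpha=54\log n$ --- and then proving, via a Chernoff bound for the hypergeometric count of matches in a random $(R_1,R_2)$, that the fraction of states violating (ii) is $O(1/n^3)$, negligible against $\Omega(dr^2/n^2)$. With this the checking procedure can abort as soon as it sees too many matches, giving a uniform bound $c(r)=\tilde O\bigl(\sqrt{\lceil\alpha(r^2/n+1)\rceil}\cdot\sqrt{d}\bigr)$. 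You would need to incorporate the same device; once you do, your cost formula should also include the $n/r$ and $\sqrt{n}$ terms coming from the $+1$ inside the square root, but these do not change the conclusion at $r=n^{2/3}/d^{1/3}$.
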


The improvements for non-repetitive strings are obtained by improving the complexity of the first part of the algorithm used for general $\mathsf{LCS}$ from $\tilde O(n^{2/3}\sqrt{d})$ to $\tilde O(n^{2/3}+\sqrt{nd})$ for the case of exact $\textsf{LCS}$, and from $\tilde O(n^{2/3}d^{1/6})$ to $\tilde O(n^{2/3}/d^{1/3}+\sqrt{n})$ for approximate $\textsf{LCS}$. We now briefly describe how we achieve such improvements. 

Let us first consider exact $\textsf{LCS}$. The $\tilde O(n^{2/3}\sqrt{d})$-time quantum algorithm described above was based on a ``black-box reduction'' to element distinctness, i.e., we used the quantum algorithm for element distinctness (which is based, as already mentioned, on a technique known as quantum walk) as a black-box. In comparison, our new algorithm is constructed by designing a quantum walk especially tailored for our problem. More precisely, we use the approach by Magniez, Nayak, Roland and Santha \cite{Magniez+SICOMP11} to design a quantum walk over the Johnson graph (more precisely, we work with a graph defined as the direct product of two Johnson graphs, which is more convenient for our purpose). 


We say that a pair $(i,j)\in [n-d+1]\times [n-d+1]$ is marked if there is a common substring of length $d$ that starts at position $i$ in $A$ and position $j$ in~$B$, i.e., $A[i,i+d-1]=B[j,j+d-1]$. The goal of our quantum walk is to find a pair of subsets $(R_1, R_2)$ where $R_1$ and $R_2$ are two subsets of $[n-d+1]$ of size $r$ (for some parameter $r$) such that there exists a marked pair $(i,j)$ in $R_1\times R_2$. Note that since the strings are non-repetitive, for two random subsets $R_1, R_2$, the expected number of pairs $(i,j)\in R_1\times R_2$ such that $A[i]=B[j]$ is roughly $\Theta(r^2/n)$. A simple but crucial observation is that only those pairs can be marked since marked pairs should agree on their first character. Thus for two random subsets $R_1, R_2$ we can check if there exists a marked pair in $R_1\times R_2$ in expected time $\tilde O(\sqrt{r^2/n})$ using Grover search, since we have only $\Theta(r^2/n)$ candidates for marked pairs. This significantly improves over the upper bound $\tilde O(\sqrt{r^2})$ we would get without using the assumption that the strings are non-repetitive. This is how we can improve the complexity down to $\tilde O(n^{2/3}+\sqrt{nd})$. Note that a technical difficulty that we need to overcome is guaranteeing that the running time of the checking procedure is small not only for random subsets but for also all $(R_1,R_2)$, i.e., we need a guarantee on the worst-case running time of the checking procedure. We solve this issue by disregarding the pairs of subsets $(R_1,R_2)$ that contain too many candidates --- we are able to prove that the impact is negligible by using concentration bounds.

The improvement for approximate $\textsf{LCS}$ uses a very similar idea. The main difference is that now we consider a pair $(i,j)\in [n-d+1]\times [n-d+1]$ marked if there is a common substring of length $\ceil{(1-\epsilon)d}$ that starts at position $i$ in $A$ and position $j$ in $B$. Since the fraction of marked pairs increases by a factor $\epsilon d$ we obtain a further improvement. Analyzing the running time of the resulting quantum walk shows that we obtain overall time complexity $\tilde O(n^{2/3}/d^{1/3}+\sqrt{n})$.

\subsection{Ulam distance}
Finally, we present a sublinear time quantum algorithm that computes a $(1+\epsilon)$-approximation of the Ulam distance (i.e., the edit distance for non-repetitive strings).
\begin{theorem}\label{theorem:ulam}
\TheoremUlam
\end{theorem}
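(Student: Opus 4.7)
The plan is to reduce $(1+\epsilon)$-approximating $\edit(A,B)$ to estimating, up to a $(1\pm\Theta(\epsilon))$ multiplicative factor, the probability of a Boolean local event. The starting combinatorial observation is a displacement bound: any sequence of $d=\edit(A,B)$ insertions and deletions transforming $A$ into $B$ induces an order-preserving partial matching between positions in which each matched character $A[i]$ is paired with some position $j$ of $B$ satisfying $|i-j|\le d$, while exactly $d/2$ positions of $A$ are unmatched. For a threshold $d^*$, define $\mathsf{Bad}(i)$ to be the indicator that $A[i]$ does not occur in the window $B[\max(1,i-d^*),\min(n,i+d^*)]$, and set $p(d^*)=\Pr_{i\sim[n]}[\mathsf{Bad}(i)=1]$. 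The displacement bound yields $p(d^*)\le d/(2n)$ whenever $d\le d^*$, and exploiting the non-repetitiveness of $A$ and $B$ one can establish a matching converse bound at the correct scale.

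With this statistic in hand, I would run a binary search over the $O(\log n)$ candidate values $d^*\in\{1,2,4,\ldots,n\}$ and decide, at each scale, whether $d\le d^*$ or $d\ge (1+\Theta(\epsilon))d^*$. For each guess, $\mathsf{Bad}(i)$ is implemented by running Grover search for the single character $A[i]$ inside the window of length $O(d^*)$ in $B$, at cost $\tilde O(\sqrt{d^*})$. Amplitude estimation applied to the quantum circuit that uniformly samples $i$ and outputs $\mathsf{Bad}(i)$ then returns, at the threshold where $p(d^*)=\Theta(d^*/n)$, a $(1\pm\Theta(\epsilon))$-multiplicative estimate of $p(d^*)$ using $\tilde O(\sqrt{n/d^*}/\epsilon)$ invocations of $\mathsf{Bad}$. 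The per-scale cost is thus $\tilde O\bigl(\sqrt{n/d^*}\cdot\sqrt{d^*}/\epsilon\bigr)=\tilde O(\sqrt n/\epsilon)$, and summing over the $O(\log n)$ scales keeps the overall running time at $\tilde O(\sqrt n)$ for constant~$\epsilon$.

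The main obstacle will be upgrading $p(d^*)$ from a constant-factor to a $(1+\epsilon)$-multiplicative estimator of $d/(2n)$. The inequality $p(d^*)\le d/(2n)$ is immediate from the displacement bound, but its converse is delicate: a position $i$ that is unmatched under the optimal alignment may still be classified as \emph{good} by $\mathsf{Bad}$ simply because $A[i]$ happens to occur elsewhere in the window for reasons unrelated to the best transformation. The hard part will be controlling this via a charging argument that pairs each such ``false negative'' with a distinct bounded-length run of edits, exploiting non-repetitiveness so that no character in $B$ is over-counted. Choosing the window radius as $c(\epsilon)\cdot d^*$ for a suitable $c(\epsilon)$, handling the boundary behaviour at the ends of $A$ and $B$, and combining the amplitude-estimation guarantees cleanly across the binary-search levels will be the technical core of the proof.
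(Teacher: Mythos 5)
Your high-level quantum strategy (a multi-scale search where at each scale a cheap Boolean local test is fed into amplitude estimation, giving per-scale cost $\tilde O(\sqrt{n/d^\ast}\cdot\sqrt{d^\ast})=\tilde O(\sqrt{n})$) is essentially the same as the paper's, which combines a classical indicator with a quantum gap test over geometrically decreasing thresholds. However, the classical estimator you propose is broken, and the gap is not merely the ``delicate converse'' you flag --- it is fatal. Your statistic $p(d^\ast)$, the fraction of positions $i$ for which the character $A[i]$ does not occur in the window $B[i-O(d^\ast),\,i+O(d^\ast)]$, satisfies only the one-sided bound $p(d^\ast)\le d/(2n)$; the converse fails by an unbounded factor. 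Concretely, let $A=12\cdots n$ and let $B$ be obtained from $A$ by swapping the two elements in each of the $n/2$ disjoint adjacent pairs, i.e.\ $B=2\,1\,4\,3\,6\,5\cdots$. Every character $A[i]$ occurs in $B$ within distance $1$ of position $i$, so $p(d^\ast)=0$ for every window radius $\ge 1$; yet the longest common subsequence of $A$ and $B$ has length only $n/2$, so $\edit(A,B)=n$. No charging argument or choice of window radius $c(\epsilon)\cdot d^\ast$ can repair this, because the event ``$A[i]$ appears somewhere nearby in $B$'' is insensitive to the \emph{order} in which those nearby occurrences appear, and Ulam distance is precisely an order statistic (it is $2(n-\mathrm{LIS})$ for permutations). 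This is why the paper does not attempt to build its own local test: it invokes, as a black box, the $\tilde O(\sqrt{t'})$-time classical indicator $\textsf{UlamIndic}$ of Naumovitz--Saks--Seshadhri (Proposition \ref{prop:Ulam}), whose acceptance probability is provably within a $(1\pm\delta)$ factor of $\edit(A,B)/(n+t')$ and which internally runs the Saks--Seshadhri longest-increasing-subsequence estimator on random windows --- exactly the order-sensitive machinery your $\mathsf{Bad}(i)$ lacks. The remaining content of the paper's proof is then the quantum gap test (Proposition \ref{prop:qtest}) and the bookkeeping needed because the indicator requires an a priori upper bound $t'\ge c\cdot\edit(A,B)$ to be valid, which is handled by sweeping $t$ downward from $\Theta(\sqrt{n})$ and falling back to the classical algorithm when $\edit(A,B)=\Omega(\sqrt{n})$.
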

In comparison, classical algorithms require linear time even for computing a constant-factor approximation of the Ulam distance when the distance is small (see, e.g., \cite{Andoni+SODA10} for a discussion of the classical lower bounds). Theorem \ref{theorem:ulam} thus shows that while for general strings it is still unknown whether a quantum speed-up is achievable for the computation of  the edit distance, we can obtain a quadratic speed-up for non-repetitive strings. In Section \ref{sec:lb} we show a quantum lower bound (see Theorem \ref{theorem:ulam-approx-LB}) that matches (up to possible polylogarithmic factors) the upper bound of Theorem~\ref{theorem:ulam}. Since it is easy to show that any quantum algorithm that computes the Ulam distance exactly requires $\Omega(n)$ time,
our results are thus essentially optimal.

Let us now describe briefly how the quantum algorithm of Theorem \ref{theorem:ulam} is obtained. Our approach is based on a prior work by Naumovitz, Saks, and Seshadhri \cite{Naumovitz+SODA17} that showed how to construct, for any constant $\epsilon>0$,  a classical algorithm that computes a $(1+\epsilon)$-approximation of the Ulam distance and runs in sublinear time when $\edit(A,B)$ is large (the running time becomes linear when $\edit(A,B)$ is small). The core technique is a variant of the Saks-Seshadhri algorithm for estimating the longest increasing sequence from~\cite{Saks+FOCS10}, which can be used to construct a binary ``indicator'' (we denote this indicator $\textsf{UlamIndic}$ in Section \ref{sec:Ulam}) that outputs $1$ with probability $p$ and $0$ with probability $1-p$, for some value $p$ that is related to the value of $\edit(A,B)$. Conceptually, the approach is based on estimating the value of $\edit(A,B)$ from this indicator using a hierarchy of gap tests and estimators, each with successively better run time. This results in a fairly complex algorithm.

At a high level, our strategy is applying quantum amplitude estimation on the classical indicator $\textsf{UlamIndic}$ to estimate $p$ and thus $\edit(A,B)$. Several technical difficulties nevertheless arise since the indicator requires a rough initial estimation of $\edit(A,B)$ to work efficiently. To solve these difficulties, we first construct a quantum gap test based on quantum amplitude estimation that enables to test efficiently if the success probability of an indicator is larger than some given threshold~$q$ or smaller than $(1-\eta)q$ for some given gap parameter (see Proposition \ref{prop:qtest} in Section~\ref{sec:Ulam}). We then show how to apply this gap test to the indicator \textsf{UlamIndic} with successively better initial estimates of $\edit(A,B)$ in order to obtain a $(1+\epsilon)$-approximation of $\edit(A,B)$ in $\tilde O(\sqrt{n})$ time.
\section{Longest Common Substring}\label{sec:lcs}
Recall that in \textsf{LCS}, we are given two strings $A$ and $B$ of size $n$ and the goal is to find the largest string $t$ which is a substring of both $A$ and $B$. This problem can be solved in time $O(n)$ via suffix tree in the classical setting. In this section, we give upper bounds for the quantum complexity of this problem (lower bounds are discussed in Section \ref{sec:lb}). We first begin by giving an algorithm for exact \textsf{LCS} that runs in time $\tilde O(n^{5/6})$ in Section \ref{sub:lcs-exact}. We then give an algorithm for approximate \textsf{LCS} that runs in time $\tilde O(n^{3/4})$ in Section \ref{sub:lcs-approximate}. 

\subsection{Quantum algorithm for exact \textsf{LCS}}\label{sub:lcs-exact}
In our algorithm, we do a binary search on the size of the solution. We denote this value by $d$. Thus, by losing an $O(\log n)$ factor in the runtime, we reduce the problem to verifying if a substring of size $d$ is shared between the two strings. Our approach is twofold; we design separate algorithms for small and large $d$.

\subsubsection{Quantum algorithm for small $d$}
Our algorithm for small $d$ is based on a reduction to element distinctness. 
Let $|\Sigma|$ be the size of the alphabet and $v:\Sigma \rightarrow [0,|\Sigma|-1]$ be a function that maps every element of the alphabet to a distinct number in range $0 \ldots |\Sigma|-1$. In other words, $v$ is a perfect hashing for the characters. We extend this definition to substrings of the two strings. For a string $t$, we define $v(t)$ as follows:
$$v(t) = \sum_{i = 1}^{|t|} v(t_i) |\Sigma|^{i-1}.$$
It follows from the definition that two strings $t$ and $t’$ are equal if and only if we have $v(t) = v(t’)$. 

From the two strings, we make two sets of numbers $S_A$ and $S_B$ each having $n-d+1$ elements. Element $i$ of set $S_A$ is a pair $(A,i)$ whose value is equal to $v(A[i,i+d-1])$ and similarly element~$i$ of $S_B$ is a pair $(B,i)$ whose value is equal to $v(B[i,i+d-1])$. 
The two sets contain elements with equal values if and only if the two strings have a common substring of size~$d$. Therefore, by solving element distinctness for $S_A$ and $S_B$ we can find out if the size of the solution is at least~$d$. Although element distinctness can be solved in time $\tilde O(n^{2/3})$ when element comparison can be implemented in $\tilde O(1)$ time, our algorithm needs more runtime since comparing elements takes time $\omega(1)$. More precisely, each comparison can be implemented in time $\tilde O(\sqrt{d})$ in the following way: In order to compare two elements of the two sets, we can use Grover's search to find out if the two substrings are different in any position and if so we can find the leftmost position in time $\tilde O(\sqrt{d})$. Thus, it takes time $\tilde O(\sqrt{d})$ to compare two elements which results in runtime $\tilde O(n^{2/3} \sqrt{d})$. We summarize this result in the following lemma.

\begin{lemma}\label{lemma:firstone}
There exists a quantum algorithm that runs in time $\tilde O(n^{2/3}\sqrt{d})$ and verifies with probability $9/10$ if there is a common substring of length $d$ between the two strings.
\end{lemma}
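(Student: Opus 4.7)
\medskip
\noindent\textbf{Proof proposal.} The plan is to reduce the verification problem to the claw-finding (element distinctness) primitive stated in the preliminaries, and then to implement the comparison oracle required by that primitive via Grover search. First I would form two lists $S_A$ and $S_B$, each of size $n-d+1$, where the element with index $i$ in $S_A$ is the token $(A,i)$ and the element with index $i$ in $S_B$ is the token $(B,i)$. I would then define the comparison function $f$ so that $f((A,i))$ encodes the substring $A[i,i+d-1]$ and $f((B,j))$ encodes $B[j,j+d-1]$; under this definition a common substring of length $d$ exists if and only if there is a claw, i.e.\ a pair $(x,y)\in S_A\times S_B$ with $f(x)=f(y)$. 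Note that $f$ is never written down explicitly: it is only used as an abstract key in the claw-finding algorithm, which asks a sequence of comparison queries of the form ``given $\alpha,\beta\in S_A\cup S_B$, is $f(\alpha)=f(\beta)$, and otherwise which is smaller?''.

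Next I would implement one such comparison query in time $\tilde O(\sqrt{d})$. Given two tokens, the implicit substrings each have length $d$, and the task is to decide whether they are equal and, if not, to return the lexicographic order. For this I would invoke Grover search over the $d$ positions to find (if one exists) the leftmost index at which the two substrings disagree; by the standard Grover/minimum-finding bounds this uses $\tilde O(\sqrt{d})$ queries to the oracles $O_A$ and $O_B$ and succeeds with probability at least $1-1/\poly(n)$ after the usual logarithmic amplification. Once the leftmost mismatch (or its absence) is known, equality and order are both determined in $\tilde O(1)$ additional time.

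With the comparison subroutine in hand, I would feed $S_A$, $S_B$, and $f$ into the element distinctness / claw-finding algorithm quoted in the preliminaries, whose running time is $\tilde O(|S_A|^{2/3}\cdot T)$ where $T$ is the cost of a single comparison. Plugging in $|S_A|=n-d+1\le n$ and $T=\tilde O(\sqrt{d})$ yields the claimed bound $\tilde O(n^{2/3}\sqrt{d})$. The output ``claw found'' is returned as ``there is a common substring of length $d$''; the output ``no claw'' is returned as ``there is none''.

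The main thing to be careful about is the success probability, since the outer claw-finding routine issues $\poly(n)$ comparison queries and each inner Grover comparison is itself randomized. I would handle this by amplifying each comparison to failure probability $1/\poly(n)$ at a logarithmic overhead (absorbed into the $\tilde O$), and then taking a union bound over all comparisons performed by the claw-finding algorithm so that the total failure probability is at most $1/10$. No other obstacle seems to arise: the alphabet assumption $|\Sigma|=\poly(n)$ ensures a single character fits in $O(\log n)$ bits, the oracles $O_A,O_B$ give QRAM access, and the reduction is purely black-box over the quantum primitives already stated.
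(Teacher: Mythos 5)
Your proposal is correct and follows essentially the same route as the paper: reduce to the claw-finding/element-distinctness primitive over the two lists of length-$d$ substrings (indexed implicitly by tokens $(A,i)$ and $(B,j)$), and implement each comparison in $\tilde O(\sqrt{d})$ time by Grover search for the leftmost mismatch, giving $\tilde O(n^{2/3}\sqrt{d})$ overall. Your extra remark about amplifying each inner comparison to failure probability $1/\poly(n)$ and union-bounding over all comparisons is a detail the paper leaves implicit, but it is the standard and correct way to justify the stated success probability.
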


\subsubsection{Quantum algorithm for large $d$}
We now present a quantum algorithm that runs in time $\tilde O(n/\sqrt{d})$ and verifies if there is a common substring of length $d$ between the two strings. 

\paragraph{General description of the algorithm.}
We say that a character of $A$ is marked if it appears among the first $\lfloor d/3 \rfloor$ characters of some substring of length $d$ shared between $A$ and $B$. For example, If there is exactly one common substring of length $d$, there are precisely $\lfloor d/3 \rfloor$ marked characters but we may have more marked characters as the number of common subsequences of length $d$ between $A$ and $B$ increases.
In our algorithm, we sample a substring of length $2\lfloor d/3 \rfloor$ of $A$ and a substring of length $d$ of $B$. 
We call the substring sampled from $A$ the pattern and denote it by $P$ and denote the substring sampled from $B$ by~$S$.  We denote the intervals of $A$ and $B$ that correspond to $P$ and $S$ by $[\ell_P, r_P]$ and $[\ell_S, r_S]$ respectively. That is, $A[\ell_P, r_P] = P$ and $B[\ell_S, r_S] = S$.
We say that the pair $(P,S)$ is good if the following conditions hold (see Figure~\ref{fig:lcs0} for an illustration):
\begin{itemize}
\item There exists a pair $(i,j)$ such that $A[i,i+d-1] = B[j,j+d-1]$ is a common subsequence of size $d$ between the two strings.
\item $i \leq \ell_P$ and $\ell_P - i < \lfloor d/3 \rfloor$.
\item $\ell_S \leq j-i + \ell_P  \leq j-i + r_P \leq r_S$.
\end{itemize}
It directly follows that if $(P,S)$ is a good pair, $\ell_P$ has to be a marked character. When we fix a good pair $(P,S)$ and we refer to the optimal solution, we mean the common subsequence of size $d$ made by $A[i, i+d-1]$ and $B[j, j+d-1]$.

\begin{figure}[!htbp]
	\centering

\tikzset{every picture/.style={line width=0.75pt}} 

\begin{tikzpicture}[x=0.75pt,y=0.75pt,yscale=-1,xscale=1]

\draw   (52,60) -- (607,60) -- (607,91) -- (52,91) -- cycle ;
\draw  [fill={rgb, 255:red, 255; green, 210; blue, 132 }  ,fill opacity=0.5 ][dash pattern={on 0.84pt off 2.51pt}] (219.5,64) -- (420,64) -- (420,87) -- (219.5,87) -- cycle ;
\draw   (52,120) -- (607,120) -- (607,151) -- (52,151) -- cycle ;
\draw  [fill={rgb, 255:red, 189; green, 16; blue, 224 }  ,fill opacity=0.3 ][dash pattern={on 0.84pt off 2.51pt}] (248.5,124) -- (520,124) -- (520,147) -- (248.5,147) -- cycle ;
\draw  [dash pattern={on 0.84pt off 2.51pt}]  (219.5,87) -- (299.5,154) ;
\draw  [dash pattern={on 0.84pt off 2.51pt}]  (419.5,87) -- (500,154) ;
\draw  [fill={rgb, 255:red, 255; green, 210; blue, 132 }  ,fill opacity=0.5 ][dash pattern={on 0.84pt off 2.51pt}] (299.5,154) -- (500,154) -- (500,177) -- (299.5,177) -- cycle ;
\draw    (197.5,43.99) -- (508.5,43.01) ;
\draw [shift={(510.5,43)}, rotate = 539.8199999999999] [color={rgb, 255:red, 0; green, 0; blue, 0 }  ][line width=0.75]    (10.93,-3.29) .. controls (6.95,-1.4) and (3.31,-0.3) .. (0,0) .. controls (3.31,0.3) and (6.95,1.4) .. (10.93,3.29)   ;
\draw [shift={(195.5,44)}, rotate = 359.82] [color={rgb, 255:red, 0; green, 0; blue, 0 }  ][line width=0.75]    (10.93,-3.29) .. controls (6.95,-1.4) and (3.31,-0.3) .. (0,0) .. controls (3.31,0.3) and (6.95,1.4) .. (10.93,3.29)   ;
\draw    (277.5,197.99) -- (588.5,197.01) ;
\draw [shift={(590.5,197)}, rotate = 539.8199999999999] [color={rgb, 255:red, 0; green, 0; blue, 0 }  ][line width=0.75]    (10.93,-3.29) .. controls (6.95,-1.4) and (3.31,-0.3) .. (0,0) .. controls (3.31,0.3) and (6.95,1.4) .. (10.93,3.29)   ;
\draw [shift={(275.5,198)}, rotate = 359.82] [color={rgb, 255:red, 0; green, 0; blue, 0 }  ][line width=0.75]    (10.93,-3.29) .. controls (6.95,-1.4) and (3.31,-0.3) .. (0,0) .. controls (3.31,0.3) and (6.95,1.4) .. (10.93,3.29)   ;

\draw (27,67.4) node [anchor=north west][inner sep=0.75pt]    {$A$};
\draw (27,127.4) node [anchor=north west][inner sep=0.75pt]    {$B$};
\draw (317,67.4) node [anchor=north west][inner sep=0.75pt]    {$P$};
\draw (377,127.4) node [anchor=north west][inner sep=0.75pt]    {$S$};
\draw (245,18) node [anchor=north west][inner sep=0.75pt]   [align=left] {common substring of size $\displaystyle d$};
\draw (335,208) node [anchor=north west][inner sep=0.75pt]   [align=left] {common substring of size $\displaystyle d$};

\end{tikzpicture}

	\caption{An example of a good pair $(P,Q)$. The orange part of $S$ shows the part of $S$ that matches with~$P$.}\label{fig:lcs0}
\end{figure}
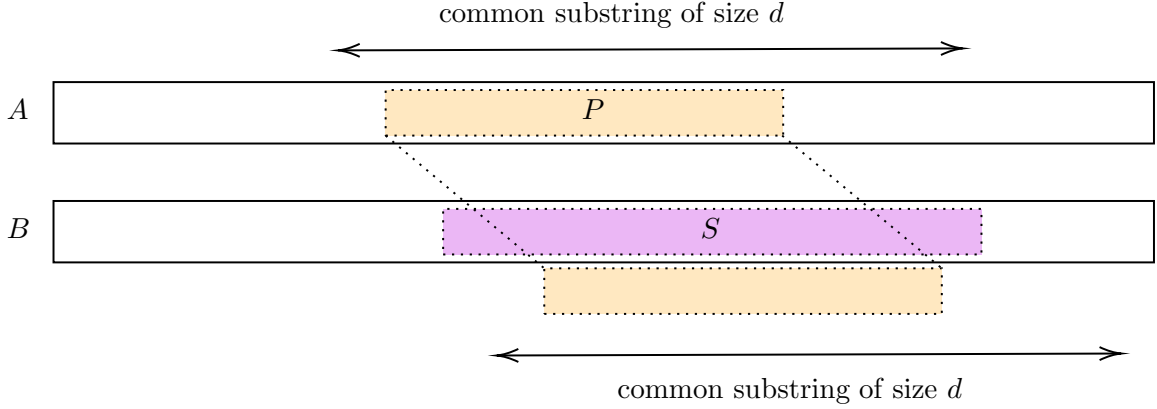

If there is no substring of length $d$ shared between $A$ and $B$, then obviously no good pair exists.
Let us now compute the probability of sampling such $P$ and $S$ under the assumption that there is a common substring of length $d$ shared between $A$ and $B$. There are at least $\lfloor d/3 \rfloor$ marked characters in $A$ and thus with probability $\Omega(d/n)$ we sample the right pattern. Moreover, the corresponding substring of solution in $B$ has at least $\lfloor d/3 \rfloor$ positions such that if we sample~$S$ starting from those positions the pair $(P,S)$ is good. Thus, with probability $\Omega(d^2/n^2)$ we sample a good pair $(P,S)$. 

We describe below a quantum procedure that given a good pair $(P,S)$, constructs with probability at least $1-1/\poly(n)$ a common substring of length $d$ in time $\tilde O(\sqrt{d})$. By first sampling $(P,S)$ and then running this procedure, we get a one-sided procedure that verifies if there exists a common substring of length $d$ shared between $A$ and $B$ with probability $\Omega(d^2/n^2)$ in time $\tilde O(\sqrt{d})$. With amplitude amplification, we can thus construct a quantum algorithm that verifies if there exists a common substring of length $d$ with high probability in time $\tilde O(\sqrt{n^2/d^2}\cdot \sqrt{d})=\tilde O(n/\sqrt{d})$. We summarize this result in the following lemma.

\begin{lemma}\label{lemma:secondone}
There exists a quantum algorithm that runs in time $\tilde O(n/\sqrt{d})$ and verifies with probability $9/10$ if there is a common substring of length $d$ between the two strings.
\end{lemma}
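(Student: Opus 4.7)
The plan is to follow the two-step structure indicated in the paper: first show that a uniformly random pair $(P,S)$ with $|P|=2\lfloor d/3\rfloor$ and $|S|=d$ is good with probability $\Omega(d^2/n^2)$, then build a quantum verifier $\mathcal{V}$ that, on any good pair, certifies and exhibits a common substring of length $d$ in time $\tilde O(\sqrt{d})$ with constant success probability. Wrapping the composition (sample $(P,S)$; run $\mathcal{V}$) in amplitude amplification boosts the success probability to $9/10$ at a multiplicative cost of $\sqrt{n^2/d^2}=n/d$, giving overall running time $\tilde O((n/d)\sqrt{d})=\tilde O(n/\sqrt{d})$. Because $\mathcal{V}$ only accepts after explicitly verifying $d$ matching characters, there are no false positives, so the resulting algorithm has the required one-sided error. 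The bound $\Omega(d^2/n^2)$ on the probability of $(P,S)$ being good is immediate from the $\geq\lfloor d/3\rfloor$ marked characters in $A$ and the $\geq\lfloor d/3\rfloor$ valid shifts in $B$ already identified in the text; the nontrivial content is the construction and analysis of $\mathcal{V}$.

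For $\mathcal{V}$ I would first use the Ramesh--Vinay quantum pattern-matching algorithm to find, in time $\tilde O(\sqrt{d})$, the leftmost and rightmost occurrences $p_L\le p_R$ of $P$ in $S$ (returning ``no'' if no occurrence exists; in the degenerate case $p_L=p_R$ there is a single alignment, for which the verifier directly computes the two extensions and accepts iff their sum is $\geq d-|P|$). Since $|P|>|S|/2$, any two occurrences of $P$ in $S$ must overlap, so by the Fine--Wilf theorem all occurrences form a single arithmetic progression $p_L, p_L+q,\ldots,p_R$ with common difference equal to the period $q$ of $P$; set $m=(p_R-p_L)/q+1$ and $q_j := \ell_S+p_L-1+(j-1)q$ for $j=1,\ldots,m$. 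I would then compute four auxiliary quantities, each via an $\tilde O(\sqrt{d})$-time quantum first-mismatch subroutine (a halving-window Grover search that locates the first index at which two aligned characters differ): $l_A$ and $r_A$, the maximal lengths by which the $q$-periodic extension of $P$ agrees with $A$ to the left and right of $[\ell_P, r_P]$; and $\delta_L,\delta_R$, the extensions of the occurrence cluster in $B$ to the maximal $q$-periodic run $I=[I_L,I_R]$ in $B$.

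For any alignment index $j\in[1,m]$, let $L_j$ (resp.\ $R_j$) be the largest left (resp.\ right) extension of the alignment that places $A[\ell_P]$ at $B[q_j]$; a common substring of length $d$ using the occurrence at $q_j$ exists exactly when $L_j+R_j\geq d-|P|$. The key claim I would prove is that $\max_{j\in[1,m]}(L_j+R_j)$ is attained in the constant-size set $\{1,m,j_L,j_R\}$, where $j_L := 1+(l_A-\delta_L)/q$ and $j_R := m-(r_A-\delta_R)/q$ (discarded when not integral or outside $[1,m]$). The proof rests on a structural description of $L_j$: for positions to the left of $q_j$ that still lie inside the run $I$, matching is controlled by $A$'s $q$-periodic agreement with $P$, giving $L_j=\min(l_A,(j-1)q+\delta_L)$, \emph{except} at the ``sweet spot'' $j=j_L$, where the two caps coincide and an extra outside-the-run extension $\alpha_L\geq 0$ becomes possible; for every other $j$ the maximality of $I$ (which forces $B[I_L-1]$ to disagree with the character predicted by the $q$-periodic extension of $P$, since $p_L$ is the leftmost occurrence) rules out any further extension past the run boundary. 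The symmetric statement holds for $R_j$ with peak at $j=j_R$. Since $L_j$ is non-decreasing and $R_j$ is non-increasing away from their peaks, $L_j+R_j$ is maximized at one of $\{1,m,j_L,j_R\}$. The verifier then computes $L_j$ and $R_j$ for these four alignments (each via another $\tilde O(\sqrt{d})$ first-mismatch search) and accepts iff some pair satisfies $L_j+R_j\geq d-|P|$, reading off the witnessing interval directly.

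The main obstacle is the structural claim that $O(1)$ alignments suffice. The natural first guess --- try only the extremal occurrences $q_1$ and $q_m$ --- is provably insufficient: one can construct small instances (for example, $P=aaaa$ sitting inside a constant-$a$ run of length $d$ in $B$, with carefully mismatching characters just outside the run in both $A$ and $B$) in which neither extremal alignment yields a length-$d$ extension yet a genuinely interior alignment does, because the breakpoints of $A$'s periodic extension and of the run $I$ line up only at that internal $q_j$. Pinpointing the two sweet-spot indices $j_L$ and $j_R$, and using the maximality of $I$ together with the leftmost/rightmost property of $p_L,p_R$ to rule out any other candidate interior alignment, is the heart of the argument. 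All remaining ingredients --- quantum pattern matching, first-mismatch search, and amplitude amplification --- are standard primitives recalled earlier in the paper.
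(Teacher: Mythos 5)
Your proposal follows essentially the same route as the paper: the same sampling of $(P,S)$ with $|P|=2\lfloor d/3\rfloor$, $|S|=d$, the same $\Omega(d^2/n^2)$ goodness probability and amplitude-amplification wrapper, the same use of Ramesh--Vinay to locate the extremal occurrences, the same periodicity observation forcing all occurrences into an arithmetic progression, and the same reduction to $O(1)$ candidate alignments determined by where the $q$-periodic runs in $A$ and $B$ terminate. Your ``sweet spots'' $j_L,j_R$ are exactly the paper's Cases 1 and 2 (the alignments at which the first non-periodic characters of the two runs face each other), and your capped piecewise-linear analysis of $L_j+R_j$ plays the role of the paper's Case 3. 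The bookkeeping is arguably more explicit than the paper's, but it is the same argument.

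There is one concrete slip in the candidate set: you discard $j_L$ and $j_R$ when they are not integral, keeping only $\{1,m\}$. But when $1<j_L\le j_R<m$ and both are non-integral, the capped sum $\min(l_A,(j-1)q+\delta_L)+\min(r_A,(m-j)q+\delta_R)$ attains its maximum $l_A+r_A$ only on the interior plateau $\{\lceil j_L\rceil,\dots,\lfloor j_R\rfloor\}$, while at $j=1$ and $j=m$ it equals $\delta_L+r_A$ and $l_A+\delta_R$ respectively, both strictly smaller. If $d-|P|$ falls in between, your verifier rejects a good pair, breaking the one-sided-error guarantee that amplitude amplification relies on. The fix is immediate: include the integers adjacent to each breakpoint (e.g.\ $\lceil j_L\rceil$ and $\lfloor j_R\rfloor$) among the checked alignments; the bonus extensions $\alpha_L,\alpha_R$ of course still apply only at exactly integral breakpoints. (Two minor points to make the stated running time go through: the first-mismatch searches computing $l_A,r_A$ and the run $I$ must be capped at a window of length $O(d)$, as the paper does with its $2d$ cap, since the periodic runs themselves may be much longer than $d$.)
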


\paragraph*{Constructing a common substring from a good pair.}
From here on, we assume that $(P,S)$ is good and describe how to construct a common substring of length $d$ in time $\tilde O(\sqrt{d})$. 

We aim to find positions of $S$ that match with $P$. For this purpose, we use the string matching algorithm of Ramesh and Vinay~\cite{ramesh2003string} that searches $P$ in $S$ in time $\tilde O(\sqrt{|S|}+\sqrt{|P|})$. If there is no match, then we can conclude that $S$ and $P$ do not meet our property and therefore this should not happen. If there is one such position, then we can detect in time $\tilde O(\sqrt{d})$ if by extending this matching from two ends we can obtain a common substring of size $d$ as follows: We use Grover's search to find the left-most (up to $d$ characters away) position where the two substrings differ when extending them from right. We do the same from the left and it tells us if this gives a common substring of size at least $d$. Each search takes time $\tilde O(\sqrt{d})$. We refer the reader to Figure \ref{fig:lcs1} for a pictorial illustration.

\begin{figure}[!htbp]
	\centering

\tikzset{every picture/.style={line width=0.75pt}} 

\begin{tikzpicture}[x=0.75pt,y=0.75pt,yscale=-1,xscale=1]

\draw   (52,60) -- (607,60) -- (607,91) -- (52,91) -- cycle ;
\draw  [fill={rgb, 255:red, 255; green, 210; blue, 132 }  ,fill opacity=0.5 ][dash pattern={on 0.84pt off 2.51pt}] (219.5,64) -- (420,64) -- (420,87) -- (219.5,87) -- cycle ;
\draw   (52,120) -- (607,120) -- (607,151) -- (52,151) -- cycle ;
\draw  [fill={rgb, 255:red, 189; green, 16; blue, 224 }  ,fill opacity=0.3 ][dash pattern={on 0.84pt off 2.51pt}] (248.5,124) -- (520,124) -- (520,147) -- (248.5,147) -- cycle ;
\draw  [dash pattern={on 0.84pt off 2.51pt}]  (219.5,87) -- (299.5,154) ;
\draw  [dash pattern={on 0.84pt off 2.51pt}]  (419.5,87) -- (500,154) ;
\draw  [fill={rgb, 255:red, 255; green, 210; blue, 132 }  ,fill opacity=0.5 ][dash pattern={on 0.84pt off 2.51pt}] (299.5,154) -- (500,154) -- (500,177) -- (299.5,177) -- cycle ;
\draw  [fill={rgb, 255:red, 208; green, 2; blue, 27 }  ,fill opacity=0.4 ][dash pattern={on 0.84pt off 2.51pt}] (192.5,64) -- (219,64) -- (219,87) -- (192.5,87) -- cycle ;
\draw  [fill={rgb, 255:red, 126; green, 211; blue, 33 }  ,fill opacity=0.4 ][dash pattern={on 0.84pt off 2.51pt}] (419.5,64) -- (511.5,64) -- (511.5,87) -- (419.5,87) -- cycle ;
\draw  [fill={rgb, 255:red, 208; green, 2; blue, 27 }  ,fill opacity=0.4 ][dash pattern={on 0.84pt off 2.51pt}] (272.5,154) -- (299,154) -- (299,177) -- (272.5,177) -- cycle ;
\draw  [fill={rgb, 255:red, 126; green, 211; blue, 33 }  ,fill opacity=0.4 ][dash pattern={on 0.84pt off 2.51pt}] (499.5,154) -- (591.5,154) -- (591.5,177) -- (499.5,177) -- cycle ;
\draw  [dash pattern={on 0.84pt off 2.51pt}]  (193.5,87) -- (273.5,154) ;
\draw  [dash pattern={on 0.84pt off 2.51pt}]  (509.5,87) -- (590,154) ;
\draw    (197.5,43.99) -- (508.5,43.01) ;
\draw [shift={(510.5,43)}, rotate = 539.8199999999999] [color={rgb, 255:red, 0; green, 0; blue, 0 }  ][line width=0.75]    (10.93,-3.29) .. controls (6.95,-1.4) and (3.31,-0.3) .. (0,0) .. controls (3.31,0.3) and (6.95,1.4) .. (10.93,3.29)   ;
\draw [shift={(195.5,44)}, rotate = 359.82] [color={rgb, 255:red, 0; green, 0; blue, 0 }  ][line width=0.75]    (10.93,-3.29) .. controls (6.95,-1.4) and (3.31,-0.3) .. (0,0) .. controls (3.31,0.3) and (6.95,1.4) .. (10.93,3.29)   ;
\draw    (277.5,197.99) -- (588.5,197.01) ;
\draw [shift={(590.5,197)}, rotate = 539.8199999999999] [color={rgb, 255:red, 0; green, 0; blue, 0 }  ][line width=0.75]    (10.93,-3.29) .. controls (6.95,-1.4) and (3.31,-0.3) .. (0,0) .. controls (3.31,0.3) and (6.95,1.4) .. (10.93,3.29)   ;
\draw [shift={(275.5,198)}, rotate = 359.82] [color={rgb, 255:red, 0; green, 0; blue, 0 }  ][line width=0.75]    (10.93,-3.29) .. controls (6.95,-1.4) and (3.31,-0.3) .. (0,0) .. controls (3.31,0.3) and (6.95,1.4) .. (10.93,3.29)   ;

\draw (27,67.4) node [anchor=north west][inner sep=0.75pt]    {$A$};
\draw (27,127.4) node [anchor=north west][inner sep=0.75pt]    {$B$};
\draw (317,67.4) node [anchor=north west][inner sep=0.75pt]    {$P$};
\draw (377,127.4) node [anchor=north west][inner sep=0.75pt]    {$S$};
\draw (245,18) node [anchor=north west][inner sep=0.75pt]   [align=left] {common substring of size $\displaystyle d$};
\draw (335,208) node [anchor=north west][inner sep=0.75pt]   [align=left] {common substring of size $\displaystyle d$};

\end{tikzpicture}

	\caption{If $P$ matches $S$ in only one position, by extending the two ends in the two strings we can find a common substring of length $d$. The orange part of $S$ shows the part of $S$ that matches with~$P$. Also, the red and green parts and extensions from left and right for the matched parts.}\label{fig:lcs1}
\end{figure}
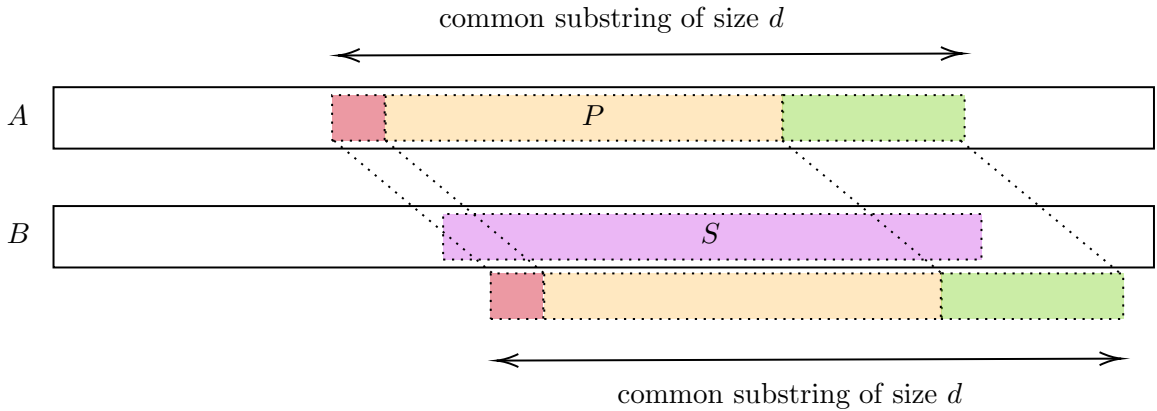

More generally, the above idea works when there are only $O(1)$ many positions of $S$ that match with $P$. However, we should address the case that $P$ appears many times in $S$. In the following we discuss how this can be handled. 
In time $\tilde O(\sqrt{d})$ we first find the leftmost and rightmost positions of $S$ that match with $P$. Let the starting index of the leftmost match be $\ell$ and the starting index of the rightmost match be $r$. In other words, $P = S[\ell, \ell+|P|-1] = S[r,r+|P|-1]$. Since $|S| = d$ and we have $|P| = 2\lfloor d/3 \rfloor$ then the two substrings overlap. Therefore, $P$ as well as the entire string $S[\ell, r+|P|-1]$ is $(r-\ell)$-periodic. This is the key property on which our algorithm will be based.

Let us denote the substring $S[\ell, r+|P|-1]$ by $T$. We extend both $P$ and $T$ from left and right up to a distance of $2d$ in the following way: We increase the index of the ending point so long as the substring remains $(r-\ell)$-periodic. We also stop if we move the ending point by more than $2d$. We do the same for the starting point; we decrease the starting point so long as the substring remains $(r-\ell)$-periodic and up to at most a distance of $2d$. Since we bound the maximum change by $2d$, then this can be done in time $\tilde O(\sqrt{d})$ via Grover's search. Let us denote the resulting substrings by $A[\alpha,\beta]$ and $B[\alpha',\beta']$. The following observation enables us to find the solution in time $\tilde O(\sqrt{d})$: one of the following three cases holds for the optimal solution.
\begin{enumerate}
	\item The starting index of the corresponding optimal solution
	 in $A$ is smaller than $\alpha$. Since the matched parts of the solution are both in the periodic segments, this implies that the first non-periodic indices (when going backwards from the matched parts) in the two solution intervals are $A[\alpha-1]$ and $B[\alpha'-1]$. As a consequence the corresponding character of $A[\alpha]$ in $B$ is $B[\alpha']$ (Figure \ref{fig:three-cases-together}a). 
	\item The ending  index of the corresponding optimal solution in $A$ is larger than $\beta$ and thus (with a similar argument as above) the corresponding character of $A[\beta]$ in $B$ is $B[\beta']$ (Figure \ref{fig:three-cases-together}b). 
	\item The corresponding optimal solution in $A$ is in the interval $A[\alpha,\beta]$ and thus it is $(r-\ell)$-periodic (Figure \ref{fig:three-cases-together}c).
\end{enumerate}
\begin{figure}[!htbp]
	\centering

	\tikzset{every picture/.style={line width=0.75pt}} 
	
	\begin{tikzpicture}[x=0.75pt,y=0.75pt,yscale=-1,xscale=1,scale=0.65]
	\draw   (52,100) -- (607,100) -- (607,131) -- (52,131) -- cycle ;
	\draw  [fill={rgb, 255:red, 255; green, 210; blue, 132 }  ,fill opacity=0.5 ][dash pattern={on 0.84pt off 2.51pt}] (219.5,104) -- (420,104) -- (420,127) -- (219.5,127) -- cycle ;
	\draw   (52,160) -- (607,160) -- (607,191) -- (52,191) -- cycle ;
	\draw  [fill={rgb, 255:red, 126; green, 211; blue, 33 }  ,fill opacity=0.5 ][dash pattern={on 0.84pt off 2.51pt}] (258.5,164) -- (510.5,164) -- (510.5,187) -- (258.5,187) -- cycle ;
	\draw  [fill={rgb, 255:red, 74; green, 144; blue, 226 }  ,fill opacity=0.32 ][dash pattern={on 0.84pt off 2.51pt}] (179.5,64) -- (468.5,64) -- (468.5,87) -- (179.5,87) -- cycle ;
	\draw  [fill={rgb, 255:red, 155; green, 155; blue, 155 }  ,fill opacity=0.5 ][dash pattern={on 0.84pt off 2.51pt}] (249.5,208) -- (591.5,208) -- (591.5,231) -- (249.5,231) -- cycle ;
	\draw  [fill={rgb, 255:red, 208; green, 2; blue, 27 }  ,fill opacity=0.4 ][dash pattern={on 0.84pt off 2.51pt}] (159.5,28) -- (448.5,28) -- (448.5,51) -- (159.5,51) -- cycle ;
	\draw  [fill={rgb, 255:red, 208; green, 2; blue, 27 }  ,fill opacity=0.4 ][dash pattern={on 0.84pt off 2.51pt}] (229.5,238) -- (518.5,238) -- (518.5,261) -- (229.5,261) -- cycle ;
	\draw  [dash pattern={on 0.84pt off 2.51pt}]  (179.5,87) -- (249.5,231) ;
	\draw  [dash pattern={on 4.5pt off 4.5pt}]  (179.5,87) -- (179.5,19) ;
	\draw  [dash pattern={on 4.5pt off 4.5pt}]  (468.5,87) -- (468.5,19) ;
	\draw  [dash pattern={on 4.5pt off 4.5pt}]  (249.5,277) -- (249.5,209) ;
	\draw  [dash pattern={on 4.5pt off 4.5pt}]  (591.5,276) -- (591.5,208) ;
	
	\draw (27,107.4) node [anchor=north west][inner sep=0.75pt]    {$A$};
	\draw (27,167.4) node [anchor=north west][inner sep=0.75pt]    {$B$};
	\draw (317,107.4) node [anchor=north west][inner sep=0.75pt]    {$P$};
	\draw (175,-0.6) node [anchor=north west][inner sep=0.75pt]    {$\alpha $};
	\draw (461,-0.6) node [anchor=north west][inner sep=0.75pt]    {$\beta $};
	\draw (244,278.4) node [anchor=north west][inner sep=0.75pt]    {$\alpha '$};
	\draw (587,274.4) node [anchor=north west][inner sep=0.75pt]    {$\beta '$};
	\end{tikzpicture}
   \caption*{(a) First case.}\vspace{5mm}

\begin{tikzpicture}[x=0.75pt,y=0.75pt,yscale=-1,xscale=1,scale=0.65]
	\draw   (52,100) -- (607,100) -- (607,131) -- (52,131) -- cycle ;
	\draw  [fill={rgb, 255:red, 255; green, 210; blue, 132 }  ,fill opacity=0.5 ][dash pattern={on 0.84pt off 2.51pt}] (219.5,104) -- (420,104) -- (420,127) -- (219.5,127) -- cycle ;
	\draw   (52,160) -- (607,160) -- (607,191) -- (52,191) -- cycle ;
	\draw  [fill={rgb, 255:red, 126; green, 211; blue, 33 }  ,fill opacity=0.5 ][dash pattern={on 0.84pt off 2.51pt}] (258.5,164) -- (510.5,164) -- (510.5,187) -- (258.5,187) -- cycle ;
	\draw  [fill={rgb, 255:red, 74; green, 144; blue, 226 }  ,fill opacity=0.32 ][dash pattern={on 0.84pt off 2.51pt}] (179.5,64) -- (468.5,64) -- (468.5,87) -- (179.5,87) -- cycle ;
	\draw  [fill={rgb, 255:red, 155; green, 155; blue, 155 }  ,fill opacity=0.5 ][dash pattern={on 0.84pt off 2.51pt}] (249.5,208) -- (543.5,208) -- (543.5,231) -- (249.5,231) -- cycle ;
	\draw  [fill={rgb, 255:red, 208; green, 2; blue, 27 }  ,fill opacity=0.4 ][dash pattern={on 0.84pt off 2.51pt}] (207.5,28) -- (506.5,28) -- (506.5,51) -- (207.5,51) -- cycle ;
	\draw  [fill={rgb, 255:red, 208; green, 2; blue, 27 }  ,fill opacity=0.4 ][dash pattern={on 0.84pt off 2.51pt}] (287.5,237) -- (576.5,237) -- (576.5,260) -- (287.5,260) -- cycle ;
	\draw  [dash pattern={on 0.84pt off 2.51pt}]  (468.5,87) -- (543.5,231) ;
	\draw  [dash pattern={on 4.5pt off 4.5pt}]  (179.5,87) -- (179.5,19) ;
	\draw  [dash pattern={on 4.5pt off 4.5pt}]  (468.5,87) -- (468.5,19) ;
	\draw  [dash pattern={on 4.5pt off 4.5pt}]  (249.5,277) -- (249.5,209) ;
	\draw  [dash pattern={on 4.5pt off 4.5pt}]  (543.5,276) -- (543.5,208) ;
	
	\draw (27,107.4) node [anchor=north west][inner sep=0.75pt]    {$A$};
	\draw (27,167.4) node [anchor=north west][inner sep=0.75pt]    {$B$};
	\draw (317,107.4) node [anchor=north west][inner sep=0.75pt]    {$P$};
	\draw (175,-0.6) node [anchor=north west][inner sep=0.75pt]    {$\alpha $};
	\draw (461,-0.6) node [anchor=north west][inner sep=0.75pt]    {$\beta $};
	\draw (244,278.4) node [anchor=north west][inner sep=0.75pt]    {$\alpha '$};
	\draw (539,274.4) node [anchor=north west][inner sep=0.75pt]    {$\beta '$};
	\end{tikzpicture}
   \caption*{(b) Second case.}\vspace{5mm}

\begin{tikzpicture}[x=0.75pt,y=0.75pt,yscale=-1,xscale=1,scale=0.65]
	\draw   (52,100) -- (607,100) -- (607,131) -- (52,131) -- cycle ;
	\draw  [fill={rgb, 255:red, 255; green, 210; blue, 132 }  ,fill opacity=0.5 ][dash pattern={on 0.84pt off 2.51pt}] (219.5,104) -- (420,104) -- (420,127) -- (219.5,127) -- cycle ;
	\draw   (52,160) -- (607,160) -- (607,191) -- (52,191) -- cycle ;
	\draw  [fill={rgb, 255:red, 126; green, 211; blue, 33 }  ,fill opacity=0.5 ][dash pattern={on 0.84pt off 2.51pt}] (258.5,164) -- (510.5,164) -- (510.5,187) -- (258.5,187) -- cycle ;
	\draw  [fill={rgb, 255:red, 74; green, 144; blue, 226 }  ,fill opacity=0.32 ][dash pattern={on 0.84pt off 2.51pt}] (179.5,64) -- (468.5,64) -- (468.5,87) -- (179.5,87) -- cycle ;
	\draw  [fill={rgb, 255:red, 155; green, 155; blue, 155 }  ,fill opacity=0.5 ][dash pattern={on 0.84pt off 2.51pt}] (249.5,208) -- (543.5,208) -- (543.5,231) -- (249.5,231) -- cycle ;
	\draw  [fill={rgb, 255:red, 208; green, 2; blue, 27 }  ,fill opacity=0.4 ][dash pattern={on 0.84pt off 2.51pt}] (207.5,28) -- (446.5,28) -- (446.5,51) -- (207.5,51) -- cycle ;
	\draw  [dash pattern={on 4.5pt off 4.5pt}]  (179.5,87) -- (179.5,19) ;
	\draw  [dash pattern={on 4.5pt off 4.5pt}]  (468.5,87) -- (468.5,19) ;
	\draw  [dash pattern={on 4.5pt off 4.5pt}]  (249.5,277) -- (249.5,209) ;
	\draw  [dash pattern={on 4.5pt off 4.5pt}]  (543.5,276) -- (543.5,208) ;
	\draw  [fill={rgb, 255:red, 208; green, 2; blue, 27 }  ,fill opacity=0.4 ][dash pattern={on 0.84pt off 2.51pt}] (287.5,248) -- (526.5,248) -- (526.5,271) -- (287.5,271) -- cycle ;
	
	\draw (27,107.4) node [anchor=north west][inner sep=0.75pt]    {$A$};
	\draw (27,167.4) node [anchor=north west][inner sep=0.75pt]    {$B$};
	\draw (317,107.4) node [anchor=north west][inner sep=0.75pt]    {$P$};
	\draw (175,-0.6) node [anchor=north west][inner sep=0.75pt]    {$\alpha $};
	\draw (461,-0.6) node [anchor=north west][inner sep=0.75pt]    {$\beta $};
	\draw (244,278.4) node [anchor=north west][inner sep=0.75pt]    {$\alpha '$};
	\draw (539,274.4) node [anchor=north west][inner sep=0.75pt]    {$\beta '$};
	\end{tikzpicture}
	  \caption*{(c) Third case.}

	\caption{The three cases considered. The red intervals correspond to the longest common substring. The yellow interval is $P$ and the green interval is $S[\ell,r+|P|-1]$. The blue and grey intervals are extensions of the yellow and green intervals so long as they remain $(r-\ell)$-periodic.}\label{fig:three-cases-together}
\end{figure}

In all three cases, we can find the optimal solution in time $\tilde O(\sqrt{d})$. For the first two cases, we know one correspondence between the two common substrings of length $d$. More precisely, in Case~1 we know that $A[\alpha]$ is part of the solution and this character corresponds to $B[\alpha']$. Thus, it suffices to do a Grover's search from two ends to extend the matching in the two directions. Similarly in Case 2 we know that $A[\beta]$ corresponds to $B[\beta']$ and thus via a Grover's search we find a common substring of length $d$ in the two strings.

Finally, for Case 3 we point out that since the entire solution lies in intervals $A[\alpha,\beta]$ and $B[\alpha', \beta']$ then we have $\beta \geq \alpha + d-1$ and $\beta' \geq \alpha' + d-1$. Notice that both intervals $A[\alpha,\beta]$ and $B[\alpha',\beta']$ are $(r-\ell)$-periodic. Let the starting position of $P$ in $A$ be $x$. For a fixed position $y$ in the interval $[\alpha',\beta']$ that matches with $P$, if we extend this matching from the two ends we obtain a solution of size $\min\{x-\alpha,y-\alpha'\} + \min\{\beta-x+1, \beta'-y+1\}$. This means that extending one of the following matchings between $P$ and $S$ gives us a common subsequence of size $d$:

\begin{itemize}
	\item An optimal solution when $x-\alpha \geq y-\alpha'$: the rightmost matching in the interval $B[\alpha',\alpha'+|P|+(x-\alpha)]$, or
	\item An optimal solution when $x-\alpha \leq y-\alpha'$: the leftmost matching in the interval $B[\alpha'+(x-\alpha),\beta']$.
\end{itemize}

Each matching can be found in time $\tilde O(\sqrt{d})$ and similar to the ideas explained above we can extend each matching to verify if it gives us a common substring of size $d$ in time $\tilde O(\sqrt{d})$. 

\subsubsection{Combining the two algorithms}
Combining Lemmas \ref{lemma:firstone} and \ref{lemma:secondone} gives us a solution in time $\tilde O(n^{5/6})$.
\begin{theorem-repeat}{theorem:lcs-exact}
\TheoremLCSExact
\end{theorem-repeat}
\begin{proof}
	We do a binary search on $d$ (size of the solution). To verify a given $d$, we consider two cases: if $d < n^{1/3}$ we run the algorithm of Lemma \ref{lemma:firstone} and otherwise we run the algorithm of Lemma~\ref{lemma:secondone}. Thus, the overall runtime is bounded by $\tilde O(n^{5/6})$.
\end{proof}

\subsection{Quantum algorithm for approximate \textsf{LCS}}\label{sub:lcs-approximate}
In the following, we show that if an approximate solution is desired then we can improve the runtime down to $\tilde O(n^{3/4})$. Similar to the above discussion, we use two algorithms for small $d$ and large~$d$. Our algorithm for large $d$ is the same as the one we use for exact solution. For small $d$ we modify the algorithm of Lemma \ref{lemma:firstone} to improve its runtime down to $\tilde O(n^{2/3}d^{1/6})$. This is explained in Lemma~\ref{lemma:3}.

\begin{lemma}\label{lemma:3}
For any constant $0 < \epsilon < 1$, there exists a quantum algorithm that runs in time $\tilde O(n^{2/3}d^{1/6})$ and if the two strings share a common substring of length $d$, finds a common substring of length $(1-\epsilon)d$.
\end{lemma}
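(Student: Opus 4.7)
The plan is to mirror the element-distinctness template of Lemma~\ref{lemma:firstone}, but to prune the two input sets down to size $O(n/\sqrt{d})$, so that the overall cost drops from $\tilde O(n^{2/3}\sqrt{d})$ to $\tilde O((n/\sqrt{d})^{2/3}\sqrt{d})=\tilde O(n^{2/3}d^{1/6})$. Fix $d'=\lceil(1-\epsilon)d\rceil$, $k=\lfloor\epsilon\sqrt{d}/4\rfloor$, and $L=\lfloor\epsilon d/4\rfloor$, and set
\[
I_A=\{1,\,1+k,\,1+2k,\ldots\}\cap[1,n-d'+1],\qquad I_B=\bigcup_{\ell\ge 0}\{1+\ell L,\ldots,k+\ell L\}\cap[1,n-d'+1].
\]
The two input sets are $S_A=\{(A,i):i\in I_A\}$ and $S_B=\{(B,j):j\in I_B\}$, where as in Lemma~\ref{lemma:firstone} the value of $(A,i)$ is $v(A[i,i+d'-1])$ and the value of $(B,j)$ is $v(B[j,j+d'-1])$. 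A direct count gives $|S_A|=O(n/k)=O(n/\sqrt{d})$ and $|S_B|=O(nk/L)=O(n/\sqrt{d})$; the picture is the one in Figure~\ref{fig:f4-intro}.

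The combinatorial heart of the argument is the following claim: whenever $A[a,a+d-1]=B[b,b+d-1]$ is a common substring of length $d$, there is some shift $s\in[0,\epsilon d]$ with $a+s\in I_A$ and $b+s\in I_B$, so that $(A,a+s)\in S_A$ and $(B,b+s)\in S_B$ have the same value $v(A[a+s,a+s+d'-1])=v(B[b+s,b+s+d'-1])$. The first condition places $s$ in an arithmetic progression of gap $k$ inside $[0,\epsilon d]$, and the second places $s$ in a shifted length-$k$ window taken modulo $L$ inside $[0,\epsilon d]$. Since $L\le\epsilon d/2$, the window $[0,\epsilon d]$ contains an entire period of the $B$-block pattern, and since the arithmetic progression has gap exactly $k$, every length-$k$ window meets it. The two sets of admissible shifts must therefore intersect.

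All that remains is to invoke the element-distinctness (claw-finding) algorithm on $S_A\cup S_B$ with the same comparison routine as in Lemma~\ref{lemma:firstone}: each equality test between two length-$d'$ substrings is implemented in $\tilde O(\sqrt{d'})=\tilde O(\sqrt{d})$ time by a Grover search for the leftmost position where they differ. The total running time is $\tilde O((n/\sqrt{d})^{2/3}\cdot\sqrt{d})=\tilde O(n^{2/3}d^{1/6})$, and whenever the algorithm reports a claw $(i,j)$ we recover a common substring $A[i,i+d'-1]=B[j,j+d'-1]$ of length $d'\ge(1-\epsilon)d$. The main obstacle is the alignment claim above: because the sampling grids $I_A$ and $I_B$ must be fixed before the LCS positions $(a,b)$ are revealed, one must exploit the $\epsilon d$ slack offered by the approximation factor to guarantee that a valid shift exists regardless of $a\bmod k$ and $b\bmod L$; balancing the relative scales $k\sim\sqrt{d}$ and $L\sim d$ against $\epsilon d$ is precisely what makes this work.
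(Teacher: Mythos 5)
Your proposal is correct and follows essentially the same route as the paper: subsample $S_A$ along an arithmetic progression of gap $\Theta(\sqrt{\epsilon d})$ and $S_B$ as length-$\Theta(\sqrt{\epsilon d})$ blocks spaced $\Theta(\epsilon d)$ apart, argue that the $\epsilon d$ slack guarantees an aligned pair of sampled starting positions for any length-$d$ common substring, and then run claw-finding on the two $O(n/\sqrt{d})$-size sets with $\tilde O(\sqrt{d})$-time Grover comparisons. Your write-up of the alignment claim is in fact more explicit than the paper's one-line justification, and the only differences are inessential constant choices in $k$ and the block spacing.
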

\begin{proof}
	Similar to Lemma \ref{lemma:firstone}, we use element distinctness for this algorithm. We make two sets $S_A$ and $S_B$ and prove that they share two equal elements if and only if their corresponding substrings of size $(1-\epsilon)d$ are equal. The difference between this algorithm and the algorithm of Lemma \ref{lemma:firstone} is that here $S_A$ and $S_B$ contain $O(n/\sqrt{\epsilon d})$ elements instead of $n-d+1$ elements. 

Let $k =  \sqrt{\epsilon d}$. We break both strings into blocks of size $k$ and make the two sets in the following way (see Figure~\ref{fig:f4} for an illustration):
	\begin{itemize}
		\item For each $i$ such that $i \textsf{ mod } k = 1$, we put element $(A,i)$ in set $S_A$.
		\item For each $i$ such that $\lceil i / k \rceil\textsf{ mod } k = 1$ we put element $(B,i)$ in set $S_B$.
	\end{itemize}
	\begin{figure}[!htbp]
	\centering

\tikzset{every picture/.style={line width=0.75pt}} 

\begin{tikzpicture}[x=0.75pt,y=0.75pt,yscale=-1,xscale=1]

\draw   (52,20) -- (607,20) -- (607,51) -- (52,51) -- cycle ;
\draw   (52,80) -- (607,80) -- (607,111) -- (52,111) -- cycle ;
\draw    (53.5,127) -- (107.5,127) ;
\draw [shift={(109.5,127)}, rotate = 180] [color={rgb, 255:red, 0; green, 0; blue, 0 }  ][line width=0.75]    (10.93,-3.29) .. controls (6.95,-1.4) and (3.31,-0.3) .. (0,0) .. controls (3.31,0.3) and (6.95,1.4) .. (10.93,3.29)   ;
\draw [shift={(51.5,127)}, rotate = 0] [color={rgb, 255:red, 0; green, 0; blue, 0 }  ][line width=0.75]    (10.93,-3.29) .. controls (6.95,-1.4) and (3.31,-0.3) .. (0,0) .. controls (3.31,0.3) and (6.95,1.4) .. (10.93,3.29)   ;
\draw    (113.5,127) -- (167.5,127) ;
\draw [shift={(169.5,127)}, rotate = 180] [color={rgb, 255:red, 0; green, 0; blue, 0 }  ][line width=0.75]    (10.93,-3.29) .. controls (6.95,-1.4) and (3.31,-0.3) .. (0,0) .. controls (3.31,0.3) and (6.95,1.4) .. (10.93,3.29)   ;
\draw [shift={(111.5,127)}, rotate = 0] [color={rgb, 255:red, 0; green, 0; blue, 0 }  ][line width=0.75]    (10.93,-3.29) .. controls (6.95,-1.4) and (3.31,-0.3) .. (0,0) .. controls (3.31,0.3) and (6.95,1.4) .. (10.93,3.29)   ;
\draw    (173.5,127) -- (227.5,127) ;
\draw [shift={(229.5,127)}, rotate = 180] [color={rgb, 255:red, 0; green, 0; blue, 0 }  ][line width=0.75]    (10.93,-3.29) .. controls (6.95,-1.4) and (3.31,-0.3) .. (0,0) .. controls (3.31,0.3) and (6.95,1.4) .. (10.93,3.29)   ;
\draw [shift={(171.5,127)}, rotate = 0] [color={rgb, 255:red, 0; green, 0; blue, 0 }  ][line width=0.75]    (10.93,-3.29) .. controls (6.95,-1.4) and (3.31,-0.3) .. (0,0) .. controls (3.31,0.3) and (6.95,1.4) .. (10.93,3.29)   ;
\draw    (273.5,126) -- (327.5,126) ;
\draw [shift={(329.5,126)}, rotate = 180] [color={rgb, 255:red, 0; green, 0; blue, 0 }  ][line width=0.75]    (10.93,-3.29) .. controls (6.95,-1.4) and (3.31,-0.3) .. (0,0) .. controls (3.31,0.3) and (6.95,1.4) .. (10.93,3.29)   ;
\draw [shift={(271.5,126)}, rotate = 0] [color={rgb, 255:red, 0; green, 0; blue, 0 }  ][line width=0.75]    (10.93,-3.29) .. controls (6.95,-1.4) and (3.31,-0.3) .. (0,0) .. controls (3.31,0.3) and (6.95,1.4) .. (10.93,3.29)   ;
\draw  [fill={rgb, 255:red, 208; green, 2; blue, 27 }  ,fill opacity=0.4 ][dash pattern={on 0.84pt off 2.51pt}] (56,24) -- (65.5,24) -- (65.5,47) -- (56,47) -- cycle ;
\draw  [fill={rgb, 255:red, 208; green, 2; blue, 27 }  ,fill opacity=0.4 ][dash pattern={on 0.84pt off 2.51pt}] (116,24) -- (125.5,24) -- (125.5,47) -- (116,47) -- cycle ;
\draw  [fill={rgb, 255:red, 208; green, 2; blue, 27 }  ,fill opacity=0.4 ][dash pattern={on 0.84pt off 2.51pt}] (176,24) -- (185.5,24) -- (185.5,47) -- (176,47) -- cycle ;
\draw  [fill={rgb, 255:red, 208; green, 2; blue, 27 }  ,fill opacity=0.4 ][dash pattern={on 0.84pt off 2.51pt}] (276,24) -- (285.5,24) -- (285.5,47) -- (276,47) -- cycle ;
\draw  [fill={rgb, 255:red, 208; green, 2; blue, 27 }  ,fill opacity=0.4 ][dash pattern={on 0.84pt off 2.51pt}] (56,84) -- (111.5,84) -- (111.5,107) -- (56,107) -- cycle ;
\draw  [fill={rgb, 255:red, 208; green, 2; blue, 27 }  ,fill opacity=0.4 ][dash pattern={on 0.84pt off 2.51pt}] (274,84) -- (329.5,84) -- (329.5,107) -- (274,107) -- cycle ;

\draw (27,27.4) node [anchor=north west][inner sep=0.75pt]    {$A$};
\draw (27,87.4) node [anchor=north west][inner sep=0.75pt]    {$B$};
\draw (72,129.4) node [anchor=north west][inner sep=0.75pt]    {$k$};
\draw (134,129.4) node [anchor=north west][inner sep=0.75pt]    {$k$};
\draw (194,129.4) node [anchor=north west][inner sep=0.75pt]    {$k$};
\draw (238,117.4) node [anchor=north west][inner sep=0.75pt]    {$\dotsc $};
\draw (294,128.4) node [anchor=north west][inner sep=0.75pt]    {$k$};
\draw (338,117.4) node [anchor=north west][inner sep=0.75pt]    {$\dotsc $};

\end{tikzpicture}

	\caption{$k = \sqrt{\epsilon d}$. Only the elements colored in red are included in the two sets.}\label{fig:f4}
\end{figure}
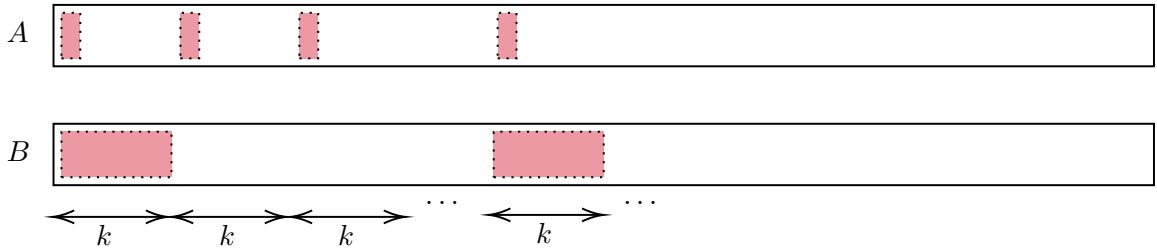
	It follows that among the $\epsilon d$ first characters of the solution, there is one position which is included in $S_A$ such that its corresponding position in $B$ is also included in $S_B$. Thus, if we define the equality between two elements $(A,i)$ and $(B,j)$ as $A[i,i+(1-\epsilon)d-1] = B[j,j+(1-\epsilon)d-1] $ then by solving element distinctness for $S_A$ and $S_B$ we can find a desired solution. Since we have $O(n/\sqrt{d})$ elements in each set then the overall runtime is bounded by $\tilde O((n/\sqrt{d})^{2/3}\sqrt{d}) = \tilde O(n^{2/3}d^{1/6})$.
\end{proof}
Lemmas \ref{lemma:secondone} and \ref{lemma:3} 
lead to a quantum algorithm for \textsf{LCS} with approximation factor $1-\epsilon$ and runtime $\tilde O(n^{3/4})$.
\begin{theorem-repeat}{theorem:lcs-approx}
\TheoremLCSApprox
\end{theorem-repeat}
\begin{proof}
	We do a binary search on $d$. To verify a given $d$, we consider two cases: if $d < n^{1/2}$ we run the algorithm of Lemma \ref{lemma:3} and otherwise we run the algorithm of Lemma~\ref{lemma:secondone}. Thus, the overall runtime is bounded by $\tilde O(n^{3/4})$.
\end{proof}
\newcommand{\rr}{r}
\renewcommand{\Cc}{\mathcal{C}}
\renewcommand{\Tt}{\mathcal{T}}

\section{Longest Common Substring for Non-Repetitive Strings}
In this section, we consider \textsf{LCS} for input strings $A$ and $B$ that are non-repetitive (e.g., permutations). We show that there exists an $\tilde O(n^{3/4})$-time quantum algorithm that computes an exact $\textsf{LCS}$ and an $\tilde O(n^{2/3})$-time quantum algorithm that computes a $(1-\epsilon)$-approximation of $\textsf{LCS}$. This significantly improves the generic results of the previous section. 

\subsection{Quantum walks}\label{sub:walks}
We first review the quantum walk approach that solves element distinctness and several similar problems. We actually consider a variant of the standard approach that is especially well suited for our purpose: instead of considering a quantum walk on a Johnson graph we consider a quantum walk on the direct product of two Johnson graphs. While quantum walks on such graphs have also been used by Buhrman and {\v{S}}palek in \cite{buhrman2006quantum}, their implementation used Szegegy's version of quantum walks \cite{szegedy2004quantum}. One noteworthy aspect of our approach is that we rely on the stronger version of quantum walks developed by Magniez, Nayak, Roland and Santha \cite{Magniez+SICOMP11} (this is crucial for the walk we present in Section \ref{sub:LCS-perm-exact} and \ref{sub:LCS-perm-approx}; using Szegegy's version does not lead to an efficient algorithm because the checking cost of our quantum walk is too high).

Let $S$ be a finite set and $r$ be any integer such that $r\le |S|$. Let us denote by $\Tt_r$ the set of all couples $(R_1,R_2)$ where both $R_1$ and $R_2$ are subsets of $S$ of size $r$. The elements of $\Tt_r$ are called the states of the walk. Let $\Tt_r^\ast\subseteq \Tt_r$ be some subset. The elements of $\Tt_r^\ast$ are called the marked states. 
The quantum walk approach associates to each $(R_1,R_2)\in \Tt_r$ a data structure $D(R_1,R_2)$. Three types of costs are associated with~$D$.
The setup cost $s(\rr)$ is the cost to set up the data structure $D(R_1,R_2)$ for any $(R_1,R_2)\in \Tt_r$.
The update cost $u(\rr)$ is the cost to update $D(R_1,R_2)$ for any $(R_1,R_2)\in \Tt_r$: converting $D(R_1,R_2)$ to $D((R_1\setminus{a})\cup \{a'\},(R_2\setminus{b})\cup \{b'\})$ for any $a\in R_1$, $a'\in S\backslash R_1$, $b\in R_2$ and $b'\in S\backslash R_2$. 
The checking cost $c(\rr)$ is the cost of checking, given $D(R_1,R_2)$ for any $(R_1,R_2)\in \Tt_r$, if $(R_1,R_2)\in \Tt_r^\ast$. 
By combining amplitude amplification and quantum walks, we can find with high probability one marked state $(R_1,R_2)$.
We will use in this paper the following statement (we refer to \cite{Magniez+SICOMP11} for details) of this approach. 

\begin{proposition}\label{fact_magniez}
Consider any value $\rr\in\{1,\ldots,|S|\}$ and any constant $\gamma>0$.
Assume that either $\Tt_r^\ast=\emptyset$ or $\frac{|\Tt_r^\ast|}{|\Tt_r|}\ge \delta$ holds for some known value $\delta>0$.
There exists a quantum algorithm that always rejects if $\Tt_r^\ast=\emptyset$, 
outputs a set $(R_1,R_2)\in \Tt_r^\ast$ with probability at least $1-\frac{1}{n^\gamma}$ otherwise, and has complexity 
\[
\tilde O\left(s(\rr)+\sqrt{1/\delta}\left(\sqrt{\rr}\times u(\rr)+c(\rr)\right)\right).
\]
\end{proposition}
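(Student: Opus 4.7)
The plan is to specialize the MNRS quantum walk framework to the direct product of two Johnson graphs. First I would set up the walk: take the vertex set of the graph $G$ to be $\Tt_r$, and let a single transition from $(R_1,R_2)$ independently replace one uniformly chosen element of $R_1$ (resp.\ $R_2$) by a uniform element of $S\setminus R_1$ (resp.\ $S\setminus R_2$); this matches exactly the update postulated in the statement. The marked set is $\Tt_r^\ast$, and the data structure $D(R_1,R_2)$ together with the declared setup, update and checking costs $s(\rr)$, $u(\rr)$, $c(\rr)$ supplies precisely the ingredients needed to invoke the framework of Magniez, Nayak, Roland and Santha.

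The central computation is the spectral gap. The Johnson graph $J(|S|,\rr)$ is well-known to have spectral gap
\[
\Omega\!\left(\frac{|S|}{\rr(|S|-\rr)}\right),
\]
which is $\Omega(1/\rr)$ when $\rr\le |S|/2$ and $\Omega(1/(|S|-\rr))$ otherwise; the latter regime can be handled by replacing each $R_i$ with its complement, so without loss of generality $\lambda=\Omega(1/\rr)$. Since our product walk is driven by the tensor product of the two factor transition matrices, its non-principal eigenvalues are products of non-principal eigenvalues of the factors, and the spectral gap of $G$ is still $\Omega(1/\rr)$. Plugging $\lambda=\Omega(1/\rr)$ and the hypothesis $\epsilon=|\Tt_r^\ast|/|\Tt_r|\ge \delta$ into the MNRS cost
\[
\tilde O\!\left(s(\rr)+\frac{1}{\sqrt{\epsilon}}\left(\frac{1}{\sqrt{\lambda}}\,u(\rr)+c(\rr)\right)\right)
\]
yields the claimed bound $\tilde O(s(\rr)+\sqrt{1/\delta}\,(\sqrt{\rr}\,u(\rr)+c(\rr)))$. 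A standard $O(\gamma\log n)$-fold repetition of the resulting one-sided-error procedure boosts the success probability to $1-1/n^\gamma$ without affecting the $\tilde O$ complexity, and the rejection guarantee in the case $\Tt_r^\ast=\emptyset$ is preserved under repetition.

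The main obstacle I anticipate is not the high-level framework invocation but justifying cleanly that the MNRS analysis carries over verbatim to the product chain: reversibility with respect to the uniform stationary distribution on $\Tt_r$ is immediate from reversibility of the two factors, and the eigenvalue product argument gives the gap, but both points have to be verified since the original MNRS statement is phrased for a single Johnson graph. Should this route prove awkward to write up, a lighter alternative is to simulate each step of the product walk by two interleaved half-steps (one swap in one coordinate, then one swap in the other) and appeal directly to the single-Johnson analysis at the cost of a constant factor in the step count, which is absorbed in the $\tilde O$.
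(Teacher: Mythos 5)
Your proposal is correct and matches the paper's intent: the paper does not prove this proposition at all but simply invokes the Magniez--Nayak--Roland--Santha framework on the direct product of two Johnson graphs, which is exactly the derivation you reconstruct (spectral gap $\Omega(1/\rr)$ of the product chain, fraction of marked states at least $\delta$, plug into the MNRS cost formula, amplify to success probability $1-1/n^{\gamma}$). One small phrasing slip worth fixing: the non-principal eigenvalues of the tensor-product chain are products of eigenvalues one from each factor (possibly including the principal eigenvalue $1$ of one factor), not products of two non-principal eigenvalues, though your conclusion that the gap remains $\Omega(1/\rr)$ is unaffected.
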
 

Let us now explain how to solve element distinctness using this framework. Let $x_1,\ldots,x_n$ denote the $n$ elements of the first list, and $y_1,\ldots,y_n$ denote the $n$ elements of the second list of the input of element distinctness. Let us first discuss the complexity in the standard access model, where we assume that each element is encoded using $O(\log n)$ bits and that given $i\in[n]$ we can obtain all the bits of $x_i$ and $y_i$ in $O(\log n)$ time.
We set $S=\{1,\ldots, n\}$ and say that a state $(R_1,R_2)\in \Tt_r$ is marked if there exists a pair $(i,j)\in R_1\times R_2$ such that $x_i=y_j$. Easy calculations show that the fraction of marked states is $\Omega(r^2/n^2)$. By using an appropriate data structure that allows insertion, deletion and lookup operations in polylogarithmic time (see Section~6.2 of \cite{ambainis2007quantum} for details about how to construct such a data structure) to store the elements~$x_i$ for all $i\in R_1$, and using another instance of this data structure to store the elements $y_j$ for all $j\in R_2$, we can perform the setup operation in time $s(r)=\tilde O (r)$ and perform each update operation in time $u(r)=\tilde O(1)$. Each checking operation can be implemented in time $c(r)=\tilde O(\sqrt{r})$  as follows: perform a Grover search over $R_1$ to check if there exists $i\in R_1$ for which $x_i=y_j$ for some $j\in R_2$ (once $i$ is fixed the later property can be checked in polylogarithmic time using the data structure representing $R_2$).\footnote{The checking operation can actually be implemented more efficiently but the cost $c(r)=\tilde O(\sqrt{\rr})$ is enough for our purpose.}
From Proposition~\ref{fact_magniez}, the overall time complexity of the quantum walk is thus
\[
\tilde O\left(r+\frac{n}{r}\left(\sqrt{\rr}\times 1+\sqrt{\rr}\right)\right).
\]
By taking $r=n^{2/3}$ we get time complexity $\tilde O(n^{2/3})$. As discussed in  Section 6.1 of \cite{ambainis2007quantum}, the above data structure, and thus the whole quantum walk can also be implemented in the same way in the (weaker) comparison model: if given any pair of indices $(i,j)\in[n]\times [n]$ we can decide in $T(n)$ time whether $x_i<y_j$ or $x_i\ge y_j$, then the time complexity of the implementation is $\tilde O(n^{2/3}T(n))$.

\subsection{Quantum algorithm for exact \textsf{LCS} of non-repetitive strings}\label{sub:LCS-perm-exact}
We set $S=\{1,\ldots,n-d+1\}$. Let us write $\alpha=54\log n$. 
For any state $(R_1,R_2)\in \Tt_r$ we define the data structure $D(R_1,R_2)$ as follows:
\begin{itemize}
\item
$D(R_1,R_2)$ records all the values $A[i]$ and $B[j]$ for all $i\in R_1$ and $j\in R_2$ using the same data structure as the data structure considered above for element distinctness;
\item
additionally, $D(R_1,R_2)$ records the number of pairs $(i,j)\in R_1\times R_2$ such that $A[i]=B[j]$, and stores explicitly all these pairs (using a history-independent data structure updatable in polylogarithmic time, which can be constructed based on the data structure from \cite{ambainis2007quantum}).
\end{itemize}
We define the set of marked states $\Tt_r^\ast$ as follows. We say that $(R_1,R_2)$ is marked if the following two conditions hold:
\begin{itemize}
\item[(i)]
there exists a pair $(i,j)\in R_1\times R_2$ such that $A[i,i+d-1]=B[j,j+d-1]$;
\item[(ii)]
the number of pairs $(i,j)\in R_1\times R_2$ such that $A[i]=B[j]$ is at most $\ceil{\alpha (r^2/n+1)}$.
\end{itemize}
The following lemma is easy to show.
\begin{lemma}\label{lemma:ratio}
If the two non-repetitive strings $A$ and $B$ have a common substring of length~$d$, then the fraction of marked states is 
$\Omega\left(r^2/n^2\right)$.
\end{lemma}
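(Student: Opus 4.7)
The plan is to verify conditions~(i) and~(ii) in the definition of $\Tt_r^\ast$ separately and combine them via a union bound over the uniform choice of $(R_1,R_2)\in \Tt_r$. Let $N=n-d+1$ and fix a pair $(i^\ast,j^\ast)$ witnessing the common substring, so that $A[i^\ast,i^\ast+d-1]=B[j^\ast,j^\ast+d-1]$.

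First I would note that condition~(i) is satisfied whenever $i^\ast\in R_1$ and $j^\ast\in R_2$, and for a uniformly random $(R_1,R_2)\in \Tt_r$ this event has probability exactly $(r/N)^2\ge r^2/n^2$.

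Next I would bound the probability that condition~(ii) fails. Let $X$ denote the number of pairs $(i,j)\in R_1\times R_2$ with $A[i]=B[j]$, and set $M=\{(i,j)\in[N]\times[N] : A[i]=B[j]\}$. Because $A$ and $B$ are non-repetitive, each character of $\Sigma$ occurs at most once in each string, so $M$ is a partial matching between $[N]$ and $[N]$. In particular $|M|\le N\le n$, which already yields $\mathbb{E}[X]=|M|(r/N)^2\le r^2/n$. To upgrade this into a tail bound I would write $X=\sum_{(i,j)\in M}\mathbf{1}[i\in R_1]\cdot\mathbf{1}[j\in R_2]$ and exploit the fact that the distinctness of the first (resp.\ second) coordinates of the elements of $M$ forces these product indicators to depend on pairwise disjoint subsets of the negatively associated collection $\{\mathbf{1}[i\in R_1]\}_i\cup\{\mathbf{1}[j\in R_2]\}_j$; hence they are themselves negatively associated and the Chernoff/Poisson tail bound $\Pr[X\ge t]\le (e\,\mathbb{E}[X]/t)^t$ applies. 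Plugging in $t=\ceil{\alpha(r^2/n+1)}$ with $\alpha=54\log n$ gives $t/\mathbb{E}[X]\ge \alpha$, so $\Pr[X\ge t]\le (e/\alpha)^{\alpha}\le n^{-10}$ for $n$ sufficiently large.

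A final union bound then shows that the fraction of marked states is at least $r^2/n^2-n^{-10}=\Omega(r^2/n^2)$, since we may assume $r\ge 1$. The main technical obstacle is the concentration estimate on $X$: Markov's inequality only produces a $1/\log n$-type tail, which is much too weak, so one really needs Chernoff, which in turn requires justifying negative association of the $Y_{ij}$. The non-repetitivity hypothesis plays a double role in closing this loop, both by forcing $|M|\le n$ (keeping $\mathbb{E}[X]$ small enough that the threshold $\alpha(r^2/n+1)$ is a constant multiple of $\alpha\cdot\mathbb{E}[X]$ when $\mathbb{E}[X]\ge 1$, and at least $\alpha$ otherwise) and by ensuring that different pairs of $M$ touch disjoint indices, so that the summands fall into the NA framework.
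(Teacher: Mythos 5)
Your proof is correct and follows the same overall template as the paper's: establish condition (i) via the probability that a fixed witness pair $(i^\ast,j^\ast)$ lands in $R_1\times R_2$, bound the failure probability of condition (ii) by a concentration inequality, and combine with a union bound using $r^2/n^2\ge 1/n^2\gg n^{-3}$. The one genuinely different ingredient is the concentration step. The paper reduces to the worst case where every index participates in a matching pair, conditions on $R_1$, observes that by non-repetitiveness the count over a random $R_2$ is hypergeometric with mean $r^2/n$, and cites a hypergeometric Chernoff bound. You instead keep both sets random, write $X$ as a sum of product indicators $\mathbf{1}[i\in R_1]\cdot\mathbf{1}[j\in R_2]$ over the partial matching $M$, and invoke negative association: the matching pairs touch pairwise disjoint coordinates of the NA family $\{\mathbf{1}[i\in R_1]\}_i\cup\{\mathbf{1}[j\in R_2]\}_j$, so the summands are NA and the multiplicative Chernoff tail $\Pr[X\ge t]\le(e\,\mathbb{E}[X]/t)^t$ applies. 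Both routes use non-repetitiveness in exactly the same two places (bounding $|M|$ and decoupling the summands), and both are valid; yours avoids the conditioning and worst-case reduction at the cost of importing the NA machinery, and your remark that Markov alone would only give a $1/\log n$ tail (insufficient for the union bound when $r$ is small) correctly identifies why a Chernoff-type bound is indispensable here. One small imprecision, which you share with the paper: the ground set is $S=\{1,\ldots,N\}$ with $N=n-d+1$, so $\mathbb{E}[X]\le r^2/N$ rather than $r^2/n$, off by a factor $n/N$; this is harmless in the regime $d\le\sqrt{n}$ where the lemma is actually invoked (and is absorbed by the generous constant $\alpha=54\log n$), but the inequality $|M|(r/N)^2\le r^2/n$ as written does not follow from $|M|\le N$ alone.
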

\begin{proof}
The fraction of states $(R_1,R_2)$ for which there exists a pair $(i,j)\in R_1\times R_2$ such that $A[i, i+d-1]=B[j, j+d-1]$ is at least 
\[
\frac{{n-2\choose r-2}}{{n \choose r}}=\frac{r(r-1)}{n(n-1)}.
\]
The fraction of $(R_1,R_2)$ such that Condition (i) holds is thus $\Omega(r^2/n^2)$.

Let $X$ be the random variable representing the number of pairs $(i,j)\in R_1\times R_2$ such that $A[i]=B[j]$ when $(R_1,R_2)$ is taken uniformly at random in $\Tt_r$. We will show that $X\le \ceil{\alpha (r^2/n+1)}$ with high probability. Let us prove this upper bound for the worst possible case: input strings for which the number of pairs $(i,j)\in S\times S$ such that $A[i]=B[j]$ is $|S|$.


Assume that we have fixed $R_1$. For each $i\in R_1$ there exists a unique index $j\in S$ such that $A[i]=B[j]$. Let $Y$ be the random variable representing the total number of pairs $(i,j)\in R_1\times R_2$ such that $A[i]=B[j]$ when choosing $R_2$. Note that $Y$ follows an hypergeometric distribution of mean $r^2/n$. From standard extensions of Chernoff’s bound (see, e.g., Section 1.6 of \cite{Dubhashi+09}), we get 
\begin{align*}
\Pr\left[Y \ge  \alpha\left( \frac{r^2}{n}+1\right)\right]\le \exp\left(-(3\log n)\cdot \left(\frac{r^2}{n}+1\right)\right)\le \frac{1}{n^3}.
\end{align*}
Thus the fraction of $(R_1,R_2)$ such that Condition (ii) does not hold is at most $1/n^3$.

The statement of the lemma then follows from the union bound.
%
%
%
\end{proof}

Let us now analyze the costs corresponding to this quantum walk. The setup and update operations are similar to the corresponding operations of the quantum walk for element distinctness described in Section \ref{sub:walks}. The only difference is that we need to keep track of the pairs $(i,j)\in R_1\times R_2$ such that $A[i]=B[j]$. Note that for each $i\in [n]$, there is at most one index $j\in R_2$ such that $A[i]=B[j]$, since the strings are non-repetitive. Moreover this index can be found in polylogarithmic time using the data structure associated with $R_2$. Similarly, for each $j\in [n]$, there is at most one index $i\in R_1$ such that $A[i]=B[j]$, which can be also found in polylogarithmic time. Thus the time complexities of the setup and update operations 
 are $s(r) = \tilde O(r)$ and $u(r)=\tilde O(1)$, respectively, as in the quantum walk for element distinctness described in Section \ref{sub:walks}.
The checking operation first checks if the number of pairs $(i,j)\in R_1\times R_2$ such that $A[i]=B[j]$ is at most $\ceil{\alpha (r^2/n+1)}$ and then, if this condition is satisfied, performs a Grover search on all pairs $(i,j)\in R_1\times R_2$ such that $A[i]=B[j]$ stored in $D(R_1,R_2)$, in order to check if there exists a pair $(i,j)\in R_1\times R_2$ such that $A[i, i+d-1]=B[j, j+d-1]$. The time complexity is
\[
c(r) = \tilde O(1+ \sqrt{\ceil{\alpha (r^2/n+1)}}\cdot \sqrt{d}).
\]
Using Proposition~\ref{fact_magniez}, we get that the overall time complexity of the quantum walk is 
\[
\tilde O\left(r+\frac{n}{r}(\sqrt{r}\times 1 + \sqrt{\ceil{\alpha (r^2/n+1)}}\cdot \sqrt{d})\right)=\tilde O\left(r+\frac{n}{\sqrt{r}}+\frac{n\sqrt{d}}{r}+\sqrt{nd}\right).
\]
For $d\le n^{1/3}$, this expression is minimized for $r=n^{2/3}$ and gives complexity $\tilde O(n^{2/3})$. For $d\ge n^{1/3}$, the complexity is dominated by the last term $\tilde O(\sqrt{nd})$.

By combining the above algorithm with the $\tilde O(n/\sqrt{d})$-time algorithm of Lemma \ref{lemma:secondone}, we get overall complexity $\tilde O(n^{3/4})$.
\begin{theorem-repeat}{theorem:lcs-exact-perm}
\TheoremLCSExactPerm
\end{theorem-repeat}
\begin{proof}
	We do a binary search on $d$. To verify a given $d$, we consider two cases: if $d < \sqrt{n}$ we run the $\tilde O(\sqrt{nd})$-time quantum algorithm we just described and otherwise we run the algorithm of Lemma~\ref{lemma:secondone}. Thus, the overall runtime is bounded by $\tilde O(n^{3/4})$.
\end{proof}

\subsection{Quantum algorithm for approximate \textsf{LCS} of non-repetitive strings}\label{sub:LCS-perm-approx}
We set again $S=\{1,\ldots,n-d+1\}$ and, for any $(R_1,R_2)\in \Tt_r$, we define the data structure $D(R_1,R_2)$ exactly as above.
This time, we say that $(R_1,R_2)$ is marked if the following two conditions hold:
\begin{itemize}
\item[(i)]
there exists a pair $(i,j)\in R_1\times R_2$ such that $A[i, i+\ceil{(1-\epsilon)d}-1]=B[j, j+\ceil{(1-\epsilon)d}-1]$;
\item[(ii)]
the number of pairs $(i,j)\in R_1\times R_2$ such that $A[i]=B[j]$ is at most $\ceil{\alpha r^2/n}$;
\end{itemize}
where we use the same value $\alpha=54\log n$ as above.
We now show the following lemma.
\begin{lemma}\label{lemma:ratio-approx}
Assume that 
$r\le n/d$. 
If $A$ and $B$ have a common subsequence of length~$d$, then the fraction of marked sets is
$\Omega\left(\frac{dr^2}{n^2}\right)$.
\end{lemma}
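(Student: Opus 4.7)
The plan is to follow the two-part structure of the proof of Lemma~\ref{lemma:ratio}: bound separately the fraction of $(R_1,R_2)$ satisfying Condition~(i) and those failing Condition~(ii), then combine. The novelty lies in Condition~(i), where the relaxation to length $\ceil{(1-\epsilon)d}$ creates $\Theta(d)$ \emph{shifted} witnesses instead of one, boosting the inclusion probability by a factor $\Theta(d)$.

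Fix $(i^\ast,j^\ast)$ with $A[i^\ast,i^\ast+d-1]=B[j^\ast,j^\ast+d-1]$, and set $T=\{0,1,\ldots,d-\ceil{(1-\epsilon)d}\}$, so that $|T|=\Theta(\epsilon d)$ (with $|T|\ge 1$ always). For each $t\in T$, the pair $(i^\ast+t,j^\ast+t)$ is a legitimate witness for Condition~(i); let $E_t$ denote the event that $i^\ast+t\in R_1$ and $j^\ast+t\in R_2$. Then $\Pr[E_t]=\Theta(r^2/n^2)$ and, for $t\neq t'$, $\Pr[E_t\cap E_{t'}]=\Theta(r^4/n^4)$ by independence of $R_1$ and $R_2$. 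Applying Bonferroni's inequality,
\[
\Pr\!\left[\bigcup_{t\in T}E_t\right]\ge\sum_{t\in T}\Pr[E_t]-\sum_{t<t'}\Pr[E_t\cap E_{t'}]=\Omega\!\left(\frac{dr^2}{n^2}\right)-O\!\left(\frac{d^2r^4}{n^4}\right).
\]
The ratio of the error term to the main term is $O(dr^2/n^2)$, which by the assumption $r\le n/d$ is $O(1/d)$; this is at most $1/2$ for $d\ge 2$, and for $d=1$ the error vanishes since $|T|=1$. Hence a fraction $\Omega(dr^2/n^2)$ of states satisfies Condition~(i).

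For Condition~(ii), the argument is a straight adaptation of the hypergeometric concentration in Lemma~\ref{lemma:ratio}. Fixing $R_1$, non-repetitiveness of $B$ forces each index of $R_1$ to have at most one match in $S$, so $Y=|\{(i,j)\in R_1\times R_2:A[i]=B[j]\}|$ is hypergeometric with mean at most $r^2/|S|=r^2/n\cdot (1+o(1))$. A Chernoff upper tail with slack parameter $\alpha=54\log n$ yields $\Pr[Y>\ceil{\alpha r^2/n}]\le 1/n^{\Omega(1)}$ with arbitrarily large implied exponent, so the fraction of states failing~(ii) is negligible compared to $dr^2/n^2$ in any regime where the conclusion is not vacuous. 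Subtracting the two bounds gives the claimed $\Omega(dr^2/n^2)$ fraction of marked states.

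The main obstacle I expect is the inclusion--exclusion step: the quadratic Bonferroni error dominates without the assumption $r\le n/d$, so this hypothesis is precisely what makes the argument go through. The Chernoff step is essentially identical to that of Lemma~\ref{lemma:ratio}; the only wrinkle is that the ``$+1$'' that appeared in the threshold of Lemma~\ref{lemma:ratio} is absorbed here into the ceiling, which does not affect the concentration estimate in the relevant regime.
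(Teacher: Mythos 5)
Your proof is correct and arrives at the same bound, but the key step --- lower-bounding the probability that $R_1\times R_2$ contains one of the $\Theta(\epsilon d)$ shifted witness pairs --- is carried out by a genuinely different computation than in the paper. The paper argues sequentially: it first shows that $R_1$ misses all $m=d-\ceil{(1-\epsilon)d}$ witness indices with probability $\binom{n-m}{r}/\binom{n}{r}\le \exp(-rm/n)\le 1-rm/(2n)$, where the hypothesis $r\le n/d$ is what justifies the linearization $\exp(-x)\le 1-x/2$ on $x\in[0,1]$; then, conditioned on $R_1$ containing some witness index $i$, it observes that $R_2$ contains the matching $j$ with probability at least $r/n$, giving $\Omega(dr^2/n^2)$ overall. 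You instead apply Bonferroni over the witness events $E_t$ and use $r\le n/d$ to show the pairwise term $O(d^2r^4/n^4)$ is dominated by the first-moment term $\Omega(dr^2/n^2)$. Both uses of the hypothesis are essential and play the analogous anti-saturation role; your second-moment route is somewhat more generic (no two-stage conditioning, and it would survive witnesses not organized as one partner per index), while the paper's avoids pairwise intersection probabilities altogether. One small caution: your remark that the ``$+1$'' from the threshold of Lemma \ref{lemma:ratio} is ``absorbed into the ceiling'' is not literally true ($\ceil{\alpha(r^2/n+1)}$ exceeds $\ceil{\alpha r^2/n}$ by roughly $\alpha$), though the paper itself simply invokes the concentration argument of Lemma \ref{lemma:ratio} verbatim for Condition (ii), so your treatment there matches the paper's.
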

\begin{proof}
Let us write $m=d-\ceil{(1-\epsilon) d}$ and say that a pair $(i,j)\in S\times S$ is good if $A[i, i+\ceil{(1-\epsilon)d}-1]=B[j, j+\ceil{(1-\epsilon)d}-1]$. There are at least $m$ good pairs in $S\times S$. 
The probability that $R_1$ does not contain any element involved in a good pair is
\begin{align*}
\frac{{n-m\choose r}}{{n \choose r}}&= \frac{(n-m)!r!(n-r)!}{r!(n-m-r)!n!}\\
&=\frac{(n-m)(n-m-1)\cdots (n-m-r+1)}{n(n-1)\cdots(n-r+1)}\\
&\le \left(\frac{n-m}{n}\right)^r=\left(1-\frac{m}{n}\right)^r\le \exp\left(-\frac{rm}{n}\right)\\
&\le 1-\frac{rm}{2n},
\end{align*}
where we used the fact that $\exp(-x)\le 1-x/2$ on $x\in[0,1]$ to derive the last inequality. Thus with probability at least $rm/(2n)$, the set $R_1$ contains at least one element involved in a good pair.

Assuming that $R_1$ contains at least one element involved in a good pair, when choosing $R_2$ at random
with probability at least $r/n$ it contains a good pair.
Thus the overall probability that $R_1\times R_2$ contains a good pair (and thus satisfies Condition (i)) is at least $\Omega(dr^2/n^2)$.

The fraction of $(R_1,R_2)$ such that Condition (ii) does not hold is at most $1/n^3$ (the proof of this claim is exactly the same as for Lemma \ref{lemma:ratio}). The statement of the lemma then follows from the union bound.
\end{proof}

We thus get a quantum walk with running time 
\[
\tilde O\left(r+\frac{n}{r\sqrt{d}}(\sqrt{r}\times 1 + \sqrt{\ceil{\alpha (r^2/n+1)}}\cdot \sqrt{d})\right)
=\tilde O\left(r+\frac{n}{\sqrt{rd}}+\frac{n}{r}+\sqrt{n}\right).
\]
If $d\le \sqrt{n}$ then the expression is minimized for $r=n^{2/3}/d^{1/3}$ and the complexity is $\tilde O(n^{2/3}/d^{1/3})$. (Note that with this value of $r$ the condition of Lemma \ref{lemma:ratio-approx} is satisfied.) If $d> \sqrt{n}$ we obtain complexity $\tilde O(d)$ by taking $r=n/d$.\footnote{For $d> \sqrt{n}$, we can actually obtain the better upper bound $\tilde O(\sqrt{n})$ by slightly modifying the algorithm. The bound  $\tilde O(d)$ is nevertheless sufficient for our purpose.}

By combining the above algorithm with the $\tilde O(n/\sqrt{d})$-time algorithm of Lemma \ref{lemma:secondone}, we get overall complexity $\tilde O(n^{2/3})$.
\begin{theorem-repeat}{theorem:lcs-approx-perm}
\TheoremLCSApproxPerm
\end{theorem-repeat}
\begin{proof}
	We do a binary search on $d$. To verify a given $d$, we consider two cases: if $d < n^{2/3}$ we run the $\tilde O(n^{2/3}/d^{1/3}+d)$-time quantum algorithm we just described and otherwise we run the algorithm of Lemma~\ref{lemma:secondone}. Thus, the overall runtime is bounded by $\tilde O(n^{2/3})$.
\end{proof}
\section{Longest Palindrome Substring}\label{section:lps}
In this section, we present a quantum algorithm that solves the longest palindrome substring problem (\textsf{LPS}) in time $\tilde O(\sqrt{n})$. In Section \ref{sec:lb}, we complement this result by a tight lower bound of $\tilde \Omega(\sqrt{n})$ for quantum algorithms even for the special case of 0/1 strings and even for approximating the solution.

\subsection{General description of our quantum algorithm}
Our algorithm starts with a binary search on the size of the solution. We use $l$ and $u$ as a lower bound and an upper bound on the size of the solution. Initially we set $l = 1$ and $u = n$. Each time we set $d = \lceil (l+u)/2 \rceil$ and try to find a solution of size at least $d$. If such a solution exists, then we continue on by setting $l = d$. Otherwise we set $u = d-1$. After $O(\log n)$ steps we have $l = u$ which is equal to the size of the solution. 
Thus, in the following, we assume that an integer $d$ is given and our goal is to find out if a solution of size at least $d$ exists in the string. 

One thing to keep in mind is that if the solution size is larger than $d$, we may not necessarily have a palindrome substring of size $d$ but in that case, we certainly have a palindrome substring of size $d+1$. Thus, all it takes to verify if the solution size is at least $d$ is to look for a solution of size either $d$ or $d+1$, and we can simply consider each case separately. From here on, we assume that our string contains a palindrome substring of size exactly $d$ and we wish to find such a substring.

For each palindrome substring of length $d$, we mark the first $\lfloor d/2 \rfloor$ characters of this substring. We will show in Section \ref{sub:check} how to check if a character is marked or not in time $\tilde O(\sqrt{d})$. Since we may have several palindrome substrings of length $d$ we may have more than $d/2$ marked characters in our string. However, since there is at least one solution of size $d$, then the number of marked characters is at least $\Omega(d)$. Our quantum algorithm applies Grover algorithm to find one marked character. The running time is $\tilde O(\sqrt{n/d}\cdot\sqrt{d})=\tilde O(\sqrt{n})$. We thus obtain the following result.

\begin{theorem-repeat}{theorem:lps}
\TheoremLPS
\end{theorem-repeat}

\subsection{Description of the checking procedure}\label{sub:check}
We now describe the main idea of the procedure that checks whether a given position $r$ is a marked character or not. This is the main technical contribution of the whole section. 

Define the center of a palindrome substring as the average of its starting position and its ending position (it is not integer if the length of the palindrome substring is even). Note that if we fix a center $c$, we can via Grover's search determine in time $\tilde O(\sqrt{d})$ whether the size of the corresponding palindrome substring is $d$ or not. 

We now bring a crucial observation on which the checking procedure is built. Assume that position~$r$ of the input string is marked. Let $\ell$ be the leftmost character of a corresponding solution, i.e., the substring $A[\ell, \ell+d-1]$ of the input array is palindrome and position $r$ is one of the first $\lfloor d/2 \rfloor$ characters of this interval. Since $A[\ell, \ell+d-1]$ is palindrome then we have that 
$A[r,r+\lceil d/2 \rceil-1]$
is equal to the reverse of $A[2\ell+d-\lceil d/2 \rceil-r , 2\ell+d-r-1]$. See Figure \ref{fig:lcs-observation} for an illustration.

\begin{figure}[!htbp]
	\centering

\tikzset{every picture/.style={line width=0.75pt}} 

\begin{tikzpicture}[x=0.75pt,y=0.75pt,yscale=-1,xscale=1]

\draw   (52,40) -- (607,40) -- (607,71) -- (52,71) -- cycle ;
\draw  [fill={rgb, 255:red, 255; green, 210; blue, 132 }  ,fill opacity=1 ][dash pattern={on 0.84pt off 2.51pt}] (148,44) -- (420,44) -- (420,67) -- (148,67) -- cycle ;
\draw  [fill={rgb, 255:red, 80; green, 227; blue, 194 }  ,fill opacity=0.32 ][dash pattern={on 0.84pt off 2.51pt}] (237,76) -- (509,76) -- (509,99) -- (237,99) -- cycle ;
\draw    (65,59) .. controls (53,300) and (617,298) .. (595,58) ;
\draw [shift={(595,58)}, rotate = 264.76] [color={rgb, 255:red, 0; green, 0; blue, 0 }  ][fill={rgb, 255:red, 0; green, 0; blue, 0 }  ][line width=0.75]      (0, 0) circle [x radius= 3.35, y radius= 3.35]   ;
\draw [shift={(65,59)}, rotate = 92.85] [color={rgb, 255:red, 0; green, 0; blue, 0 }  ][fill={rgb, 255:red, 0; green, 0; blue, 0 }  ][line width=0.75]      (0, 0) circle [x radius= 3.35, y radius= 3.35]   ;
\draw    (95,59) .. controls (85,273) and (588,259) .. (569,60) ;
\draw [shift={(569,60)}, rotate = 264.55] [color={rgb, 255:red, 0; green, 0; blue, 0 }  ][fill={rgb, 255:red, 0; green, 0; blue, 0 }  ][line width=0.75]      (0, 0) circle [x radius= 3.35, y radius= 3.35]   ;
\draw [shift={(95,59)}, rotate = 92.68] [color={rgb, 255:red, 0; green, 0; blue, 0 }  ][fill={rgb, 255:red, 0; green, 0; blue, 0 }  ][line width=0.75]      (0, 0) circle [x radius= 3.35, y radius= 3.35]   ;
\draw    (165,59) .. controls (61,230) and (649,197) .. (501,61) ;
\draw [shift={(501,61)}, rotate = 222.58] [color={rgb, 255:red, 0; green, 0; blue, 0 }  ][fill={rgb, 255:red, 0; green, 0; blue, 0 }  ][line width=0.75]      (0, 0) circle [x radius= 3.35, y radius= 3.35]   ;
\draw [shift={(165,59)}, rotate = 121.31] [color={rgb, 255:red, 0; green, 0; blue, 0 }  ][fill={rgb, 255:red, 0; green, 0; blue, 0 }  ][line width=0.75]      (0, 0) circle [x radius= 3.35, y radius= 3.35]   ;

\draw (54,19.4) node [anchor=north west][inner sep=0.75pt]    {$\ell $};
\draw (545,19.4) node [anchor=north west][inner sep=0.75pt]    {$\ell +d-1$};
\draw (148,19.4) node [anchor=north west][inner sep=0.75pt]    {$r$};
\draw (327,18.4) node [anchor=north west][inner sep=0.75pt]    {$r+\lceil d/2\rceil -1$};
\draw (414,108.4) node [anchor=north west][inner sep=0.75pt]    {$2\ell +d-r-1$};
\draw (236,108.4) node [anchor=north west][inner sep=0.75pt]    {$2\ell +d-\lceil d/2\rceil -r$};
\draw (103,54.4) node [anchor=north west][inner sep=0.75pt]  [font=\Huge]  {$\dotsc $};
\draw (512,54.4) node [anchor=north west][inner sep=0.75pt]  [font=\Huge]  {$\dotsc $};
\draw (286,54.4) node [anchor=north west][inner sep=0.75pt]  [font=\Huge]  {$\dotsc $};

\end{tikzpicture}\vspace{-10mm}

	\caption{If $A[\ell,\ell+d-1]$ is a palindrome, then the blue substring is equal to the reverse of the orange substring.}\label{fig:lcs-observation}
\end{figure}
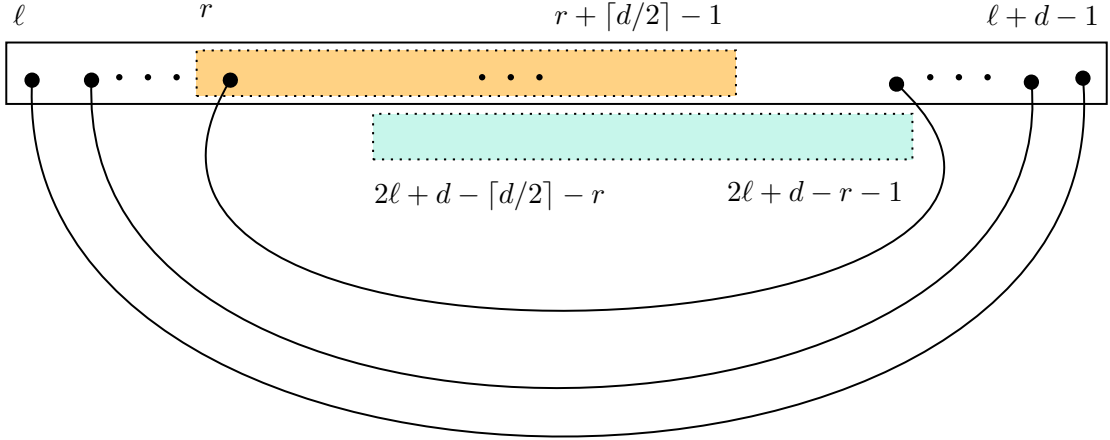

Let $S$ be the substring of length~$d$ that spans the interval $[r,r+d-1]$ of the input string. The main strategy of our checking procedure is to identify positions of $S$ that are candidates for the center of the solution. For this purpose we also define a pattern $P$ of length $\lceil d/2 \rceil$ which is equal to the reverse of interval $[r, r+\lceil d/2 \rceil -1]$. We say a position $i$ of $S$ matches with pattern $P$ if $S[i,i+|P|-1] = P$.

If $r$ is marked, then it is in the left half of some palindrome substring $A[\ell, \ell+d-1]$. Moreover, the size of~$S$ is equal to $d$ and by the way we construct $P$ it is equal to the reverse of $A[r,r+\lceil d/2 \rceil-1]$. Thus, $A[2\ell+d-\lceil d/2 \rceil-r , 2\ell+d-r-1]$ is equal to $P$ and completely lies inside $S$. Thus, by finding all positions of $S$ that match with $P$, we definitely find $A[2\ell+d-\lceil d/2 \rceil-r , 2\ell+d-r-1]$. In particular, if there is only one position of $S$ that matches with $P$, this enables to identify the center of $A[\ell,\ell+d-1]$.

In Section \ref{subsub:pattern} below we show how to identify efficiently all the positions of $S$ that match with pattern $P$. These positions give a list of candidates for the center of a solution. At least one of them is the center of a solution. If there is only one candidate (or a constant number of candidates), we can then via Grover's search determine in time $\tilde O(\sqrt{d})$ whether the size of the corresponding palindrome substring is $d$ or not. In Section~\ref{subsub:check}, we explain how to deal with the case where there are many candidates for the center.

\subsubsection{Identifying the patterns}\label{subsub:pattern}

Below we describe how to identify all positions of the $S$ that match with pattern $P$ by considering the periodicity of $P$. We show that this can be done in time $\tilde O(\sqrt{d})$.

To do so, we first find the rightmost occurrence of the $P$ in $S$ using the quantum algorithm of Ramesh and Vinay~\cite{ramesh2003string}. This takes time $\tilde O(\sqrt{d})$ since $|S|+|P| = O(d)$. If $P$ does not appear in~$S$ at all or only once, we are done. Otherwise, let $o_r$ be the rightmost position of $S$ that matches with $P$. That is $S[o_r,o_r+|P|-1] = P$. Moreover, let $o$ be the position of  the second rightmost occurrence of $P$ in $S$. We argue that $T = \{ o_r - i(o_r - o) \:|\: 0 \leq i \leq \lfloor (o_r-1) / (o_r-o) \rfloor \}$ is the set of all occurrences of $P$ in $S$. More precisely, every element of $T$ is the starting position of an occurrence of $P$ in $S$. Since $|S| \leq 2|P|$, there is one special case in which $o = o_r - |P|$. In this case $|T|$ contains exactly two elements and the proof is trivial. Thus, we assume in the following that the two rightmost occurrences of $P$ in $S$ overlap. 

Remember that we say a string $s$ is $q$-periodic if we have $s_i = s_{i+q}$ for all $1 \leq i \leq |s|-q$. Since~$P$ appears as a substring of $S$ at both locations $o_r$ and $o$, the pattern $P$ is $(o_r-o)$-periodic. Moreover, $P$ is not $i$-periodic for any $1 < i < o_r - o$ since otherwise $o$ would not be the position of the second rightmost occurrence of $P$ in $S$. We first prove that each element of $T$ is a position in $S$ that matches with $P$. Recall that by the way we construct $S$ and $P$, $P$ is equal to the reverse of the first $|P|$ characters of $S$. Moreover, $|S| \leq 2|P|$ and thus the entire interval $S[1,o_r+|P|-1]$ is $(o_r-o)$-periodic (we refer to Figure \ref{fig:periodic} for an illustration). 
This implies that every element of $T$ is the beginning of one occurrence of $P$ in $S$. 

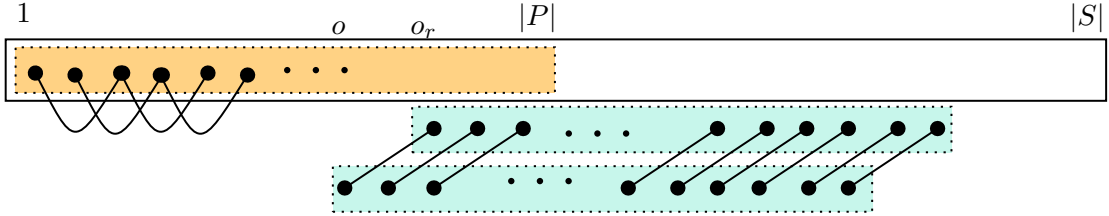
\begin{figure}[!htbp]
	\centering

	\tikzset{every picture/.style={line width=0.75pt}} 

\tikzset{every picture/.style={line width=0.75pt}} 

\begin{tikzpicture}[x=0.75pt,y=0.75pt,yscale=-1,xscale=1]

\draw   (52,40) -- (607,40) -- (607,71) -- (52,71) -- cycle ;
\draw  [fill={rgb, 255:red, 255; green, 210; blue, 132 }  ,fill opacity=1 ][dash pattern={on 0.84pt off 2.51pt}] (57,44) -- (329,44) -- (329,67) -- (57,67) -- cycle ;
\draw  [fill={rgb, 255:red, 80; green, 227; blue, 194 }  ,fill opacity=0.32 ][dash pattern={on 0.84pt off 2.51pt}] (257,74) -- (529,74) -- (529,97) -- (257,97) -- cycle ;
\draw  [fill={rgb, 255:red, 80; green, 227; blue, 194 }  ,fill opacity=0.32 ][dash pattern={on 0.84pt off 2.51pt}] (217,104) -- (489,104) -- (489,127) -- (217,127) -- cycle ;
\draw    (522,85) -- (477,115) ;
\draw [shift={(477,115)}, rotate = 146.31] [color={rgb, 255:red, 0; green, 0; blue, 0 }  ][fill={rgb, 255:red, 0; green, 0; blue, 0 }  ][line width=0.75]      (0, 0) circle [x radius= 3.35, y radius= 3.35]   ;
\draw [shift={(522,85)}, rotate = 146.31] [color={rgb, 255:red, 0; green, 0; blue, 0 }  ][fill={rgb, 255:red, 0; green, 0; blue, 0 }  ][line width=0.75]      (0, 0) circle [x radius= 3.35, y radius= 3.35]   ;
\draw    (502,85) -- (457,115) ;
\draw [shift={(457,115)}, rotate = 146.31] [color={rgb, 255:red, 0; green, 0; blue, 0 }  ][fill={rgb, 255:red, 0; green, 0; blue, 0 }  ][line width=0.75]      (0, 0) circle [x radius= 3.35, y radius= 3.35]   ;
\draw [shift={(502,85)}, rotate = 146.31] [color={rgb, 255:red, 0; green, 0; blue, 0 }  ][fill={rgb, 255:red, 0; green, 0; blue, 0 }  ][line width=0.75]      (0, 0) circle [x radius= 3.35, y radius= 3.35]   ;
\draw    (477,85) -- (432,115) ;
\draw [shift={(432,115)}, rotate = 146.31] [color={rgb, 255:red, 0; green, 0; blue, 0 }  ][fill={rgb, 255:red, 0; green, 0; blue, 0 }  ][line width=0.75]      (0, 0) circle [x radius= 3.35, y radius= 3.35]   ;
\draw [shift={(477,85)}, rotate = 146.31] [color={rgb, 255:red, 0; green, 0; blue, 0 }  ][fill={rgb, 255:red, 0; green, 0; blue, 0 }  ][line width=0.75]      (0, 0) circle [x radius= 3.35, y radius= 3.35]   ;
\draw    (456,85) -- (411,115) ;
\draw [shift={(411,115)}, rotate = 146.31] [color={rgb, 255:red, 0; green, 0; blue, 0 }  ][fill={rgb, 255:red, 0; green, 0; blue, 0 }  ][line width=0.75]      (0, 0) circle [x radius= 3.35, y radius= 3.35]   ;
\draw [shift={(456,85)}, rotate = 146.31] [color={rgb, 255:red, 0; green, 0; blue, 0 }  ][fill={rgb, 255:red, 0; green, 0; blue, 0 }  ][line width=0.75]      (0, 0) circle [x radius= 3.35, y radius= 3.35]   ;
\draw    (436,85) -- (391,115) ;
\draw [shift={(391,115)}, rotate = 146.31] [color={rgb, 255:red, 0; green, 0; blue, 0 }  ][fill={rgb, 255:red, 0; green, 0; blue, 0 }  ][line width=0.75]      (0, 0) circle [x radius= 3.35, y radius= 3.35]   ;
\draw [shift={(436,85)}, rotate = 146.31] [color={rgb, 255:red, 0; green, 0; blue, 0 }  ][fill={rgb, 255:red, 0; green, 0; blue, 0 }  ][line width=0.75]      (0, 0) circle [x radius= 3.35, y radius= 3.35]   ;
\draw    (411,85) -- (366,115) ;
\draw [shift={(366,115)}, rotate = 146.31] [color={rgb, 255:red, 0; green, 0; blue, 0 }  ][fill={rgb, 255:red, 0; green, 0; blue, 0 }  ][line width=0.75]      (0, 0) circle [x radius= 3.35, y radius= 3.35]   ;
\draw [shift={(411,85)}, rotate = 146.31] [color={rgb, 255:red, 0; green, 0; blue, 0 }  ][fill={rgb, 255:red, 0; green, 0; blue, 0 }  ][line width=0.75]      (0, 0) circle [x radius= 3.35, y radius= 3.35]   ;
\draw    (313,85) -- (268,115) ;
\draw [shift={(268,115)}, rotate = 146.31] [color={rgb, 255:red, 0; green, 0; blue, 0 }  ][fill={rgb, 255:red, 0; green, 0; blue, 0 }  ][line width=0.75]      (0, 0) circle [x radius= 3.35, y radius= 3.35]   ;
\draw [shift={(313,85)}, rotate = 146.31] [color={rgb, 255:red, 0; green, 0; blue, 0 }  ][fill={rgb, 255:red, 0; green, 0; blue, 0 }  ][line width=0.75]      (0, 0) circle [x radius= 3.35, y radius= 3.35]   ;
\draw    (290,85) -- (245,115) ;
\draw [shift={(245,115)}, rotate = 146.31] [color={rgb, 255:red, 0; green, 0; blue, 0 }  ][fill={rgb, 255:red, 0; green, 0; blue, 0 }  ][line width=0.75]      (0, 0) circle [x radius= 3.35, y radius= 3.35]   ;
\draw [shift={(290,85)}, rotate = 146.31] [color={rgb, 255:red, 0; green, 0; blue, 0 }  ][fill={rgb, 255:red, 0; green, 0; blue, 0 }  ][line width=0.75]      (0, 0) circle [x radius= 3.35, y radius= 3.35]   ;
\draw    (268,85) -- (223,115) ;
\draw [shift={(223,115)}, rotate = 146.31] [color={rgb, 255:red, 0; green, 0; blue, 0 }  ][fill={rgb, 255:red, 0; green, 0; blue, 0 }  ][line width=0.75]      (0, 0) circle [x radius= 3.35, y radius= 3.35]   ;
\draw [shift={(268,85)}, rotate = 146.31] [color={rgb, 255:red, 0; green, 0; blue, 0 }  ][fill={rgb, 255:red, 0; green, 0; blue, 0 }  ][line width=0.75]      (0, 0) circle [x radius= 3.35, y radius= 3.35]   ;
\draw    (67,57) .. controls (87,96) and (86,97) .. (111,57) ;
\draw [shift={(111,57)}, rotate = 302.01] [color={rgb, 255:red, 0; green, 0; blue, 0 }  ][fill={rgb, 255:red, 0; green, 0; blue, 0 }  ][line width=0.75]      (0, 0) circle [x radius= 3.35, y radius= 3.35]   ;
\draw [shift={(67,57)}, rotate = 62.85] [color={rgb, 255:red, 0; green, 0; blue, 0 }  ][fill={rgb, 255:red, 0; green, 0; blue, 0 }  ][line width=0.75]      (0, 0) circle [x radius= 3.35, y radius= 3.35]   ;
\draw    (87,58) .. controls (107,97) and (106,98) .. (131,58) ;
\draw [shift={(131,58)}, rotate = 302.01] [color={rgb, 255:red, 0; green, 0; blue, 0 }  ][fill={rgb, 255:red, 0; green, 0; blue, 0 }  ][line width=0.75]      (0, 0) circle [x radius= 3.35, y radius= 3.35]   ;
\draw [shift={(87,58)}, rotate = 62.85] [color={rgb, 255:red, 0; green, 0; blue, 0 }  ][fill={rgb, 255:red, 0; green, 0; blue, 0 }  ][line width=0.75]      (0, 0) circle [x radius= 3.35, y radius= 3.35]   ;
\draw    (110,57) .. controls (130,96) and (129,97) .. (154,57) ;
\draw [shift={(154,57)}, rotate = 302.01] [color={rgb, 255:red, 0; green, 0; blue, 0 }  ][fill={rgb, 255:red, 0; green, 0; blue, 0 }  ][line width=0.75]      (0, 0) circle [x radius= 3.35, y radius= 3.35]   ;
\draw [shift={(110,57)}, rotate = 62.85] [color={rgb, 255:red, 0; green, 0; blue, 0 }  ][fill={rgb, 255:red, 0; green, 0; blue, 0 }  ][line width=0.75]      (0, 0) circle [x radius= 3.35, y radius= 3.35]   ;
\draw    (130,58) .. controls (150,97) and (149,98) .. (174,58) ;
\draw [shift={(174,58)}, rotate = 302.01] [color={rgb, 255:red, 0; green, 0; blue, 0 }  ][fill={rgb, 255:red, 0; green, 0; blue, 0 }  ][line width=0.75]      (0, 0) circle [x radius= 3.35, y radius= 3.35]   ;
\draw [shift={(130,58)}, rotate = 62.85] [color={rgb, 255:red, 0; green, 0; blue, 0 }  ][fill={rgb, 255:red, 0; green, 0; blue, 0 }  ][line width=0.75]      (0, 0) circle [x radius= 3.35, y radius= 3.35]   ;

\draw (56,20.4) node [anchor=north west][inner sep=0.75pt]    {$1$};
\draw (309,20.4) node [anchor=north west][inner sep=0.75pt]    {$|P|$};
\draw (248,20.4) node [anchor=north west][inner sep=0.75pt]    {$ \begin{array}{l}
	o_{r}\\
	\end{array}$};
ßß

\draw (208,20.4) node [anchor=north west][inner sep=0.75pt]    {$ \begin{array}{l}
	o\\
	\end{array}$};
\draw (330,84.4) node [anchor=north west][inner sep=0.75pt]  [font=\Huge]  {$\dotsc $};
\draw (301,108.4) node [anchor=north west][inner sep=0.75pt]  [font=\Huge]  {$\dotsc $};
\draw (188,52.4) node [anchor=north west][inner sep=0.75pt]  [font=\Huge]  {$\dotsc $};
\draw (587,20.4) node [anchor=north west][inner sep=0.75pt]    {$|S|$};

\end{tikzpicture}

	\caption{The green substring is equal to pattern $P$ and the orange substring is the reverse of $P$. Since $P$ is $(o_r-o)$-periodic, then the entire substring of $S[1,o_r+|P|-1]$ is also $(o_r-o)$-periodic. }\label{fig:periodic}
\end{figure}

In order to show that $T$ contains all occurrences of $P$ in $S$, assume for the sake of contradiction that there is a position $p$ of $S$ that is the beginning of one occurrence of $P$ but $p$ is not included in $T$. This implies that there is one element $e \in T$ such that $|e-p| < o_r - o$. Thus, with the same argument $P$ is $|e-p|$-periodic. This is in contradiction with what we proved previously. 

 \subsubsection{Checking the patterns}\label{subsub:check}


We have shown that the set $T$ computed in Section \ref{subsub:pattern}
is the set of the starting positions of each occurrence of $P$ in $S$. We define 
\[
C = \{ r-1 + (e + \lceil d/2 \rceil)/2  \:|\: e \in T\},
\]
which is the set of all possible centers for the solution.  Since we consider all occurrences of $P$ in $S$, one of the elements in $C$ is equal to the actual center of our solution. If $|C| = O(1)$, one can iterate over all elements of $C$ and verify in time $O(\sqrt{d})$ if each one is a center for a palindrome substring of size $d$. 
In the following, we show that even if $|C|$ is large, we can narrow down the search to a constant number of elements in $C$, which makes it possible to find efficiently a palindrome substring of length~$d$.

We showed in Section \ref{subsub:pattern} that $C$ is large only if $P$ is periodic with small periodicity. Let $\alpha$ be the periodicity of~$P$. As we discuss in Section \ref{subsub:pattern}, every consecutive pair of elements in $T$ have distance $\alpha$.  Thus, each pair of consecutive elements in $C$ have distance $\alpha/2$. Let $y$ be the largest index of $A$ such that $A[r,y]$ is $\alpha$-periodic. Also, let $x$ be the smallest index of $A$ such that $A[x,y]$ is $\alpha$-periodic. We prove two things in the following: 
\begin{itemize}
\item[1)]
If $y-x+1 \geq d+\alpha$ 
then we already have a palindrome substring of size $d$ in the interval $[x,y]$ that can easily be found in time $\tilde O(\sqrt{d})$;
\item[2)] 
Otherwise the center of our solution has a distance of at most $\alpha/2$ from $(x+y)/2$, and thus we only need to consider a constant number of elements in $C$, which enables us to solve the problem in time $\tilde O(\sqrt{d})$ as well.
\end{itemize}
Here is the proof of the first statement.
\begin{lemma}\label{lemma-lcs1}
	If $y-x+1 \geq d+\alpha$ then one can find a palindrome substring of size at least $d$ in time $\tilde O(\sqrt{d})$.
\end{lemma}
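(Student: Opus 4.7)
The plan is to exploit the $\alpha$-periodic structure of $A[x,y]$ to locate a palindrome of length $\geq d$, and to verify it via a Grover-style mismatch search in time $\tilde O(\sqrt{d})$.

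First, I would set up the structure of the periodic interval. Let $Q = A[x, x+\alpha-1]$ be the (primitive) period, so that $A[x,y]$ agrees with a prefix of $Q^\infty$ of length $\geq d+\alpha$. Since $A[r,r+|P|-1] = P^R$ is $\alpha$-periodic and lies in $A[x,y]$, the period $Q$ is a cyclic rotation of $Q_P^R$, where $Q_P$ is the minimal period of $P$. The maximality conditions on $x$ and $y$ mean that the periodic pattern is broken at $A[x-1]$ and $A[y+1]$ (when those exist).

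Second, I would identify a small set of candidate palindrome centers inside $[x,y]$. Because consecutive elements of $C$ are spaced exactly $\alpha/2$ apart and $[x,y]$ has length $\geq d+\alpha$, I can take the centers in $C$ that lie in the middle portion of $[x,y]$, say within distance $\alpha$ of $(x+y)/2$; there are only $O(1)$ of them. For each such candidate $c$ I check whether $A[c-\lfloor d/2\rfloor,\, c+\lceil d/2\rceil-1] \subseteq [x,y]$ is indeed a palindrome by running Grover's search for the leftmost mismatch between the prefix and the reversed suffix of this length-$d$ window, which takes $\tilde O(\sqrt{d})$ time per candidate.

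Third, and this is the key combinatorial step, I would argue that at least one of these $O(1)$ candidate centers does yield a palindrome of length $\geq d$ inside $[x,y]$. The idea is that a length-$d$ palindromic factor inside a $Q$-periodic region of length $\geq d+\alpha$ can exist only if some cyclic rotation of $Q$ is itself a palindrome, equivalently $Q = uv$ with $u$ and $v$ both palindromes; conversely, once $Q$ admits such a decomposition, $(vu)^k$ with $k\alpha \geq d$ sits inside $A[x,y]$ (using $y-x+1 \geq d+\alpha$ to guarantee the room), and its center lies exactly on one of the $\alpha/2$-spaced positions of $C$ near the middle of $[x,y]$. Thus some candidate $c^\ast \in C$ in the middle region of $[x,y]$ must be a valid palindromic center, and our Grover check on $c^\ast$ will succeed.

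The main obstacle is this last step: rigorously proving that the periodic pattern $Q$ must admit a palindromic rotation and that at least one of the $O(1)$ middle centers of $C$ realizes a palindrome of length $\geq d$. I would handle this by combining two facts: (i) the observation from Section~\ref{subsub:pattern} that the candidate centers $C$ within $[x,y]$ are exactly the $\alpha/2$-spaced positions compatible with the periodicity; and (ii) Fine-and-Wilf type arguments showing that a length-$d$ palindrome in an $\alpha$-periodic window of length $\geq d+\alpha$ forces $Q$ into the form $uv$ with $u,v$ palindromic, whereupon shifts by $\alpha$ produce palindromes at every periodic translate within $[x,y]$, in particular at a position lying within the middle window of constant-many candidates we inspect.
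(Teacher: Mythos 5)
There is a genuine gap, and it sits exactly where you flag "the main obstacle." Your existence argument is circular: to conclude that $Q$ admits a palindromic decomposition you invoke the fact that ``a length-$d$ palindrome in an $\alpha$-periodic window of length $\geq d+\alpha$ forces $Q$ into the form $uv$,'' but the existence of such a palindrome is precisely what the lemma must establish. The lemma is invoked unconditionally (it must produce a palindrome of length $\geq d$ whenever $y-x+1\ge d+\alpha$, not only when $r$ is actually marked), and even under the markedness assumption the guaranteed solution $A[\ell,\ell+d-1]$ need not lie inside the periodic region $[x,y]$, so no Fine--Wilf argument applies to it. The non-circular seed you are missing is the one the paper uses: since $A[r,r+|P|-1]$ equals the reverse of $P$ and $P$ also occurs at position $r+e-1$ with overlap, the substring $A[r,\,r+e+|P|-2]$ is itself a palindrome; it is $\alpha$-periodic and lies in $[x,y]$, and the paper then translates it by multiples of $\alpha$ toward the middle of $[x,y]$ and extends it symmetrically (using $\alpha$-periodicity of the whole interval) to length at least $(y-x+1)-\alpha\ge d$. (As a side remark, ``some cyclic rotation of $Q$ is a palindrome'' is not equivalent to ``$Q=uv$ with $u,v$ palindromes'': $Q=ab$ is a product of two palindromes but has no palindromic rotation.)

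Your localization step is also unjustified and can fail outright. The set $C$ consists of the points $r-1+(e+\lceil d/2\rceil)/2$ for occurrences $e$ of $P$ inside $S=A[r,r+d-1]$, so every element of $C$ lies roughly in $[r+d/4,\,r+3d/4]$. The periodic region $[x,y]$, however, can extend up to distance $2d$ to the \emph{left} of $r$, so $(x+y)/2$ may be far from every element of $C$ and your set of ``middle candidates'' may be empty; moreover the center of the long palindrome produced by translation lies near $(x+y)/2$ and is generally an element of $C$ shifted by a multiple of $\alpha$, not an element of $C$ itself. This is why the paper's proof of this lemma does not search over $C$ at all: it constructs the palindrome $A[x'',y'']$ directly from the seed and reads off its endpoints, with the only quantum work being the $\tilde O(\sqrt{d})$ Grover searches needed to delimit the periodic extension of $r$ up to distance $O(d)$.
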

\begin{proof}
	
		\begin{figure}[!htbp]
	\centering

\tikzset{every picture/.style={line width=0.75pt}} 

\begin{tikzpicture}[x=0.75pt,y=0.75pt,yscale=-1,xscale=1]

\draw   (26,140) -- (640,140) -- (640,171) -- (26,171) -- cycle ;
\draw  [fill={rgb, 255:red, 255; green, 210; blue, 132 }  ,fill opacity=1 ][dash pattern={on 0.84pt off 2.51pt}] (451,144) -- (573,144) -- (573,167) -- (451,167) -- cycle ;
\draw  [fill={rgb, 255:red, 255; green, 210; blue, 132 }  ,fill opacity=1 ][dash pattern={on 0.84pt off 2.51pt}] (241,144) -- (363,144) -- (363,167) -- (241,167) -- cycle ;
\draw  [fill={rgb, 255:red, 126; green, 211; blue, 33 }  ,fill opacity=0.47 ][dash pattern={on 0.84pt off 2.51pt}] (26,174) -- (600,174) -- (600,197) -- (26,197) -- cycle ;
\draw    (605,180) -- (635,180) ;
\draw [shift={(637,180)}, rotate = 180] [color={rgb, 255:red, 0; green, 0; blue, 0 }  ][line width=0.75]    (10.93,-3.29) .. controls (6.95,-1.4) and (3.31,-0.3) .. (0,0) .. controls (3.31,0.3) and (6.95,1.4) .. (10.93,3.29)   ;
\draw [shift={(603,180)}, rotate = 0] [color={rgb, 255:red, 0; green, 0; blue, 0 }  ][line width=0.75]    (10.93,-3.29) .. controls (6.95,-1.4) and (3.31,-0.3) .. (0,0) .. controls (3.31,0.3) and (6.95,1.4) .. (10.93,3.29)   ;
\draw    (252.5,111.65) .. controls (294.33,74.7) and (412.11,77.64) .. (449.36,112.39) ;
\draw [shift={(451,114)}, rotate = 225.78] [fill={rgb, 255:red, 0; green, 0; blue, 0 }  ][line width=0.08]  [draw opacity=0] (10.72,-5.15) -- (0,0) -- (10.72,5.15) -- (7.12,0) -- cycle    ;
\draw [shift={(250,114)}, rotate = 315] [fill={rgb, 255:red, 0; green, 0; blue, 0 }  ][line width=0.08]  [draw opacity=0] (10.72,-5.15) -- (0,0) -- (10.72,5.15) -- (7.12,0) -- cycle    ;

\draw (26,120.4) node [anchor=north west][inner sep=0.75pt]    {$x$};
\draw (517,118.4) node [anchor=north west][inner sep=0.75pt]    {$ \begin{array}{l}
	r+e+|P|-2\\
	\end{array}$};
\draw (447,118.4) node [anchor=north west][inner sep=0.75pt]    {$ \begin{array}{l}
	r\\
	\end{array}$};
\draw (627,120.4) node [anchor=north west][inner sep=0.75pt]    {$y$};
\draw (238,118.4) node [anchor=north west][inner sep=0.75pt]    {$x'$};
\draw (348,118.4) node [anchor=north west][inner sep=0.75pt]    {$x'$};
\draw (26,210.4) node [anchor=north west][inner sep=0.75pt]    {$x''$};
\draw (587,210.4) node [anchor=north west][inner sep=0.75pt]    {$y''$};
\draw (600,184.4) node [anchor=north west][inner sep=0.75pt]    {$\ < \alpha $};
\draw (304,60) node [anchor=north west][inner sep=0.75pt]   [align=left] {distance};
\draw (359,59.4) node [anchor=north west][inner sep=0.75pt]    {$\equiv 0$};
\draw (360,52) node [anchor=north west][inner sep=0.75pt]    {$\alpha $};

\end{tikzpicture}

	\caption{Illustration for the proof of Lemma \ref{lemma-lcs1}. All the colored intervals are palindrome.}
\end{figure}
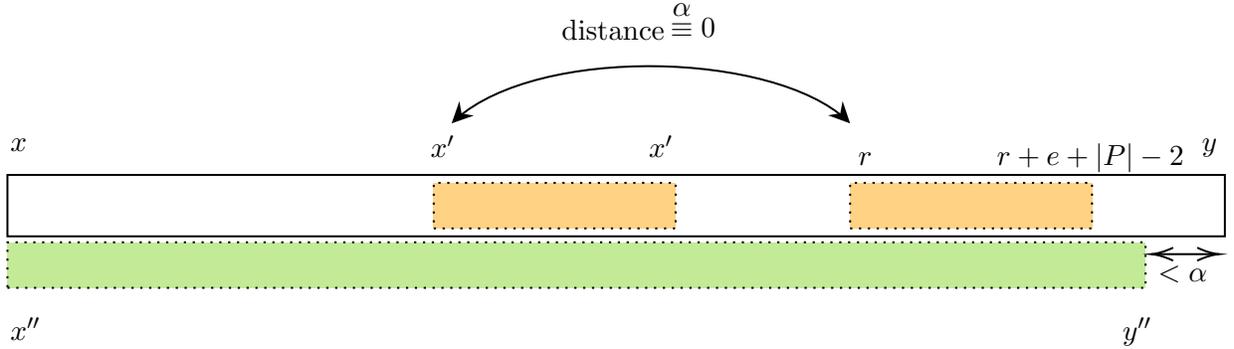
	Let $e \in T$ be an arbitrary element of $T$. This means that $P = A[r+e-1,r+e-1+|P|-1]$. Since $A[r,r+|P|-1]$ is equal to the reverse of $P$ and the two intervals $[r,r+|P|-1]$ and $[r+e-1,r+e-1+|P|-1]$ overlap then $A[r,r+e-1+|P|-1]$ is a palindrome substring. Recall that $P$ is $\alpha$-periodic and thus so is the entire substring $A[r,r+e-1+|P|-1]$. This implies that interval $[r,r+e-1+|P|-1]$ is inside interval $[x,y]$. We start by $x' = r$ and $y' = r+e-1+|P|-1$. Next, we shift $[x',y']$ within $[x,y]$ by multiples of $\alpha$ until $||x-x'|-|y-y'|| \leq \alpha$. Since $A[x,y]$ is $\alpha$-periodic, we still have $A[x',y'] = A[r,r+e-1+|P|-1]$ and therefore $A[x',y']$ is palindrome. Finally, we set $x'' = x' - \min\{|x-x'|,|y-y'|\}$ and $y'' = y'+ \min\{|x-x'|,|y-y'|\}$. It follows from the periodicity of $A[x,y]$ that $A[x'',y'']$ is palindrome. Moreover, the size of $[x'',y'']$ is at most $\alpha$ smaller that the size of $[x,y]$. Thus, $A[x'',y'']$ is a desired substring.

	It is easy to find such a solution in time $\tilde O(\sqrt{d})$. The only part of the proof which is non-trivial from a computational standpoint is finding $[x,y]$. If $y-x = O(d)$ then one can find it via Grover's search in time $\tilde O(\sqrt{d})$. Otherwise, instead of finding $x$ and $y$, we extend $r$ from both ends up to a distance of $2d$ so long as it remains $\alpha$-periodic. This can be done in time $\tilde O(\sqrt{d})$ and then with the same analysis we can find a palindrome substring of size at least $d$.
\end{proof}

The proof of the second statement relies on the following lemma.

\begin{lemma}\label{lemma:2}
   If $y-x+1 < d+\alpha$ then the center of the solution cannot be more than $2\alpha$ away from $(x+y)/2$.
\end{lemma}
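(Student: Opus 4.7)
The strategy is to argue by contradiction: assume $|c-(x+y)/2|>2\alpha$, WLOG $c-(x+y)/2>2\alpha$, and derive a violation of either the $\alpha$-periodicity of $[x,y]$, its maximality, or the hypothesis $y-x+1<d+\alpha$. The key tool is the interaction between the palindrome's reflection $\sigma(p)=2c-p$ and the $\alpha$-periodic structure of $[x,y]$.

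First, I would observe that since $[r,r+\lceil d/2\rceil-1]\subseteq[x,y]$ is $\alpha$-periodic (being the reverse of $P$, which has period $\alpha$), its reflection $[2c-r-\lceil d/2\rceil+1,\,2c-r]$ is also $\alpha$-periodic and lies in $I=[\ell,\ell+d-1]$. Applying this more generally, the reflection of $[x,y]\cap I$ under $\sigma$ produces another $\alpha$-periodic interval within $I$, call it $J'$, obtained by showing that for any $p\in I$ with $p+\alpha\in I$, the palindrome identity $A[p]=A[2c-p]$ combined with $A[p]=A[p+\alpha]$ forces $A[2c-p]=A[2c-p-\alpha]$.

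Next, I would analyze the breakpoints. Assuming $[x,y]\subseteq I$ (the ``easy'' case), maximality of $[x,y]$ gives $A[x-1]\neq A[x-1+\alpha]$ and $A[y+1]\neq A[y+1-\alpha]$, and since $x-1,y+1\in I$ these inequalities transfer by palindrome reflection to inequalities $A[2c-x+1]\neq A[2c-x+1-\alpha]$ and $A[2c-y-1]\neq A[2c-y-1+\alpha]$. If $c$ were far from $(x+y)/2$, the reflected breakpoints would fall strictly inside $[x,y]$, contradicting the $\alpha$-periodicity of $[x,y]$; the computation shows that these reflected positions are forced inside $[x,y]$ unless $c$ is within $O(1)$ of $(x+y)/2$, giving $|c-(x+y)/2|\leq \alpha/2\ll 2\alpha$.

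The hard part is when $[x,y]$ extends beyond $I$ on one or both sides (so that $x<\ell$ or $y>\ell+d-1$, and the reflected-breakpoint argument partially fails). Here I would parametrize by $u=r-x\geq 0$ and $v=y-(2c-r)\geq 0$, noting $|c-(x+y)/2|=|u-v|/2$ and $u+v=y-x+1-(d-2k)$ where $k=r-\ell$. The hypothesis $y-x+1<d+\alpha$ yields $u+v<2k+\alpha$. If only one side exceeds $I$, say $y>\ell+d-1$ (so $v>k$), then by reflecting $[x,\ell+d-1]\subseteq I$ around $c$ we obtain an $\alpha$-periodic interval $[\ell,\ell+d-1-k+u]\subseteq I$. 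Its overlap with $[x,y]$ has length $d-2k+2u$, and whenever this is $\geq\alpha$ the two merge, forcing (by maximality) $x=\ell$, hence $u=k$, and then $|c-(x+y)/2|=(v-k)/2<\alpha/2$. Otherwise $d-2k+2u<\alpha$, which combined with $u+v<2k+\alpha$ and $u\geq 0$ pins $|u-v|$ to $O(\alpha)$.

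The main obstacle will be bookkeeping the boundary cases cleanly: tracking which of $x-1,y+1,x-1+\alpha,y+1-\alpha$ lie in $I$, and ensuring that the reflected-interval overlap argument yields a contradiction with maximality in every subcase. One must also be careful that the short overlaps ($<\alpha$) which cannot force a merge still yield the quantitative bound $|u-v|\leq 4\alpha$ via the constraint $u+v<2k+\alpha$, since in those subcases one can further bound $k$ using the inequality $d-2k+2u<\alpha$. Putting all cases together gives $|c-(x+y)/2|\leq 2\alpha$.
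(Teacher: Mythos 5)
Your core mechanism is the right one and is exactly the paper's: use the palindrome reflection $p\mapsto 2c-p$ to transport $\alpha$-periodicity across the center and contradict the maximality of $[x,y]$. But the paper needs no case split at all. It assumes for contradiction that $c<(x+y)/2-2\alpha$ (the other side is symmetric) and observes that, together with the hypothesis $y-x+1<d+\alpha$, this already forces $\ell<x-\alpha$: indeed $\ell=c-(d-1)/2<(x+y)/2-2\alpha-(d-1)/2=x+\frac{(y-x+1)-d}{2}-2\alpha<x+\frac{\alpha}{2}-2\alpha$. Hence the whole block $A[x-\alpha,x-1]$ lies inside the palindrome, its reflection about $c$ lies inside the $\alpha$-periodic interval, and combining the palindrome identity with periodicity there gives $A[p]=A[p+\alpha]$ for every $p\in[x-\alpha,x-1]$. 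Thus $A[x-\alpha,y]$ is $\alpha$-periodic, contradicting the minimality of $x$. Your parametrization by $u=r-x$, $v=y-(2c-r)$, $k=r-\ell$ and the split on whether $[x,y]\subseteq I$ reconstructs this in a far more laborious way, and you should compare your intended argument against this three-line version.

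More importantly, your hard case has a concrete gap. In the subcase where the overlap $d-2k+2u$ is smaller than $\alpha$, you claim that $d-2k+2u<\alpha$ combined with $u+v<2k+\alpha$ ``pins $|u-v|$ to $O(\alpha)$,'' with $k$ to be ``further bounded'' from the same inequality. But $d-2k+2u<\alpha$ is a \emph{lower} bound on $k-u$ (namely $k-u>(d-\alpha)/2$), whereas what you need to control $v-u<2k+\alpha-2u$ is an \emph{upper} bound on $k-u$. Substituting the lower bound only tells you $2k+\alpha-2u>d$, so the resulting estimate on $v-u$ is vacuous (weaker than $d$), and since $k$ can genuinely be as large as about $d/2$ there is no hidden $O(\alpha)$ bound on $k$ to rescue it. That branch therefore does not close as written; the missing constraint is exactly the one the paper extracts up front, namely that $|c-(x+y)/2|>2\alpha$ together with $y-x+1<d+\alpha$ pushes an endpoint of the palindrome at least $\alpha$ past the corresponding endpoint of $[x,y]$, after which a single reflection step finishes the proof.
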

\begin{proof}
	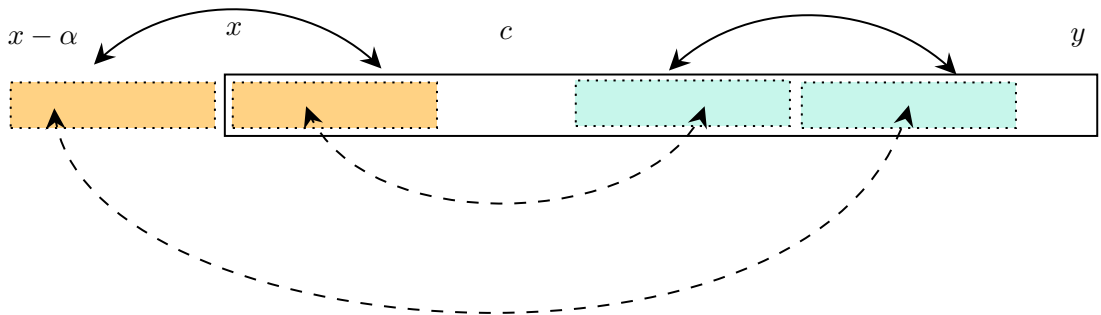
\begin{figure}[!htbp]
	\centering

\tikzset{every picture/.style={line width=0.75pt}} 

\begin{tikzpicture}[x=0.75pt,y=0.75pt,yscale=-1,xscale=1]

\draw   (175,86) -- (615,86) -- (615,117) -- (175,117) -- cycle ;
\draw  [fill={rgb, 255:red, 255; green, 210; blue, 132 }  ,fill opacity=1 ][dash pattern={on 0.84pt off 2.51pt}] (67,90) -- (170,90) -- (170,113) -- (67,113) -- cycle ;
\draw  [fill={rgb, 255:red, 255; green, 210; blue, 132 }  ,fill opacity=1 ][dash pattern={on 0.84pt off 2.51pt}] (179,90) -- (282,90) -- (282,113) -- (179,113) -- cycle ;
\draw  [fill={rgb, 255:red, 80; green, 227; blue, 194 }  ,fill opacity=0.32 ][dash pattern={on 0.84pt off 2.51pt}] (466,90) -- (574,90) -- (574,113) -- (466,113) -- cycle ;
\draw  [fill={rgb, 255:red, 80; green, 227; blue, 194 }  ,fill opacity=0.32 ][dash pattern={on 0.84pt off 2.51pt}] (352,89) -- (460,89) -- (460,112) -- (352,112) -- cycle ;
\draw  [dash pattern={on 4.5pt off 4.5pt}]  (217.21,105.01) .. controls (245.14,168.86) and (389.91,164.21) .. (415.9,104.75) ;
\draw [shift={(417,102)}, rotate = 470.06] [fill={rgb, 255:red, 0; green, 0; blue, 0 }  ][line width=0.08]  [draw opacity=0] (10.72,-5.15) -- (0,0) -- (10.72,5.15) -- (7.12,0) -- cycle    ;
\draw [shift={(216,102)}, rotate = 69.81] [fill={rgb, 255:red, 0; green, 0; blue, 0 }  ][line width=0.08]  [draw opacity=0] (10.72,-5.15) -- (0,0) -- (10.72,5.15) -- (7.12,0) -- cycle    ;
\draw  [dash pattern={on 4.5pt off 4.5pt}]  (89.32,106.57) .. controls (103.55,223.1) and (475.66,256.11) .. (519.37,103.81) ;
\draw [shift={(520,101.5)}, rotate = 464.68] [fill={rgb, 255:red, 0; green, 0; blue, 0 }  ][line width=0.08]  [draw opacity=0] (10.72,-5.15) -- (0,0) -- (10.72,5.15) -- (7.12,0) -- cycle    ;
\draw [shift={(89,103)}, rotate = 86.66] [fill={rgb, 255:red, 0; green, 0; blue, 0 }  ][line width=0.08]  [draw opacity=0] (10.72,-5.15) -- (0,0) -- (10.72,5.15) -- (7.12,0) -- cycle    ;
\draw    (401.43,81.65) .. controls (441.14,44.77) and (506.72,49.58) .. (542.39,84.39) ;
\draw [shift={(544,86)}, rotate = 225.78] [fill={rgb, 255:red, 0; green, 0; blue, 0 }  ][line width=0.08]  [draw opacity=0] (10.72,-5.15) -- (0,0) -- (10.72,5.15) -- (7.12,0) -- cycle    ;
\draw [shift={(399,84)}, rotate = 315] [fill={rgb, 255:red, 0; green, 0; blue, 0 }  ][line width=0.08]  [draw opacity=0] (10.72,-5.15) -- (0,0) -- (10.72,5.15) -- (7.12,0) -- cycle    ;
\draw    (111.43,78.65) .. controls (151.14,41.77) and (216.72,46.58) .. (252.39,81.39) ;
\draw [shift={(254,83)}, rotate = 225.78] [fill={rgb, 255:red, 0; green, 0; blue, 0 }  ][line width=0.08]  [draw opacity=0] (10.72,-5.15) -- (0,0) -- (10.72,5.15) -- (7.12,0) -- cycle    ;
\draw [shift={(109,81)}, rotate = 315] [fill={rgb, 255:red, 0; green, 0; blue, 0 }  ][line width=0.08]  [draw opacity=0] (10.72,-5.15) -- (0,0) -- (10.72,5.15) -- (7.12,0) -- cycle    ;

\draw (174,58.4) node [anchor=north west][inner sep=0.75pt]    {$x$};
\draw (600,61.4) node [anchor=north west][inner sep=0.75pt]    {$y$};
\draw (64,60.4) node [anchor=north west][inner sep=0.75pt]    {$x-\alpha $};
\draw (312,60.4) node [anchor=north west][inner sep=0.75pt]    {$c$};

\end{tikzpicture}\vspace{-7mm}

	\caption{Illustration for the proof of Lemma \ref{lemma:2}. Dashed arrows show that the two strings are the reverse of each other. Solid arrows show that the two strings are equal.}
\end{figure}
   We denote the optimal solution by $A[\ell, \ell+d-1]$. Let $c = \ell + (d-1)/2$ be the center of the solution. Assume for the sake of contradiction that $c < (x+y)/2-2\alpha$. This means that $\ell < x-\alpha$. Thus, the interval $A[x-\alpha,x-1]$ is exactly equal to the interval $A[x,x+\alpha-1]$ and thus $A[x-\alpha,y]$ is also $\alpha$-periodic. This is in contradiction with the maximality of $[x,y]$. A similar proof holds for the case that $c > (x+y)/2+2\alpha$.
\end{proof}

Lemma \ref{lemma:2} implies that we only need to consider all candidates in $C$ that are within the range $[(x+y)/2-2\alpha,(x+y)/2+2\alpha]$. Moreover, the distance between every pair of candidates in $C$ is $\alpha/2$ and therefore we only need to consider a constant number of elements. Therefore, we can find the solution in time $\tilde O (\sqrt{d})$. 

\providecommand{\proc}{\mathrm{XLI_2}}
\providecommand{\Xloss}{\mathrm{Xloss}}
\providecommand{\Est}[1]{\textsf{QTest}(#1)}
\providecommand{\myLARGE}{\textsf{LARGE}}
\providecommand{\mySMALL}{\textsf{SMALL}}
\providecommand{\Error}{\textsf{ERROR}}
\section{$(1+\epsilon)$-Approximation of the Ulam distance}\label{sec:Ulam}
In this section we prove the following theorem.
\begin{theorem-repeat}{theorem:ulam}
\TheoremUlam
\end{theorem-repeat}

\subsection{Classical indicator for the Ulam distance}
Naumovitz, Saks and Seshadhri \cite{Naumovitz+SODA17} showed how to construct, for any constant $\epsilon>0$,  a classical algorithm that computes a $(1+\epsilon)$-approximation of the Ulam distance of two non-repetitive strings $A,B$ of length~$n$ in time $\tilde O(n/\edit(A,B)+\sqrt{n})$. Their algorithm is complex: it consists of nine procedures that form a hierarchy of gap tests and estimators, each with successively better run time. The core technique, which lies at the lowest layer of the hierarchy and gives a very good --- but slow --- estimation of the Ulam distance, is a variant of the Saks-Seshadhri algorithm for estimating the longest increasing sequence from~\cite{Saks+FOCS10}. 

For our purpose, we will only need the following result from~\cite{Naumovitz+SODA17}.\footnote{The precise statement of this result appears in Lemma 8.5 and Table 1 in~\cite{Naumovitz+SODA17}. The procedure is denoted $\textsf{XLI}_2$ in~\cite{Naumovitz+SODA17}. Note that the original statement is actually more general: it considers strings $A,B$ of different lengths and gives an indicator for a slightly different quantity (called $\Xloss(A,B)$ in \cite{Naumovitz+SODA17}). When $A$ and $B$ have the same length we have $\Xloss(A,B)=\frac{1}{2}\edit(A,B)$, which gives the statement of Proposition \ref{prop:Ulam}.} 
\begin{proposition}[\cite{Naumovitz+SODA17}]\label{prop:Ulam}
Let $\delta>0$ be any constant. For any two non-repetitive strings $A,B$ of length~$n$ and any integer parameter $t'\ge c\cdot\edit(A,B)$ for some constant $c$ depending only on $\delta$, there exists a $\tilde O(\sqrt{t'})$-time classical algorithm $\Estimator{A,B,\delta,t'}$ that outputs $1$ with some probability $p$ and $0$ with probability $(1-p)$, for some probability $p$ such that 
\[
p\in \left[(1-\delta)\cdot \frac{\edit(A,B)}{n+t'},(1+\delta)\cdot \frac{\edit(A,B)}{n+t'}+\frac{\delta}{n+t'}\right].
\]
\end{proposition}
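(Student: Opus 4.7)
The plan is to derive Proposition~\ref{prop:Ulam} by specializing a more general result already in the literature. The starting point would be Lemma~8.5 together with the entry for $\textsf{XLI}_2$ in Table~1 of \cite{Naumovitz+SODA17}: for two non-repetitive strings $A$, $B$ of possibly different lengths, any constant $\delta > 0$, and any integer $t'$ at least a constant (depending on $\delta$) times $\Xloss(A,B)$, this procedure runs in $\tilde O(\sqrt{t'})$ time and outputs $1$ with some probability $p$ approximating $\Xloss(A,B)/(|A|+|B|+t')$ within a multiplicative $(1\pm\delta)$ factor plus an additive term of order $\delta/(|A|+|B|+t')$. I would simply take this procedure as the definition of $\Estimator{A,B,\delta,t'}$, so there is no new algorithmic content to design.

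The second step is the translation from the quantity $\Xloss$ to the Ulam distance $\edit$. For two non-repetitive strings of \emph{equal} length, every character that the optimal alignment leaves unpaired on one side is matched by exactly one unpaired character on the other side, which in turn forces one deletion and one insertion in the optimal edit sequence. Hence the two sides of the alignment contribute equally to $\Xloss$, giving the elementary identity $\Xloss(A,B) = \tfrac12\,\edit(A,B)$ whenever $|A|=|B|=n$. Substituting $|A|+|B|=2n$ and this identity into the interval bound for $p$, then rescaling $t'$ by a constant factor and absorbing the resulting constants into the constant $c$ and into the $\tilde O(\cdot)$ notation, yields precisely the interval $\left[(1-\delta)\,\edit(A,B)/(n+t'),\,(1+\delta)\,\edit(A,B)/(n+t') + \delta/(n+t')\right]$ claimed in the proposition. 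The hypothesis $t'\ge c\cdot \edit(A,B)$ is inherited from the original hypothesis $t'\ge c'\cdot \Xloss(A,B)$ under the same substitution, with a new constant $c$ still depending only on $\delta$.

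The step I expect to need the most care is tracking the precise shape of the additive error term. In \cite{Naumovitz+SODA17} the additive contribution to $p$ has a denominator tied to $|A|+|B|+t'$, and after both substituting $|A|+|B|=2n$ and rescaling $t'$ one must verify that the end result matches $\delta/(n+t')$ exactly, rather than some slightly different expression such as $\delta/(2n+t')$. I would handle this not by applying the general lemma as a black box, but by reading the proof of Lemma~8.5 of \cite{Naumovitz+SODA17} with the specialization $|A|=|B|=n$ substituted in from the start, so that the bookkeeping stays clean. Beyond this numerical bookkeeping there is no conceptual obstacle: Proposition~\ref{prop:Ulam} is a routine repackaging of an already-proven classical indicator, whose nontrivial construction is itself a prior result built on the Saks--Seshadhri longest-increasing-subsequence estimator of \cite{Saks+FOCS10}.
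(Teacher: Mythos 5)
Your proposal matches the paper's treatment exactly: the paper does not prove this proposition but cites Lemma~8.5 and Table~1 of \cite{Naumovitz+SODA17} (the procedure $\textsf{XLI}_2$) and notes in a footnote that for equal-length non-repetitive strings $\Xloss(A,B)=\frac{1}{2}\edit(A,B)$, which is precisely your translation step. Your extra care about the denominator ($|A|+|B|+t'$ versus $n+t'$) and the rescaling of $t'$ is reasonable bookkeeping that the paper glosses over, but it does not change the argument.
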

The procedure of Proposition \ref{prop:Ulam} lies at one of the lowest layers of the hierarchy in \cite{Naumovitz+SODA17}, and works by applying the Saks-Seshadhri algorithm from~\cite{Saks+FOCS10} on randomly chosen substrings of $A$ and~$B$ (of size roughly $t$).\footnote{We stress that this procedure is still too slow to get a $\tilde O(n/\edit(A,B)+\sqrt{\edit(A,B)})$-time classical algorithm. Indeed, getting a good approximation of $\edit(A,B)$ based only on this procedure would require repeating it $\Omega((n+t)/\edit(A,B))$ times, which would require $\Omega(\sqrt{\edit(A,B)}\cdot (n+t)/\edit(A,B))=\Omega(n/\sqrt{\edit(A,B)})$ time. This is why several additional techniques (which lead to several additional layers in the hierarchy) are used in~\cite{Naumovitz+SODA17}.} 


\subsection{Quantum algorithm for the Ulam distance}
To prove Theorem \ref{theorem:ulam}, the basic idea is to apply quantum amplitude estimation on the classical algorithm \textsf{UlamIndic} from Proposition~\ref{prop:Ulam}. Let us explain this strategy more precisely and show the technical difficulties we need to overcome. Assume that we know that $\edit(A,B)$ is in the interval $[D_1,D_2]$ for some values $D_1$ and $D_2$. We can then use  $\Estimator{A,B,\delta,t}$ with $t=c\cdot D_2$, and apply the quantum amplitude estimation algorithm of Theorem~\ref{th:ae} to estimate the probability it outputs~1. If we use~$k=\Theta(\sqrt{(n+t)/D_1})=\Theta(\sqrt{n/D_1})$  in Theorem~\ref{th:ae} and $\delta$ small enough, we will get a good approximation of the quantity $\edit(A,B)/(n+t)$, and thus a good approximation of $\edit(A,B)$. The complexity of this strategy is 
$
\tilde O(\sqrt{n/D_1}\cdot \sqrt{D_2}),
$
which is $\tilde O(\sqrt{n})$  when $D_1\approx D_2$. 

The main issue is that we do not know such tight upper and lower bounds on $\edit(A,B)$. Concerning the upper bound, we overcome this difficultly by simply successively trying $D_2=n$, $D_2=(1-\eta) n$, $D_2=(1-\eta)^2 n$, $\ldots$ for some small constant $\eta$ (for technical reasons we actually start from $D_2=(1-\eta)(1-\epsilon)\sqrt{n}/c$, and deal with the case of larger $D_2$ using the classical algorithm from \cite{Naumovitz+SODA17}). For the lower bound, on the other hand, we cannot simply start with $D_1=1$ and iteratively increase this value, since the cost would be too high: in order to achieve an overall running time of $\tilde O(\sqrt{n})$ we need to keep $D_1\approx D_2$. Instead of estimating the probability that \textsf{UlamIndic} outputs 1 using Theorem \ref{th:ae}, which is too costly, we thus design a gap test (see Proposition \ref{prop:qtest} below) that enables us to check if this probability is larger than $D_2$ or smaller than $(1-\eta)D_2$ much more efficiently, in time $\tilde O(\sqrt{n/D_2})$. 

We now present the details of our quantum algorithm. Let us first present the gap test.
This test relies on quantum amplitude amplification. Here is the precise statement of quantum amplitude amplification that we will use.

\begin{theorem}[Theorem 12 in \cite{brassard2002quantum}]\label{th:ae}
Let $\Aa$ be a classical algorithm that runs in time $T$, outputs~$1$ with probability $p$ and outputs $0$ with probability $1-p$, for some (unknown) value $p\in [0,1]$. For any integer $k\ge 1$, there exists a quantum algorithm that runs in time $\tilde O(kT)$ and outputs with probability at least $8/\pi^2$ an estimate $\tilde p$ such that
\[
|p-\tilde p|\le 2\pi\frac{\sqrt{p(1-p)}}{k}+\frac{\pi^2}{k^2}.
\]
\end{theorem}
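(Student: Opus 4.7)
The plan is to prove this via the standard amplitude estimation construction of Brassard, H\o{}yer, Mosca and Tapp, which combines Grover-like iteration with quantum phase estimation. First I would convert the classical randomized algorithm $\Aa$ into a unitary operator $U_\Aa$ acting on a workspace register plus a single-qubit ``output'' register. This uses standard reversible simulation of the randomness and the classical computation: the randomness of $\Aa$ is provided by a uniform superposition over ancilla bits, and the branch where $\Aa$ would output $1$ is flagged in the output qubit. This costs $\tilde O(T)$ per application, and yields
\[
U_\Aa\ket{0}=\sqrt{1-p}\,\ket{\psi_0}\ket{0}+\sqrt{p}\,\ket{\psi_1}\ket{1},
\]
for some normalized states $\ket{\psi_0},\ket{\psi_1}$.

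Next I would define the Grover-type iterate $G=-U_\Aa S_0 U_\Aa^{-1}S_\chi$, where $S_\chi$ flips the phase of the subspace marked by the output qubit being $\ket{1}$, and $S_0$ flips the phase of $\ket{0}$. A standard computation shows that $G$ acts as a rotation by angle $2\theta$ in the two-dimensional invariant subspace spanned by $\ket{\psi_0}\ket{0}$ and $\ket{\psi_1}\ket{1}$, where $\theta\in[0,\pi/2]$ satisfies $\sin^2\theta=p$. Consequently, restricted to that subspace, $G$ has eigenvalues $e^{\pm 2i\theta}$, and the initial state $U_\Aa\ket{0}$ lies in this subspace.

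I would then apply quantum phase estimation to $G$ on the state $U_\Aa\ket{0}$ using $k$ controlled applications of $G$ (and thus $\tilde O(k)$ applications of $U_\Aa$ and $U_\Aa^{-1}$, for a total runtime of $\tilde O(kT)$). The output is an estimate $\tilde m/k$ of one of the phases $\pm 2\theta/(2\pi)$; converting this into an estimate $\tilde\theta\in[0,\pi/2]$ of $\theta$, the standard phase-estimation analysis (a Dirichlet-kernel computation) shows that with probability at least $8/\pi^2$ we have $|\tilde\theta-\theta|\le \pi/k$. This $8/\pi^2$ success probability is what ultimately propagates to the statement, so nailing down this constant from the phase-estimation bound is one of the more delicate parts.

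Finally I would output $\tilde p=\sin^2\tilde\theta$ and bound the error. Using $\sin^2\tilde\theta-\sin^2\theta=\tfrac{1}{2}(\cos 2\theta-\cos 2\tilde\theta)$ and a first-order Taylor expansion of $\cos$ around $2\theta$, together with $|\sin 2\theta|=2\sqrt{p(1-p)}$ and the bound $|\tilde\theta-\theta|\le\pi/k$, I get
\[
|\tilde p-p|\le |\sin 2\theta|\cdot|\tilde\theta-\theta|+|\tilde\theta-\theta|^2\le 2\pi\frac{\sqrt{p(1-p)}}{k}+\frac{\pi^2}{k^2},
\]
which is the claimed bound. The main obstacle is the careful error analysis: extracting exactly the constants $8/\pi^2$ and $2\pi$ requires keeping track of the Dirichlet-kernel probability estimate in phase estimation and of the second-order Taylor remainder in the $\sin^2$ conversion; the quantum construction itself is essentially mechanical once the invariant subspace and the eigenvalues $e^{\pm 2i\theta}$ of $G$ are identified.
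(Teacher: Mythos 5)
This statement is imported verbatim as Theorem 12 of Brassard, H\o{}yer, Mosca and Tapp, and the paper gives no proof of its own; your sketch correctly reconstructs the original argument (reversible simulation of $\Aa$, the Grover iterate with eigenvalues $e^{\pm 2i\theta}$ where $\sin^2\theta=p$, phase estimation yielding $|\tilde\theta-\theta|\le\pi/k$ with probability $8/\pi^2$, and the $\sin^2$ error conversion giving exactly the stated bound). Nothing is missing, and the approach is the canonical one.
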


The gap test is described in the following proposition.
\begin{proposition}\label{prop:qtest}
Let $\Aa$ be a classical algorithm that runs in time $T$, outputs~$1$ with probability $p$ and outputs $0$ with probability $1-p$, for some (unknown) value $p$.
For any $q\in(0,1]$ and any constant $\eta\in(0,1]$, there exists a quantum algorithm denoted $\Est{\Aa,q,\eta}$ that runs in time $\tilde O(T/\sqrt{q})$ and with probability at least $1-\poly(n)$ outputs \myLARGE{} if $p\ge q$ and \mySMALL{} if $p\le(1-\eta)q$.
\end{proposition}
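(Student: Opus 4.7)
\medskip

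\noindent\textbf{Proof Proposal.}
The plan is to apply the quantum amplitude estimation algorithm of Theorem~\ref{th:ae} to the classical algorithm $\Aa$ with a carefully chosen number of iterations $k=\lceil C/\sqrt{q}\,\rceil$, where $C$ is a sufficiently large constant depending only on $\eta$. This will produce an estimate $\tilde p$ of $p$ in time $\tilde O(T/\sqrt{q})$. The quantum gap test $\Est{\Aa,q,\eta}$ will then output $\myLARGE$ if $\tilde p\ge (1-\eta/2)q$, and $\mySMALL$ otherwise. To boost the success probability from the $8/\pi^2$ guaranteed by Theorem~\ref{th:ae} up to $1-1/\poly(n)$, I will repeat this procedure $\Theta(\log n)$ times and take the majority vote, paying only a logarithmic overhead that is absorbed by the $\tilde O(\cdot)$.

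The core of the argument is the correctness analysis, which amounts to verifying that with $C$ chosen large enough in terms of $\eta$, the additive error bound $2\pi\sqrt{p(1-p)}/k+\pi^2/k^2$ from Theorem~\ref{th:ae} is small enough to separate the two cases. I would split the analysis into three subcases according to the value of $p$. First, if $p\le (1-\eta)q$, then $\sqrt{p(1-p)}\le\sqrt{q}$, so with $k=C/\sqrt{q}$ the estimation error is at most $2\pi q/C+\pi^2q/C^2$, which is $\le\eta q/2$ once $C$ is large enough; hence $\tilde p\le p+\eta q/2\le (1-\eta/2)q$ and we correctly output $\mySMALL$. Second, if $q\le p\le 2q$, the same bound $\sqrt{p(1-p)}\le\sqrt{2q}$ gives error $\le\eta q/2$, so $\tilde p\ge p-\eta q/2\ge (1-\eta/2)q$ and we correctly output $\myLARGE$. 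Third, if $p\ge 2q$, then $\sqrt{pq}\le p/\sqrt{2}$ and the error is at most $\pi\sqrt{2}p/C+\pi^2 p/C^2\le p/2$ for $C$ sufficiently large, hence $\tilde p\ge p/2\ge q>(1-\eta/2)q$.

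The step that requires the most care is the third subcase, since for large $p$ one cannot hope to control the absolute error by a quantity proportional to $q$; the trick is to notice that we only need $\tilde p\ge (1-\eta/2)q$, and that the ``room'' between $p$ and $(1-\eta/2)q$ is at least $p/2$ in this regime, which matches the natural scaling of the estimation error. All three subcases together establish the required one-sided guarantee whenever the amplitude estimation of Theorem~\ref{th:ae} succeeds.

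Finally, the complexity is immediate: each invocation of amplitude estimation runs in time $\tilde O(kT)=\tilde O(T/\sqrt{q})$, and the $O(\log n)$ repetitions used for majority amplification keep the total time at $\tilde O(T/\sqrt{q})$, as claimed in the statement.
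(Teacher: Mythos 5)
Your proposal is correct and follows essentially the same route as the paper: amplitude estimation (Theorem~\ref{th:ae}) with $k=\Theta\bigl(1/(\eta\sqrt{q})\bigr)$, a decision threshold at $(1-\eta/2)q$, and majority voting over $\Theta(\log n)$ repetitions to boost the $8/\pi^2$ success probability. The only cosmetic difference is in verifying the case $p\ge q$: the paper treats it in a single step by factoring the lower bound as $\sqrt{p}\left(\sqrt{p}-2\pi\eta\sqrt{q}/20\right)-\pi^2\eta^2q/400$ and minimizing at $p=q$, whereas you split into the subcases $q\le p\le 2q$ and $p\ge 2q$; both verifications are valid.
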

\begin{proof}
Figure \ref{fig:qtest} describes our main quantum gap test. The complexity of this test is $\tilde O(T/\sqrt{q})$, from Theorem \ref{th:ae}. We show below that its success probability is at least $8/\pi^2$. The success probability can then be increased to $1-1/\poly(n)$ by repeating the test $\Theta(\log n)$ times and using majority voting.
\begin{figure}[h!]
\begin{center}
\fbox{
\begin{minipage}{14.5 cm} 
\begin{itemize}
\item[1.]
Apply the algorithm of Theorem \ref{th:ae} with 
$
k=\frac{20}{\eta\sqrt{q}}.
$
Let $\tilde p$ denote the output.
\item[2.]
Output \myLARGE{} if $\tilde p\ge (1-\eta/2)q$ and output \mySMALL{} if $\tilde p<(1-\eta/2)q$.
\end{itemize}
\end{minipage}
}
\end{center}
\vspace{-4mm}
\caption{Quantum gap test.}\label{fig:qtest}
\end{figure}

Let us analyze the success probability of the quantum gap test of Figure \ref{fig:qtest}. Note that Theorem~\ref{th:ae} guarantees that with probability at least $8/\pi^2$ the value $\tilde p$ satisfies the following two inequalities:
\begin{align}\label{ineq1}
\tilde p&\le p+2\pi\eta\frac{\sqrt{p(1-p)}\sqrt{q}}{20} + \pi^2\eta^2 \frac{q}{400},
\end{align}
\begin{align}\label{ineq2}
\tilde p&\ge p-2\pi\eta\frac{\sqrt{p(1-p)}\sqrt{q}}{20} - \pi^2\eta^2 \frac{q}{400}.
\end{align}

Let us first consider the case $p\le (1-\eta)q$. In that case Inequality (\ref{ineq1}) implies
\[
\tilde p \le (1-\eta)q+2\pi\eta \frac{q}{20}+\pi^2\eta^2\frac{q}{400}< (1-\eta/2)q
\]
since $2\pi/20+\pi^2/400<1/2$.
The probability that the algorithm outputs \myLARGE{} is thus at most $1-8/\pi^2$.

Now consider the case $p\ge q$. In that case Inequality (\ref{ineq2}) implies
\begin{align*}
\tilde p&\ge p-2\pi\eta\frac{\sqrt{p}\sqrt{q}}{20} - \pi^2\eta^2 \frac{q}{400}\\
&= \sqrt{p}\left(\sqrt{p}-2\pi\eta\frac{\sqrt{q}}{20}\right)- \pi^2\eta^2 \frac{q}{400}\\
&\ge q -2\pi \eta\frac{q}{20}-\pi^2\eta^2\frac{q}{400}> (1-\eta/2)q.
\end{align*}
The probability that the algorithm outputs \mySMALL{} is thus at most $1-8/\pi^2$.
\end{proof}



We are now ready to give the proof of Theorem \ref{theorem:ulam}.

\begin{proof}[of Theorem \ref{theorem:ulam}]
If $\edit(A,B)\ge \frac{1-\epsilon}{c}\sqrt{n}$, where $c$ is the constant from Proposition \ref{prop:Ulam}, then $\edit(A,B)$ can already be computed in $\tilde O(\sqrt{n})$ time by the classical algorithm from \cite{Naumovitz+SODA17}. We thus assume below that $\edit(A,B)\le \frac{1-\epsilon}{c}\sqrt{n}$. We also assume that $\edit(A,B)>0$, since otherwise the two strings are identical, which can be checked in $\tilde O(\sqrt{n})$ time using Grover search.

Our quantum algorithm is described in Figure \ref{fig:qalg}. The algorithm invokes $\Estimator{A,B,\delta,\frac{c}{1-\epsilon}t}$ with several values of $t$ such that $t\le \frac{1-\epsilon}{c}\sqrt{n}$. The analysis of the correctness done below will show that with probability at least $1-1/\poly(n)$, all calls to \textsf{UlamIndic} performed by the algorithm satisfy the condition $t'\ge c\cdot\edit(A,B)$ required in the statement of Proposition \ref{prop:qtest} (in our case $t'=\frac{c}{1-\epsilon}t$). Now assume that the condition is satisfied and write $p_t$ the probability that $\Estimator{A,B,\delta,\frac{c}{1-\epsilon}t}$ outputs $1$. Observe that Proposition~\ref{prop:Ulam} guarantees that
\begin{equation}\label{rel}
p_t\in \left[(1-\delta)\left(1-\frac{1}{\sqrt{n}}\right)\cdot \frac{\edit(A,B)}{n},(1+\delta)\cdot \frac{\edit(A,B)}{n}+\frac{\delta}{n}\right]
\end{equation}
since we have $\frac{c}{1-\epsilon}t\le \sqrt{n}$.

\begin{figure}[h!]
\begin{center}
\fbox{
\begin{minipage}{13 cm} 
\begin{itemize}
\item[1.]
Set $\eta=\epsilon/3$ and $\delta=\epsilon/3$ and $r=\ceil{\frac{\log((1-\delta)(\sqrt{n}-1)/n)}{\log (1-\eta)}}$. 
\item[2.] 
For i from $1$ to $r$ do:
\begin{itemize}
\item[2.1.]
Set $t=(1-\eta)^i  \cdot\frac{1-\epsilon}{c}\sqrt{n}$ and $q=t/n$.
\item[2.2.]
Apply Algorithm $\Est{\Aa,q,\eta}$ with $\Aa=\Estimator{A,B,\delta,\frac{c}{1-\epsilon}t}$. 
\item[2.3.]
If the output is \myLARGE{}, then stop and return $t$.
\end{itemize}
\item[3.]
Return \Error{}.
\end{itemize}
\end{minipage}
}
\end{center}
\vspace{-2mm}
\caption{Quantum algorithm computing a $(1+\epsilon)$-approximation of the Ulam distance of two strings $A$ and $B$.}\label{fig:qalg}
\end{figure}

Then observe that for $r=\ceil{\frac{\log((1-\delta)(\sqrt{n}-1)/n)}{\log (1-\eta)}}$, the inequality 
\[
(1-\delta)\left(1-\frac{1}{\sqrt{n}}\right)\cdot \frac{\edit(A,B)}{n}\ge (1-\eta)^r/\sqrt{n}
\]
holds, since we are assuming that $\edit(A,B)\ge 1$.
From Proposition \ref{prop:qtest} combined with (\ref{rel}), this means that with probability at least $1-1/\poly(n)$, the algorithm will never reach Step $3$ (and thus does not output \Error{}). In the remaining of the proof, we assume that the algorithm stops before reaching Step 3. Let~$i^\ast$ denote the value of $i$ during the last iteration of the loop of Step 2, and write $t^\ast=(1-\eta)^{i^\ast}  \cdot\frac{1-\epsilon}{c}\sqrt{n}$ and $q^\ast=t^\ast/n$. Observe that the output of the algorithm is $t^\ast$. 

Assume first that $i^\ast=1$. The output is thus $t^\ast=(1-\eta)\frac{1-\epsilon}{c}\sqrt{n}$. Observe that Algorithm $\textsf{QTest}$ then necessarily outputted $\myLARGE$ for $i=1$. Proposition \ref{prop:qtest} guarantees that with probability at least $1-1/\poly(n)$ the inequality $p_{t^\ast}\ge (1-\eta)q^\ast$ holds (otherwise it would have outputted $\mySMALL$). 
This inequality combined with (\ref{rel}) implies that
\[
(1+\delta)\cdot \frac{\edit(A,B)}{n}+\frac{\delta}{n}\ge (1-\eta)q^\ast
\]
and thus the output $t^\ast=nq^\ast$ of the algorithm satisfies the following inequalities:
\begin{align}\label{ineq:UB}
\begin{aligned}
t^\ast&\le ((1+\delta)\cdot \edit(A,B)+\delta)/(1-\eta)\\
&\le (1+2\delta)\cdot\edit(A,B)/ (1-\eta)\\
&\le (1+2\delta)(1+2\eta) \cdot \edit(A,B)\\
&\le (1+\epsilon) \cdot \edit(A,B),
\end{aligned}
\end{align}
where we used the inequality $1/(1-x)\le 1+2x$, which holds for any $x\in[0,1/2]$. Also note that the inequality 
\[
t^\ast = (1-\eta)\cdot\frac{1-\epsilon}{c}\sqrt{n} \:\:\ge\:\: (1-\epsilon)\cdot\edit(A,B)
\]
is trivially satisfied since we are assuming $\edit(A,B)\le  \frac{1-\epsilon}{c}\sqrt{n}$ and $\eta<\epsilon$.
The output is thus a $(1+\epsilon)$-approximation of $\edit(A,B)$.

Assume now that $i^\ast\ge 2$. 
Observe that Algorithm $\textsf{QTest}$ then necessarily outputted $\myLARGE$ for $i=i^\ast$ and $\mySMALL$ for $i= i^\ast-1$. 
Since it outputted $\myLARGE$ for $i=i^\ast$, Inequalities (\ref{ineq:UB}) hold with probability at least $1-1/\poly(n)$, from exactly the same argument as above. Since the output was $\mySMALL$ for $i=i^\ast-1$, Proposition \ref{prop:qtest} guarantees that with probability at least $1-1/\poly(n)$ the inequality
$
p_{t^\ast/(1-\eta)}\le q^\ast/(1-\eta)
$
holds (otherwise it would have outputted $\myLARGE$).
Since $t^\ast=nq^\ast$, this inequality combined with (\ref{rel}) implies:
\begin{align*}
t^\ast&\ge n(1-\eta)p_{t^\ast/(1-\eta)}\\
&\ge (1-\eta)(1-\delta)\left(1-\frac{1}{\sqrt{n}}\right)\cdot \edit(A,B)\\
&= (1-\epsilon/3)(1-\epsilon/3)\left(1-\frac{1}{\sqrt{n}}\right)\cdot \edit(A,B)\\
&\ge (1-\epsilon)\cdot \edit(A,B).
\end{align*}
The output of the algorithm is thus a $(1+\epsilon)$-approximation of $\edit(A,B)$ as well. 

Since the complexity of applying Algorithm \textsf{UlamIndic} at Step $i$ is $\tilde O(\sqrt{(1-\eta)^i\sqrt{n}})$, the
overall complexity of the algorithm is 
\[
\tilde O\left(\sum_{i=1}^{r} \frac{1}{\sqrt{(1-\eta)^i/\sqrt{n}}}\sqrt{(1-\eta)^i\sqrt{n}}\right)
=
\tilde O\left(\sqrt{n}\right),
\]
as claimed.
\end{proof}
\providecommand{\proc}{\mathrm{XLI_2}}

\section{Lower bounds}\label{sec:lb}
In this section we prove lower bounds for the problems considered in this paper.

We start by an easy lower bound for the longest common substring over a large alphabet.
\begin{theorem}\label{theorem:lcs-approx-LB-easy}
For any constant $c\in(0,1]$, any quantum algorithm that computes with high probability a $c$-approximation
of the longest common substring of two strings of length $n$ over an alphabet of size $2n$ requires $\Omega(n^{2/3})$ time. This lower bound also holds for non-repetitive strings.
\end{theorem}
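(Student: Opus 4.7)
The plan is to reduce from element distinctness, which has a quantum query lower bound $\Omega(N^{2/3})$ by Aaronson--Shi (with the extension by Ambainis covering alphabet size $N$). The key observation is that since the length of a longest common substring is a nonnegative integer, a $c$-approximation algorithm for any constant $c\in(0,1]$ must distinguish inputs with longest common substring length $0$ from inputs with longest common substring length~$1$: in the first case it must output~$0$, while in the second case it must output a value of size at least $c\cdot 1=c>0$, hence at least~$1$.

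Concretely, I would work with the following promise version of element distinctness on $N=2n$ inputs over alphabet $[2n]$: given $h\colon[2n]\to[2n]$, distinguish the case where $h$ is a bijection from the case where $h$ has exactly one unordered colliding pair. Given such an instance, sample a uniformly random permutation $\pi$ of $[2n]$ and define strings $A[i]=h(\pi(i))$ and $B[i]=h(\pi(n+i))$ of length $n$ over alphabet $[2n]$. Then (i) if $h$ is a bijection, $A$ and $B$ are non-repetitive with disjoint character sets, so their longest common substring has length~$0$; (ii) if $h$ has a collision and the two colliding indices split across the two halves (which happens with probability $n/(2n-1)\ge 1/2$), then $A$ and $B$ are non-repetitive and share exactly one character, so the longest common substring has length~$1$; (iii) if $h$ has a collision and both colliding indices land in the same half, the two strings still have disjoint character sets (so length~$0$) but one of them is repetitive.

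I would then invoke a hypothetical $c$-approximate \textsf{LCS} algorithm $\mathcal{Q}$ of running time $T(n)$ a constant number $R$ of times on reductions built from independent random permutations. By the integrality argument above, on non-repetitive inputs $\mathcal{Q}$ outputs a positive value with probability at least $9/10$ in case~(ii) and outputs~$0$ with probability at least $9/10$ in case~(i). Declaring ``collision'' whenever at least one invocation reports a positive value yields, by a union bound in case~(i) and a geometric-failure bound in case~(ii), two-sided error well below $1/3$ for element distinctness, even in the worst case where $\mathcal{Q}$'s behaviour on the repetitive inputs of case~(iii) is adversarial. Simulating each query of $\mathcal{Q}$ by a single query to $h$, the resulting algorithm runs in time $O(R\cdot T(n))=O(T(n))$, so the Aaronson--Shi lower bound forces $T(n)=\Omega(n^{2/3})$. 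The reduction produces non-repetitive inputs in cases~(i) and~(ii), so exactly the same argument proves the second claim of the theorem.

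The main subtlety is case~(iii), where the strings are repetitive and $\mathcal{Q}$ may behave arbitrarily --- this is especially relevant for the non-repetitive claim, where $\mathcal{Q}$ has no guaranteed behaviour off the promised input class. The asymmetric decision rule ``report collision iff some invocation outputs a positive value'' is robust against this: false positives can come only from case~(i), which contributes probability at most $R/10$ by a union bound, while the missed-collision probability in case~(ii) is bounded by $(11/20)^R$ independently of $\mathcal{Q}$'s behaviour in case~(iii). Choosing, say, $R=3$ balances both error types well below $1/3$.
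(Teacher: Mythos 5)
Your proposal is correct and follows essentially the same route as the paper: a randomized reduction from the promise version of element distinctness on a list of $2n$ characters, splitting the list at random into two halves so that the unique colliding pair lands across the halves with probability at least $1/2$, and using the integrality of the \textsf{LCS} length to turn any constant-factor approximation into a distinguisher. Your treatment is in fact somewhat more careful than the paper's one-line ``this gives a randomized reduction'': you explicitly handle the case where both colliding indices fall in the same half (making one string repetitive, where the algorithm has no guarantees) via the asymmetric one-sided decision rule and constant-round amplification, which is exactly the detail the paper leaves implicit.
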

\begin{proof}
Consider the following version of the element distinctness problem: given a list $L$ of $m$ characters in an alphabet of size $m$ such that either all the characters of $L$ are distinct, or only one character occurs twice in $L$ (i.e., the other $m-2$ characters are distinct), decide which of the two cases holds. A $\Omega(m^{2/3})$-query lower bound is known for this problem in the quantum setting \cite{aaronson04,ambainis05,kutin05}.

Let us take $m=2n$. Construct a string $A$ of length $n$ by taking $n$ elements from $L$ uniformly at random, and construct another string $B$ of length $n$ using the remaining $n$ elements from $L$. Observe that if all the characters in $L$ are distinct then $A$ and $B$ are non-repetitive and have no common substring. On the other hand, if the characters in $L$ are not all distinct, then with probability at least $1/2$ the two strings $A$ and $B$ are non-repetitive and have a common substring of length $1$. This gives a (randomized) reduction from element distinctness problem to our problem since a $c$-approximation of the longest common substring enables us to distinguish between the two cases.
\end{proof}

We now present a lower bound for \textsf{LCS} that holds even for binary strings.
\begin{theorem}\label{theorem:lcs-approx-LB}
For any constant $c\in(0,1]$, any quantum algorithm that computes with high probability a $c$-approximation of longest common substring of two binary strings of length $n$ requires $\tilde  \Omega(n^{2/3})$ time.
\end{theorem}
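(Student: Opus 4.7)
The plan is to reduce from element distinctness over an alphabet of size $m = \Theta(n/\log n)$, whose quantum query complexity is $\Omega(m^{2/3}) = \tilde\Omega(n^{2/3})$. Given a list of $m$ entries, I would split it into halves $A$ and $B$ of size $m/2$ and construct binary strings $A'$ and $B'$ of length $\Theta(n)$ so that a $c$-approximation of $\mathrm{LCS}(A',B')$ distinguishes the shared-character case from the disjoint case. Since each bit of $A'$ or $B'$ can be computed from a single query to the underlying element-distinctness oracle, this translates the lower bound with only polylogarithmic overhead and yields the claimed $\tilde\Omega(n^{2/3})$ bound.

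The binary encoding uses a gadget $G\colon [m]\to\{0,1\}^K$ with $K=\Theta(\log n)$, the hidden constant chosen large enough as a function of $c$. I would require the $G(c)$'s to be pairwise distinct, to begin and end with $0$, to contain no long run of $1$'s, and to have pairwise longest common substring of length only $O(\log K)$; such gadgets exist by the probabilistic method. Then set
\[
A' \;=\; \sigma\, G(A[1])\, \sigma\, G(A[2])\, \sigma \cdots \sigma\, G(A[m/2])\, \sigma,
\]
with $\sigma = 1^S$ a separator of length $S = \Theta(\log K)$ chosen so that $\sigma$ cannot occur inside any gadget; define $B'$ analogously. In the yes case the substring $\sigma G(c)\sigma$ yields $\mathrm{LCS}(A',B')\geq 2S+K = \Theta(K)$. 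In the no case, any common substring $s$ containing two full $\sigma$'s forces the gadget sandwiched between them to be a full $G(c)$ appearing in both strings, i.e.\ a shared character---impossible. Hence $s$ contains at most one full $\sigma$, and a straightforward alignment argument (matching suffix of one gadget to suffix of another, separator to separator, prefix to prefix) combined with the short-LCS property gives $|s| \leq S + O(\log K) = O(\log K)$. Taking $K$ large enough then makes the ratio of the two cases smaller than $c$, so the assumed $c$-approximation decides element distinctness.

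The hard part will be establishing the no-case LCS upper bound \emph{uniformly} over all adversarial disjoint pairs $(A,B)$: since the adversary chooses the input after $G$ is fixed and there are exponentially many such inputs, a naive union bound on the random construction of $G$ is too expensive. I would address this either by an explicit construction of $G$ with worst-case substring-distinctness guarantees (for instance based on an algebraic error-correcting code), or by a structural alignment argument that uses the separator layout to reduce the worst-case LCS analysis entirely to the pairwise gadget property, thereby bypassing the union bound over inputs.
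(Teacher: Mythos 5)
Your high-level route is the same as the paper's: encode each character of a size-$\Theta(n/\log n)$ alphabet by a $\Theta(\log n)$-bit binary gadget so that the large-alphabet element-distinctness lower bound (the paper's Theorem~\ref{theorem:lcs-approx-LB-easy}) transfers with only polylogarithmic loss. Where you differ is in how the no-case \textsf{LCS} is controlled. The paper simply concatenates random encodings and invokes a probabilistic fact about common substrings of random codewords; you insert explicit $1^S$ separators and argue structurally that any common substring containing a full separator forces a gadget-to-gadget alignment, hence a shared character. This is a genuinely nicer mechanism: it reduces the worst-case no-case bound to the \emph{pairwise} gadget property, so a union bound over the $O(m^2)$ gadget pairs suffices. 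In particular, the ``hard part'' you flag in your last paragraph is not actually hard --- your own alignment argument already bypasses any union bound over inputs, because once the separators pin down the alignment, every segment of a common substring is a common substring of two fixed distinct gadgets. You do not need an algebraic code.

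There is, however, one quantitative claim that is false and must be repaired. You cannot have $m=\Theta(n/\log n)$ gadgets of length $K=\Theta(\log n)$ with pairwise longest common substring $O(\log K)=O(\log\log n)$: the $m(K-L+1)$ windows of length $L$ appearing in the gadgets must be essentially all distinct for the pairwise \textsf{LCS} to be below $L$, which forces $2^{L}\gtrsim mK$, i.e.\ $L\ge \log_2 m - O(1) = \Omega(\log n)=\Omega(K)$. The correct achievable guarantee (and the one the paper uses, with $\alpha=c/3$) is pairwise \textsf{LCS} at most $\alpha K$ for an arbitrarily small constant $\alpha$, obtained by taking $K= C_\alpha\log n$ with $C_\alpha$ large. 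This still closes your argument: the no case gives $|s|\le S+2\alpha K$ while the yes case gives at least $2S+K$, so the ratio is below $c$ once $\alpha$ (and the relative separator length $S/K$, which must exceed the longest $1$-run in any gadget --- itself $\Theta(\log n)$ for unconditioned random gadgets, so you do need to impose the short-run constraint explicitly) are chosen small compared to $c$. With that substitution your proof goes through.
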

\begin{proof}
The main idea is simple: by dividing the input strings into blocks of size $\Theta(\log n)$, we can reduce the case with alphabet of size $2n$ to the case of binary alphabets, and thus use the lower bound of Theorem \ref{theorem:lcs-approx-LB-easy} with only a logarithmic overhead.

We now give more details of the reduction. Let $A$ and $B$ denote strings of length $n$ over an alphabet of size $2n$. Each character of the alphabet is encoded by a random binary string of length $s$ (i.e., a binary string of length $s$ such that each bit is 1 with probability $1/2$). Using easy arguments from probability theory (see, e.g., Section 2 of \cite{Arratia+85}), for any constant $\alpha>0$ we can guarantee that the following property holds with probability at least $9/10$ if we take $s\ge d_\alpha \cdot \log n$ for some constant~$d_\alpha$ that depends only on $\alpha$: the length of the longest common substring of the encodings of any two distinct characters is at most $\alpha s$. Let us choose $\alpha=c/3$ and consider $s=\ceil{d_\alpha \cdot \log n}$. Below we assume that the above property holds (with happens with probability at least 9/10).

If the longest common substring of $A$ and $B$ has length zero then the longest common substring of their binary versions has length at most $2\alpha s=2cs/3< cs$. If the longest common substring of $A$ and $B$ has length at least one, on the other hand, then the longest common substring of their binary versions has length at least $s$. Thus a $c$-approximation of the longest common substring of the binary versions enables us to distinguish, with high probability, between the two cases.
\end{proof}

We now give our lower bounds for \textsf{LPS} and for the computation of the Ulam distance. 
\begin{theorem}\label{theorem:lps-LB}
For any constant $c\in (0,1]$, any quantum algorithm that computes with high probability a $c$-approximation of the longest palindrome of a binary string of length $n$ requires $\Omega(\sqrt{n})$ time.
\end{theorem}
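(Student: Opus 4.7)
The plan is to reduce from the unstructured search ($\mathsf{OR}$) problem, whose quantum query complexity is $\Omega(\sqrt{m})$ on inputs of size $m$. Given $x \in \{0,1\}^m$, I would construct a binary string $A(x)$ such that a $c$-approximation of $\textsf{LPS}(A(x))$ determines whether $\mathsf{OR}(x) = 0$ or $\mathsf{OR}(x) = 1$, with each bit of $A(x)$ computable from $O(1)$ bits of $x$. The quantum query lower bound for $\mathsf{OR}$ then transfers directly to the $\textsf{LPS}$ approximation problem.

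The construction starts by fixing a ``background'' binary string $r$ of length $n$ with small $\textsf{LPS}$. A standard probabilistic argument shows that a uniformly random $r \in \{0,1\}^n$ satisfies $\textsf{LPS}(r) \le C \log n$ with high probability, for some absolute constant $C$. Setting $n = m L$ for a suitably chosen parameter $L = \Theta((\log n)/c)$, I would partition $r$ into $m$ consecutive blocks of length $L$. Given $x$, I define $A(x)$ to be the string obtained from $r$ by replacing, for each $i \in [m]$ with $x_i = 1$, the $i$-th block of $r$ by the palindrome $0^L$. Each bit of $A(x)$ then depends on a single bit of $x$ (and on the fixed string $r$), so one quantum query to $A(x)$ can be simulated with one quantum query to $x$.

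The analysis splits into two cases. If $x = 0^m$, then $A(x) = r$ and $\textsf{LPS}(A(x)) \le C \log n$. If some $x_i = 1$, then $A(x)$ contains $0^L$ as a substring, so $\textsf{LPS}(A(x)) \ge L$. Choosing the constant in $L$ large enough so that $c L > C \log n$ ensures that a $c$-approximation of $\textsf{LPS}$ cleanly separates the two cases: it returns a palindrome of length at most $C \log n$ when $x = 0^m$ and at least $c L > C \log n$ when $x \ne 0^m$. Hence any quantum algorithm computing a $c$-approximation of $\textsf{LPS}$ on binary strings of length $n$ solves $\mathsf{OR}$ on $m = \Theta(n/\log n)$ bits, giving the claimed $\Omega(\sqrt{m}) = \tilde\Omega(\sqrt{n})$ lower bound.

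The main obstacle is to verify the gap robustly. First, I need the existence of $r$ with $\textsf{LPS}(r) = O(\log n)$, which follows from a union bound over the $O(n^2)$ possible palindromic windows together with the fact that a fixed window of length $\ell$ is palindromic under a uniformly random $r$ with probability $2^{-\lfloor \ell/2 \rfloor}$. Second, I need to check that overwriting blocks of $r$ with $0^L$ does not create spurious palindromes that blur the gap: in the $x = 0^m$ case the string $r$ is left untouched, and in the $x \ne 0^m$ case interactions with neighbouring bits of $r$ or the concatenation of adjacent $0^L$ blocks can only extend the planted palindrome, which only strengthens the conclusion. Tightening the bound to a true $\Omega(\sqrt n)$ (without the logarithmic factor) appears to require either a background of constant $\textsf{LPS}$, which is impossible over the binary alphabet for strings of super-constant length, or a direct quantum adversary argument on a carefully chosen family of binary strings.
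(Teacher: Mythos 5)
Your reduction is internally sound, but it does not prove the theorem as stated: it yields only $\Omega(\sqrt{m})=\Omega(\sqrt{n/\log n})=\tilde\Omega(\sqrt n)$, because the block length $L=\Theta((\log n)/c)$ forces $m=\Theta(n/\log n)$. You acknowledge this at the end, but the dichotomy you offer there (a constant-\textsf{LPS} background, which is impossible over $\{0,1\}$, or a bespoke adversary argument) is not exhaustive, and the paper's proof shows a third way. The trick is to invert which branch of the reduction carries the planted structure: instead of hiding a long palindrome inside a background with \emph{small} \textsf{LPS}, make the ``yes'' instance the all-zero string (whose \textsf{LPS} is the full length $n$) and engineer the ``no'' instance so that its \textsf{LPS} drops below $cn$. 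Concretely, the paper reduces from distinguishing the all-zero $m$-bit string from an $m$-bit string of Hamming weight one (an $\Omega(\sqrt m)$-hard promise version of \textsf{OR}), sets $k=\lceil 3/c\rceil$, encodes $x$ as the concatenation of $x^1,\dots,x^k$ where $x^r$ replaces each $0$ by $0^k$ and the unique $1$ (if any) by $1^r0^{k-r}$, and takes $n=k^2m$. If $x=0^m$ the whole string is $0^n$, a palindrome of length $n$; if $x$ has weight one, the distinct run-lengths of $1$s across the $k$ blocks prevent any palindrome from containing two full blocks, capping the \textsf{LPS} at $3km<cn$. Since $k$ is a constant, $m=\Theta(n)$ and the bound is a clean $\Omega(\sqrt n)$ with no logarithmic loss.

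Two further remarks on your write-up. First, the gap you create is between $O(\log n)$ and $\Theta((\log n)/c)$, i.e.\ at the very bottom of the range of possible \textsf{LPS} values; any construction of this shape is doomed to lose the $\log n$ factor, since a binary string of length $\ell$ with no palindromic substring of length $\omega(1)$ does not exist once $\ell$ is super-constant, exactly as you observe. The paper's gap of $n$ versus $cn$ sits at the top of the range and costs only a constant factor in $m$. Second, your probabilistic existence argument for $r$ and the query-simulation step (each bit of $A(x)$ depends on one bit of $x$ and the hard-wired $r$) are both fine as nonuniform reductions, so the only defect is the quantitative one; if the theorem were stated as $\tilde\Omega(\sqrt n)$ (as it is in the paper's summary table), your proof would suffice.
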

\begin{proof}
Let $m\ge 3$ be an integer. Let $S_1\subseteq\{0,1\}^m$ denote the set of all $m$-bit strings of Hamming weight one in which the first and last characters are both zero, and let $S_0\subseteq \{0,1\}^m$ be the set containing only the all-zero string. Distinguishing between strings in $S_0$ and $S_1$ requires 
 $\Omega(\sqrt{m})$ queries in the quantum setting (see, e.g., \cite{bennett97}).

Let us write $k=\ceil{3/c}$.
Given a string $x\in S_0\cup S_1$ and any $r\in\{1,\ldots,k\}$, let $x^r$ be the binary string of length $km$ obtained from $x$ by  replacing each $0$ in $x$ by $0^k$ and each $1$ in $x$ by $1^r0^{k-r}$. 

Take $n=k^2m$. We now consider the string $A\in\{0,1\}^{n}$ obtained by concatenating the strings  $x^1$, $x^2$, $\ldots$, $x^{k}$. Each $x^i$, for $i\in\{1,\ldots,k\}$, is called a block of $A$. Observe that if $x\in S_0$ (i.e., if $x$ is the all-zero string), then $A$ is also the all-zero strings. In this case the length of the \textsf{LPS} of $A$ is $n$. On the other hand, if $x\in S_1$ then no palindrome of $A$ can include two full blocks (since the numbers of repeated 1s in distinct blocks do not match), and thus the length of the \textsf{LPS} of $A$ is at most $km+2(km-1)\le 3km< cn$. Thus computing a $c$-approximation of the \textsf{LPS} of $A$ enables us to distinguishing between strings in $S_0$ and~$S_1$. This gives the lower bound $\Omega(\sqrt{m})=\Omega(\sqrt{n})$ on the complexity of computing a $c$-approximation of the \textsf{LPS}  .
\end{proof}
\begin{theorem}\label{theorem:ulam-approx-LB}
For any $\epsilon\in[0,1)$, any quantum algorithm that computes with high probability a $(1+\epsilon)$-approximation of the Ulam distance of two non-repetitive strings requires $ \Omega(\sqrt{n})$ time.
\end{theorem}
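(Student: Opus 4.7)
The plan is to reduce from unstructured quantum search, for which the well-known Grover lower bound gives $\Omega(\sqrt{m})$ queries to decide whether a function $g\colon [m]\to\{0,1\}$ is identically zero or takes value $1$ somewhere. I will construct, from any such $g$ with $m = \lfloor n/2\rfloor$, two non-repetitive strings $A,B\in\Sigma^n$ of length $n$ (in fact permutations of $\{1,\ldots,n\}$) whose Ulam distance encodes the answer to the search problem, while $O_A$ and $O_B$ can be implemented using $O(1)$ quantum queries to $g$.

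Concretely, I would let $A$ be the identity permutation $1,2,3,\ldots,n$, and define $B$ block by block: for each $i\in\{1,\ldots,m\}$, set $B[2i-1]=2i-1$ and $B[2i]=2i$ when $g(i)=0$, and swap these two values when $g(i)=1$. Then $B$ is a permutation of $\{1,\ldots,n\}$ (hence non-repetitive), and for any index $j\in[n]$ the value $B[j]$ is determined by the single value $g(\lceil j/2\rceil)$, so the oracle $O_B$ can be implemented with one call to the standard oracle for $g$ (together with easily-reversible classical arithmetic on indices). The oracle $O_A$ needs no queries to $g$ at all. Thus any $T$-time quantum algorithm for $(1+\epsilon)$-approximating $\edit(A,B)$ yields a $T$-query quantum algorithm for the search problem on $m=\Theta(n)$ elements.

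Next I would verify that the Ulam distance cleanly separates the two cases. If $g\equiv 0$, then $A=B$ and $\edit(A,B)=0$. If instead $g(i)=1$ for some $i$, then $A$ and $B$ differ in at least two positions (namely $2i-1$ and $2i$), and since $|A|=|B|=n$ any sequence of insertions and deletions transforming $A$ into $B$ must contain the same number of insertions as deletions and is nontrivial, so $\edit(A,B)\ge 2$. For any $\epsilon\in[0,1)$, a $(1+\epsilon)$-approximation outputs $r$ with $(1-\epsilon)\edit(A,B)\le r\le (1+\epsilon)\edit(A,B)$; in the first case this forces $r=0$, while in the second case $r\ge 2(1-\epsilon)>0$, so the two cases are distinguished with high probability.

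Combining these ingredients, any such quantum approximation algorithm solves Grover search on $m=\lfloor n/2\rfloor$ elements in the same time, and therefore requires $\Omega(\sqrt{m})=\Omega(\sqrt{n})$ time. The only mildly delicate point is the lower bound $\edit(A,B)\ge 1$ in the ``marked'' case, which is immediate for both standard definitions of Ulam distance mentioned in the preliminaries (with or without substitutions allowed), so this is not a real obstacle; the main conceptual step is simply the embedding of an unstructured search instance into a pair of permutations so that non-repetitiveness is preserved and each oracle query is cheap.
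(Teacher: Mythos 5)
Your proposal is correct and follows essentially the same route as the paper: both reduce from unstructured (Grover) search by taking $A$ to be the identity permutation and encoding the search instance into $B$ via adjacent transpositions, so that $\edit(A,B)=0$ in the all-zero case and $\edit(A,B)\ge 2$ otherwise, which any $(1+\epsilon)$-approximation with $\epsilon<1$ must distinguish. Your write-up is slightly more explicit about implementing $O_B$ with $O(1)$ queries to the search oracle, but this is a detail the paper's argument implicitly relies on rather than a different approach.
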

\begin{proof}
Let us consider the alphabet $\{1,2,\ldots,n\}$. Let $A$ be the all-increasing string, i.e., $A=123\cdots n$. Let $B$ either the all-increasing string or the string obtained by permuting the $\ell$-th position and the $(\ell+1)$-position of the all-increasing string for some unknown $\ell\in\{1,\ldots,n-1\}$. Note that in former case the Ulam distance of $A$ and $B$ is zero, while in the second case the Ulam distance is two. Computing a $(1+\epsilon)$-factor approximation of the Ulam distance of $A$ and $B$ thus requires distinguishing between the two cases, which requires $\Omega(\sqrt{n})$ queries from the lower bound on Grover search \cite{bennett97}.
\end{proof}
%

\section*{Acknowledgments}
The authors are grateful to Michael Saks and C. Seshadhri for helpful correspondence. FLG was supported by the JSPS KAKENHI grants Nos.~JP16H01705, JP19H04066, JP20H00579 and JP20H04139, and by the MEXT Quantum Leap Flagship Program (MEXT Q-LEAP) grant No. JPMXS0120319794.
\bibliographystyle{alpha}	
\bibliography{draft}

\end{document}